\documentclass[12pt,draftcls,onecolumn]{IEEEtran}
\usepackage{generic}
\usepackage{cite}
\usepackage{amsmath,amssymb,amsfonts}

\usepackage{graphicx}
\usepackage{algorithm}
\usepackage{hyperref}
\hypersetup{hidelinks=true}
\usepackage{textcomp}
\usepackage{comment}
\usepackage{tabularx}
\usepackage{algpseudocode}
\usepackage{tikz}
\usepackage{amsthm}
\usepackage{amssymb}
\usepackage{graphicx}
\usepackage{esint}
\usepackage{amssymb,epsfig,graphics, graphicx, url,times,mathrsfs,algorithm,array,latexsym,fancyhdr,xspace,wrapfig, xcolor,multirow,amsthm,caption,cite,helvet,sidecap,bm,hyperref,url,subcaption}
\usepackage{bbm}

\usepackage{mathrsfs}
\usepackage{algorithm,algpseudocode}
\usepackage{multirow}
\usepackage{diagbox}
\usepackage{booktabs}
\usepackage{multicol}
\usepackage[nocomma]{optidef}
\theoremstyle{remark}
\newtheorem{remark}{Remark}
\theoremstyle{plain}
\newtheorem{thm}{Theorem}
\theoremstyle{plain}

\theoremstyle{definition}
\newtheorem{defn}{Definition}
\newtheorem{lemma}{Lemma}
\theoremstyle{plain}

\newtheorem{prop}{Proposition}
\newtheorem{corollary}{Corollary}

\def\BibTeX{{\rm B\kern-.05em{\sc i\kern-.025em b}\kern-.08em
    T\kern-.1667em\lower.7ex\hbox{E}\kern-.125emX}}

\begin{document}
\title{Lower Bound of Networked Control with Multiple Sensors and One Controller And The Application to Tracking Gaussian-Markov Source}
\author{Sijie Li, Takashi Tanaka, and Hyeji Kim
\thanks{Funding ackowledgement}
\thanks{Sijie Li and Hyeji Kim are with the University of Texas at Austin, Austin, TX 78712 USA (email: sijieli@utexas.edu; hyeji@utexas.edu). }
\thanks{Takashi Tanaka is with the Purdue University, West Lafayette, IN 47907 USA (e-mail: tanaka16@purdue.edu).}
}

\maketitle

\begin{abstract}
This paper investigates the causal rate-distortion function for networked control systems with multiple encoders and a single decoder, a longstanding open problem in information and control theory. While previous work has explored the causal rate-distortion function for single-encoder and feedback-enabled networked settings, the case of networks without feedback remains unaddressed. 

We establish a novel directed information lower bound, the first derived for the networked control setting. We further demonstrate the optimality of linear, independent encoders and linear decoders for optimizing this lower bound for Linear Quadratic Gaussian (LQG) plant and quadratic cost, with the condition that the full plant state is observed when sensors are sitting together. By reducing the original infinite-dimensional optimization problem to a finite-dimensional one, our approach simplifies the analysis. Additionally, our directed information lower bound provides an alternate proof for the sufficiency of linear encoders in the single encoder and single decoder setting with side information, extending prior results in the literature. We present Semidefinite Programming formulations for the causal rate distortion function of Gaussian-Markov sources with linear side information and the singular noise matrix. 
\end{abstract}

\begin{IEEEkeywords}
Rate-Distortion tradeoff, Gaussian-Markov source, Linear quadratic Gaussian.
\end{IEEEkeywords}
\section{Introduction}












Control with communication constraints has been widely studied in the past decades. The tradeoff between data rate and control performance is an interesting problem, as it lies at the intersection of information theory and control theory. The solution typically leads to the design of the encoder and the controller, as the goal is to transmit sufficient information so that the controller can output control signals that satisfy the distortion constraints. From this perspective, the problem can be viewed as a causal compression for the function computation problem, which leads to solving the corresponding {\em causal rate-distortion function}. For the single encoder and single decoder case, the problem has some nice solutions \cite{lqgwithchannel,LQGSDP,lqgwithsensor,LQGsideinfo,lqgsideinfo2}.
However, this type of problem is generally challenging in a setting with multiple encoders and a single decoder as Figure \ref{fig:LQG model}. Even for the i.i.d. binary sources, the rate region of computing the modulo sum \cite{modulo-sum} is generally unknown. The optimal rate region is only known for some of the source distributions \cite{weigtedouter_cn}.

\begin{figure}[!htb]
    \centering
    \includegraphics[scale = 0.7]{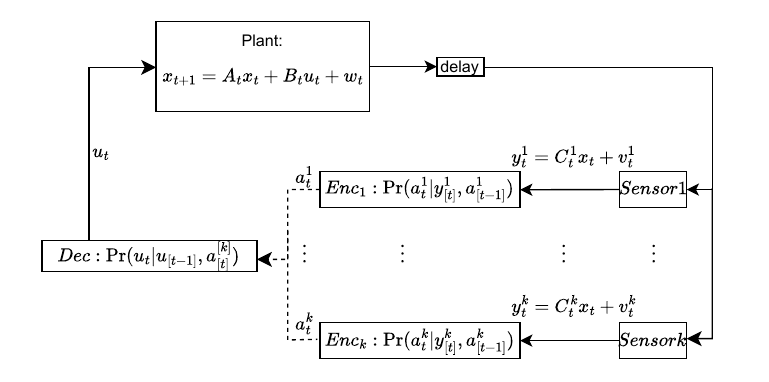}
    \caption{Linear Quadratic Gaussian model with k-linear sensors.}
    \label{fig:LQG model}
\end{figure}

This paper considers a networked control system, as depicted in Figure \ref{fig:LQG model}. Specifically, the system has an LQG plant, $k$ sensors/encoders, and one decoder. The $k$ encoders receive signals of the plant's output and transmit a message independently through a lossless channel with the prefix code. The decoder decodes these messages and outputs a control signal to the plant, which evolves with the plant's output and the control signal. Two critical metrics are rates and the control cost. Rates measure the expected average bit length of the encoders' codewords, and the control cost measures the price of control signals. The detailed definitions can be found in Section \ref{section:problem-formulation}.

 Characterizing the causal rate-distortion function is a fundamental and intriguing problem. The lower bound and the achievable schemes have been widely studied for cases with one encoder and one decoder \cite{lqgwithchannel,LQGSDP,lqgwithsensor,LQGsideinfo,lqgsideinfo2,SDP-GM,causal-source-coding-scalar,1.254-gap,vector-achievable-scheme}.
 Tatikonda et al. \cite{lqgwithchannel} consider the scalar LQG plant with the LQR cost function and compute the causal rate distortion function. Kostina and Hassibi \cite{lqgwithsensor} extend the result to the vector Gaussian system with partial observation and non-Gaussian noise. \cite{lqgwithsensor} applies the entropy power inequality to analyze directed information and prove an analytical lower bound for the problem. \cite{lqgwithsensor} also shows that the causal rate distortion function for the LQG plant is essentially the same for the uncontrolled Gaussian-Markov process.
The closed-form solution for the vector case is derived through the Semidefinite Programming(SDP) formulation in Tanaka et al. \cite{LQGSDP} for the LQG plant and in \cite{SDP-GM} for the Gaussian-Markov source. In \cite{LQGSDP} and \cite{SDP-GM}, they first show the optimality of the linear policies for the causal rate distortion function of the LQG plant and the Gaussian-Markov source, correspondingly, and then leverage SDP to compute the optimal policies numerically.

Cuvelier and Tanaka \cite{lqgsideinfo2}  and Sabag et al. \cite{LQGsideinfo} consider the case with side information at the decoder. \cite{lqgsideinfo2} considers when part of the plant state is the side information, and \cite{LQGsideinfo} considers the side information to be linear noisy signals. These two papers establish the optimality of linear policies and apply the SDP approach to compute the causal rate distortion function. Lev and Khina \cite{causal-source-coding-scalar} consider the scalar Gaussian-Markov source with side information. For the achievable scheme, the known results are for the case without side information. Silva et al. \cite{1.254-gap} present a dithered quantization achievable scheme that is 1.254 bits away from the lower bound. Tanaka et al. \cite{vector-achievable-scheme} generalizes the achievable scheme to the vector case. 

Little is known about the tradeoff between data rates and control performance in settings with multiple sensors/encoders and one decoder, as depicted in Figure \ref{fig:LQG model}. Stavrou et al. \cite{sum-rate-mimo} consider optimizing the sum rate with respect to the covariance matrix and mean-square error distortions. Their formulation of the sum rate optimization is essentially the same as the point-to-point case optimization. Jung et al. \cite{lqgceo} investigate the same scenario but with the {\em feedback} of the control signal sent from the decoder to the encoders. \cite{lqgceo} formulates the rates-cost tradeoff as a non-convex optimization problem. However, for the networked case without feedback, the causal rate-distortion function remains an unsolved and longstanding open problem, which we aim to address in this work.

Given the optimality of linear policies for the LQG control problem with a single encoder, it is natural to hypothesize that linear policies might also be optimal in networked scenarios. This intuition is further reinforced by the optimality of linear policies in lossy source coding problems involving Gaussian sources and mean-square error distortion, such as Shannon’s lossy source coding \cite{shannon}, the Wyner-Ziv problem \cite{wyner-ziv}, quadratic Gaussian source coding \cite{quadraticgaussian}, and the Gaussian CEO problem \cite{gaussianceo}. 

Yet, establishing the optimality of linear policies for networked LQG cases is highly non-trivial due to the fundamental differences between networked and point-to-point scenarios. 
Since the two encoders are independent in the networked case, some of the steps in establishing the optimality of linear policies for point-to-point settings would not be easily extended to the networked settings.  
Specifically, for the point-to-point scenario, ~\cite{LQGSDP} and \cite{LQGsideinfo} (a) establish a directed information lower bound optimization problem for the {\em causal rate-distortion function} and (b) show that linear encoders are optimal in evaluating the lower bound. 
 To show that linear encoders are optimal, \cite{LQGSDP} and \cite{LQGsideinfo} first show that Gaussian distributions are sufficient for the optimization problem, from which the optimality of linear encoders follows. 

In contrast, the networked case requires proving the optimality of {\em independent} linear encoders, as the encoders do not communicate. The key step of proving linear encoder optimality in \cite{LQGSDP} and \cite{LQGsideinfo} does not directly generalize to networked settings. This limitation necessitates the {\em development of new lower bounds} for which we can prove the optimality of independent linear encoders. 

\subsection{Contributions}
The main contributions of this paper are as follows:
\begin{itemize}
    \item For the networked setting, we establish a new directed information lower bound for the LQG plant, which, to the best of our knowledge, is the first lower bound derived for the multiple sensors/encoders and one decoder setting. 
    \item For the networked setting, we demonstrate the optimality of linear, independent encoders and linear decoders for the weighted sum lower bound optimization under the full observation condition (Definition~\ref{definition:fullobservation}) for LQG plant and quadratic cost. Additionally, the original infinite-dimensional optimization problem is reduced to a finite-dimensional one, simplifying its analysis, although no convex form is obtained for the optimization problem.
    \item For the side information setting, our new directed information lower bound can also be used to prove that linear encoders are sufficient in the point-to-point case with linear side information. In particular, our approach provides an alternate proof for the case with side information presented in \cite{LQGsideinfo}.
    \item For the side information setting, we provide SDP formulations for the causal rate distortion function for the Gaussian-Markov source with weighted mean-square error distortion and side information. Moreover, we present results for both time-variant systems and time-invariant systems with a singular noise covariance matrix, which is not included in \cite{LQGsideinfo}.
\end{itemize}


\subsection{Notations}
Lowercase letters are vectors and uppercase letters are matrices. The subscripts represent time steps, and the superscripts represent the encoder index. For example, $x_{t}$ is a vector at time step $t$. $y^{i}_{t}$ is the signal received by Encoder $i$. 
We denote $x_{[T]}^{\mathrm{T}} = (x_{1}^{\mathrm{T}},...,x_{T}^{\mathrm{T}})$ be the transpose of $x_{[T]}$. 
Let $(x_{[T]},y_{[T]},z_{[T]})$ be a group of random variables. Define the directed information and conditional directed information as follows:
\begin{align*}
    I(x_{[T]}\rightarrow y_{[T]}) &=\sum_{t=1}^{T} I(x_{[t]};y_{t}|y_{[t-1]})\\
    I(x_{[T]}\rightarrow y_{[T]}\Vert z_{[T]})&=\sum_{t=1}^{T} I(x_{[t]};y_{t}|y_{[t-1]},z_{[t]})
\end{align*}
Let $S\subseteq \{1,...,k\}$ be a set of indices. We denote $x^{S}_{t} = [x^{i}_{t}:i\in S ]$. Furthermore, the notation $(x^{S}_{[t]})^{\mathrm{T}}=[ (x^{i}_{[t]})^{\mathrm{T}}:i\in S ]$. For simplicity, we denote $[a,b] = \{a,...,b\}$ and $[t] = \{1,...,t\}$ where $a$, $b$ and $t$ are integers.

\subsection{Related Work}
The research on the tradeoff between data rate and control performance begins with the minimum data rate required for a control system to be stabilized.
A substantial body of research has been conducted on this topic, with some studies focusing on the {\em single} sensor/encoder and {\em single} decoder/controller setting, and others on the {\em multiple} sensors/encoders and {\em multiple} decoders/controllers setting.
 The first papers regarding the data rate and the control system are Baillieul \cite{baillieul1999feedback} and Wong and Brockett \cite{wingsing-brockett}. They consider the worst-case stabilizability of the linear plant with the point-to-point lossless channel. They show that a fully observed scalar system can be bounded, i.e. $\limsup_{t\rightarrow \infty}\Vert x_{t} \Vert < \infty$, if and only if the data rate exceeds $\log |A|$ bits, where $A$ is the coefficient with $x_{t}$ in the LQG plant. Tatikonda and Mitter \cite{control-under-communication-constraints} extend the result to the vector case and show that if an LTI system is asymptotically stable, the data rate is at least $\sum_{|\lambda_{i}(A)|>1} \log|\lambda_{i}(A)|$, where $\lambda_{i}(A)$ are eigenvalues of $A$. This implies that only unstable modes of the matrix $A$ matter for stabilizability, which matches the standard intuition from the definition of stabilizability. Nair and Evans \cite{stalizability-expected-moments} consider the same setting with the mean square stabilizability, i.e. $\limsup_{t\rightarrow \infty} \mathbb{E}\Vert x_{t} \Vert^{2}<\infty$. Schlotterbeck et al. \cite{stalizaibility-random-delay} consider the communication channel with stochastic delay. Su et al. \cite{stalizability-fading} and Xu et al. \cite{stalizability-fading-power} consider the mean-square stabilizability in the presence of a fading channel. 

 A few papers investigate the mean-square stabilizability with {\em multiple} sensors/encoders and decoders/controllers. Zaidi et al. \cite{stalizability-mac-broadcast} present sufficient conditions for stabilizing two scalar LTI plants with noisy observation over the multiple access channel and the broadcast channel. Liu and Gupta \cite{stalizability-mac} provide more achievable points on the rate region for the MAC channel. Zaidi et al. \cite{stalizability-interference-necessary} consider two LTI plants with the interference channel. In another work \cite{stalizability-interference-achievable}, they present an achievable scheme, which is an adaptation of the classical Schalkwijk-Kailath scheme, developed for the transmission of reliable information over channels with noiseless feedback.  In \cite{stalizability-relay}, Zaidi et al. examine a Gaussian relay channel and propose an achievable scheme also based on the Schalkwijk-Kailath scheme. In \cite{stalizability-relay-network}, they consider a Gaussian network relay channel, which consists of a group of relay nodes. Necessary and sufficient conditions for mean-square stabilizability over various network topologies are derived. Kumar et al. \cite{stalizability-Gaussian-product} consider stabilizing one LTI plant with a Gaussian product channel, in which there are multiple encoders and one decoder, and each codeword passes through an AWGN channel to the decoder. The paper provides necessary and sufficient conditions for mean-square stabilizability. In particular, the achievable scheme is nonlinear, and the condition is related to the total power allocation of the AWGN channels, noise variances, and the unstable eigenvalues.

\subsection{Organization}
The rest of the paper is organized as follows. In Section \ref{section:problem-formulation}, we formulate the problem setting of the networked control system. In \ref{section:DILB}, we present the new directed information lower bound for the multiple sensors/encoders and one decoder setting. In \ref{sec:linearpolicyoptimal}, we show the optimality of linear policies for optimizing the weighted sum directed information lower bound. In \ref{sec:csc-problem-setting}, we formulate the problem for the Gaussian-Markov source.
In \ref{sec:csc-sdp-forms}, we apply the first two Theorems to the causal lossy compression of Gaussian-Markov sources with side information and present SDP formulations for solving the corresponding causal rate-distortion function with singular noise covariance matrices.

\section{Problem Formulation}\label{section:problem-formulation}
Consider the setting with $k$ sensors/encoders and one decoder setting as depicted in Figure \ref{fig:LQG model}. The plant follows the following evolving process 
\begin{equation}
    x_{t+1} = A_{t}x_{t}+B_{t}u_{t}+w_{t}
\end{equation}
 where $x_{t}\in\mathbb{R}^{n}$ are the plant states, $u_{t}\in \mathbb{R}^{m}$ are the control signals, $w_{t}\in\mathbb{R}^{n}$
 are independent Gaussian noises following distribution $\mathcal{N}(0,W_{t})$, $A_{t}\in\mathbb{R}^{n\times n}$ and $B_{t}\in\mathbb{R}^{n\times m}$. The sensor $i$ receives a linear signal $y^{i}_{t} = C^{i}_{t}x_{t}+v^{i}_{t}$ where $v^{i}_{t}$ are independent Gaussian noises with distribution $\mathcal{N}(0,V^{i}_{t})$. Note that the noises $v^{i}_{t}$ can be correlated i.e. $\mathbb{E}[v^{i}_{t} (v^{j}_{t})^{\mathrm{T}}]\neq 0$ or $V^{i}_{t}$ is a singular matrix. 
The encoder $i$ encodes independently from the distribution $\Pr(a^{i}_{t}|y^{i}_{[t]},a^{i}_{[t-1]})$ and then transmits $a^{i}_{t}$ to the decoder/controller through a lossless channel with a prefix code\cite{cover2012elements}. The decoder outputs the control signal $u_{t}\sim\Pr(u_{t}|u_{[t-1]},a^{[k]}_{[t]})$.
Let $l(*)$ be the length of the input in terms of bits, $\mathbb{P}$ be the measure of $(x_{[T+1]},u_{[T]})$. We are interested in characterizing the tradeoff between the expected input length and the expected control cost, which are defined as follows:
\begin{defn}[Rate]
\begin{align*}
    R_{i} &= \frac{1}{T}\sum_{t=1}^{T} \mathbb{E}[l(a^{i}_{t})], \text{ }i \in [k]
\end{align*}
\end{defn}

\begin{defn}[Quadratic Control Cost]
    \begin{align*}
    J(x_{[T+1]},u_{[T]}) = \sum_{t=1}^{T}\mathbb{E}[\Vert x_{t+1}\Vert_{Q_{t}}^{2} + \Vert u_{t}\Vert_{R_{t}}^{2}],
\end{align*} 
for given $Q_{t}\succ 0$ and $R_{t}\succ 0$, where $\Vert x\Vert_{Q}^{2} = x^{T}Qx$.
\end{defn}

Let $d^{*}$ be the minimal control cost such that the rates are finite. For the control cost requirement $d\geq d^{*}$, we are interested in the following weighted sum rate optimization problem:
\begin{align}\label{eqn:rate-distortion}
    \inf_{\gamma^{*} \in \Gamma^{*}, J\leq d} \limsup_{T\rightarrow \infty} \ \sum_{i=1}^{k}\lambda_{i}R_{i}
\end{align}
where we assume $\lambda_{i+1}\geq \lambda_{i}$ for $i\in [k-1]$ without loss of generality, and
$\Gamma^{*}$ is the set of measurable policies consisting of independent encoders $\mathbb{P}(a^{[k]}_{[T]}\Vert y^{[k]}_{[T]}) \triangleq \{ \prod_{i=1}^{k}\Pr(a^{i}_{t}|y^{i}_{[t]},a^{i}_{[t-1]})\}_{t\in[T]}$, a decoder $\mathbb{P}(u_{[T]}\Vert a^{[k]}_{[T]})\triangleq \{\Pr(u_{t}|u_{[t-1]},a^{[k]}_{[t]})\}_{t\in[T]}$.

In this paper, a lower bound on the weighted sum rate to achieve the target distortion requirement is established. We first derive the directed information lower bound for the networked control setting as in Theorem \ref{thm:dilb}. We then show that linear policies are optimal for minimizing the weighted sum lower bound of \eqref{eqn:rate-distortion} as in Theorem \ref{thm:gaussianmarkov}. As a direct application of these two derived Theorems, we establish the SDP forms to solve the causal rate-distortion function for the Gaussian-Markov source and side information with a singular noise matrix as in Theorem \ref{thm:TV-SDP} and Theorem  \ref{thm:TI-SDP}, which are not covered in \cite{LQGsideinfo}.


\section{Directed Information Lower Bound And The Optimal Linear Policy}
In this section, we first derive a new lower bound on the rate $(R_{1},...,R_{k})$ required to achieve the target control cost for the networked control setting. Our bound is expressed in terms of conditional directed information. Then we show that it is sufficient to consider a linear policy set to optimize the derived weighted sum rate lower bound for $\sum_{t=1}^{k}\lambda_{i}R_{i}$.


\subsection{Directed Information Lower Bounds}\label{section:DILB}
For the problem illustrated in Figure \ref{fig:LQG model}, we establish the following directed information lower bound for the rate tuple $(R_{1},...R_{k})$ first.

\begin{thm}[Directed Information Lower Bound]\label{thm:dilb}
    For any policy $\gamma \in \Gamma^{*}$ that satisfies the control cost constraint $J(x_{[T+1]},u_{[T]})\leq d$, the rates $(R_{1},...,R_{k})$ must satisfy the following inequalities:
    \begin{align*}
        R_{S}\geq I(y^{S}_{[T]} \rightarrow u_{[T]}\Vert y^{\bar{S}}_{[T]})
     \end{align*}
     where $S\subseteq [k]$, $\bar{S} = [k]\backslash S$ and $R_{S} = \sum_{i\in S}R_{i}$.
\end{thm}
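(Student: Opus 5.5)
We follow the standard chain used for the single-encoder case in \cite{LQGSDP,lqgsideinfo2}: pass from prefix-code lengths to entropies, then to mutual information, and finally to directed information. Along the way we must use the independence of the encoders to condition on $y^{\bar S}_{[T]}$ at no cost. Note that $R_S$ in the statement is understood per horizon $T$, i.e.\ $T R_S \geq I(\cdot)$; this matches how the bound is used inside the $\limsup$. Fix $S$ and write $a^S_{[T]}$ for the collection of codewords from the encoders in $S$.

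\textbf{Step 1 (lengths to entropy).} For each $i\in S$ and $t$, the prefix code gives
\[
\mathbb{E}[l(a^i_t)] \ge H(a^i_t) \ge H(a^i_t\mid a^i_{[t-1]}).
\]
Summing over $i\in S$ and $t$, and using subadditivity followed by the chain rule, we get
\[
T R_S \ge \sum_{i\in S} H(a^i_{[T]}) \ge H(a^S_{[T]}) \ge H(a^S_{[T]}\mid y^{\bar S}_{[T]}).
\]
Since $a^S_{[T]}$ is discrete,
\[
H(a^S_{[T]}\mid y^{\bar S}_{[T]}) \ge I(y^S_{[T]}; a^S_{[T]}\mid y^{\bar S}_{[T]}).
\]

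\textbf{Step 2 (data processing through the decoder).} The decoder uses $a^{\bar S}$ as well, so we need $I(y^S_{[T]}; a^S_{[T]}\mid y^{\bar S}_{[T]}) \ge I(y^S_{[T]}; u_{[T]}\mid y^{\bar S}_{[T]})$. The key point is that, given $y^{\bar S}_{[T]}$, the messages $a^{\bar S}_{[T]}$ are generated by the $\bar S$ encoders using only $y^{\bar S}$ and private randomness. By the product form of $\Gamma^*$, this gives the Markov chain $y^S_{[T]} - (a^S_{[T]}, y^{\bar S}_{[T]}) - (a^{\bar S}_{[T]}, u_{[T]})$.

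This is the step I expect to be the main obstacle. The plant state, and hence the future observations $y_t$, depends on past controls $u$, which in turn depend on $a^S$. So $y^S_{[T]}$ is not simply a fixed input passed through independent channels, and the naive Markov chain over the whole horizon is false. For that reason I will not attempt the block Markov chain. Instead I will work per time step with directed information from the outset, proving
\[
H(a^S_{[T]}\mid y^{\bar S}_{[T]}) \ge \sum_t I(y^S_{[t]}; u_t \mid u_{[t-1]}, y^{\bar S}_{[t]}),
\]
which is exactly the conditional directed information in the statement.

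\textbf{Step 3 (per-step inequality).} Concretely, I will start from $\sum_t H(a^S_t\mid a^S_{[t-1]},u_{[t-1]},y^{\bar S}_{[t]})$, which is at most $T R_S$ because conditioning reduces entropy. I will then show that each term is at least $I(y^S_{[t]};u_t\mid u_{[t-1]},y^{\bar S}_{[t]})$, in two moves:
\begin{itemize}
\item[(i)] bound the entropy below by $I(y^S_{[t]}, x_{[t]}; a^S_t \mid a^S_{[t-1]}, u_{[t-1]}, y^{\bar S}_{[t]})$;
\item[(ii)] carry out a telescoping argument in the style of the Massey/Kramer directed-information chain rule, in which $u_t$ is generated from $(u_{[t-1]}, a^S_{[t]}, a^{\bar S}_{[t]})$ and $a^{\bar S}_{[t]}$ depends only on $(y^{\bar S}_{[t]}, a^{\bar S}_{[t-1]})$ plus private randomness, conditionally independent of $y^S$.
\end{itemize}

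The Markov relations I need are per time step: $a^{\bar S}_t - (y^{\bar S}_{[t]}, a^{\bar S}_{[t-1]}) - (y^S_{[t]}, a^S_{[t]}, u_{[t-1]})$, which holds by the product structure of the encoders, and $u_t - (u_{[t-1]}, a^{[k]}_{[t]}) - (x, y)$. The subtle point is the conditioning on $y^{\bar S}_{[t]}$ rather than $y^{\bar S}_{[T]}$. This is where causality matters, and it is why the bound is stated as directed, not plain, mutual information. If the per-step bookkeeping becomes unwieldy, my fallback is to bound
\[
T R_S \ge I\bigl(y^{S}_{[T]} \rightarrow (a^S_{[T]}, a^{\bar S}_{[T]}) \,\Vert\, y^{\bar S}_{[T]}\bigr)
\]
and then invoke the data-processing property of directed information for the causal decoder.
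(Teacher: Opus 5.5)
There is a genuine gap. You locate the difficulty correctly, but the proposal never supplies the argument that overcomes it, and two of its concrete intermediate claims are false.

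\textbf{The Step 2 target is false.} The inequality $H(a^S_{[T]}\mid y^{\bar S}_{[T]})\ge\sum_t I(y^S_{[t]};u_t\mid u_{[t-1]},y^{\bar S}_{[t]})$ does not hold in general. Conditioning on the whole horizon $y^{\bar S}_{[T]}$ lets later side information reveal earlier codewords through the plant. Take $k=2$, $S=\{1\}$, and $x_t=(x^{(1)}_t,x^{(2)}_t)$ with $x^{(1)}_{t+1}=w^{(1)}_t\sim\mathcal{N}(0,1)$ and $x^{(2)}_{t+1}=u_t+\epsilon w^{(2)}_t$. Let $y^i_t=x^{(i)}_t$, let encoder~1 send the one-bit codeword $a^1_t=\mathrm{sign}(y^1_t)$, let encoder~2 be silent, and set $u_t=a^1_t$. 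Each directed-information term is one bit, so the right-hand side equals $T$ bits; since $TR_1=T$, the theorem is tight here. Yet $y^2_{t+1}$ reveals $a^1_t$ up to noise of size $\epsilon$, so $H(a^1_{[T]}\mid y^2_{[T]})\to 1$ bit as $\epsilon\to 0$. The last inequality of Step 1 has therefore already thrown away too much.

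\textbf{The termwise claim in Step 3 is false.} Step 3 moves to the right starting point, $\sum_t H(a^S_t\mid a^S_{[t-1]},u_{[t-1]},y^{\bar S}_{[t]})$, which is exactly where the paper begins. But the claim that \emph{each} summand dominates $I(y^S_{[t]};u_t\mid u_{[t-1]},y^{\bar S}_{[t]})$ fails. Suppose the encoders in $S$ are silent at time $t$ and the decoder outputs at time $t$ a codeword received at time $t-1$. Then the entropy term vanishes while the information term is positive. Only the sums over $t$ can be compared.

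\textbf{The missing argument.} That comparison is the entire content of the theorem, and you leave it as ``a telescoping argument in the style of Massey/Kramer'' without saying what telescopes. The paper proceeds as follows.

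First, the product form of the encoders gives that, conditionally on $(a^S_{[t]},u_{[t-1]},y^{\bar S}_{[t]})$, the codewords $a^{\bar S}_{[t]}$ are independent of $y^S_{[t]}$. Hence $y^S_{[t]}-(a^S_{[t]},u_{[t-1]},y^{\bar S}_{[t]})-u_t$ is Markov. The chain you list conditions on $a^{\bar S}_{[t-1]}$, which is not in the conditioning set you need.

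Next, let $\phi_t=I(y^S_{[t]};a^S_t\mid a^S_{[t-1]},u_{[t-1]},y^{\bar S}_{[t]})-I(y^S_{[t]};u_t\mid u_{[t-1]},y^{\bar S}_{[t]})$. Expand $I(y^S_{[t]};a^S_{[t]},u_t\mid u_{[t-1]},y^{\bar S}_{[t]})$ by the chain rule in two orders, and use that $a^S_{[t-1]}-(u_{[t-1]},y^{[k]}_{[t-1]})-y^{[k]}_t$ is Markov. This yields $\phi_t=\Psi_t-I(y^S_{[t-1]};a^S_{[t-1]}\mid u_{[t-1]},y^{\bar S}_{[t]})$, with the potential $\Psi_t=I(y^S_{[t]};a^S_{[t]}\mid u_{[t]},y^{\bar S}_{[t]})$.

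The same Markov chain then shows that the subtracted term, which conditions on the extra fresh side information $y^{\bar S}_t$, is at most $\Psi_{t-1}$. Summing gives $\sum_t\phi_t\ge\Psi_T\ge 0$. This monotonicity step is exactly where causal conditioning on $y^{\bar S}_{[t]}$, rather than $y^{\bar S}_{[T]}$, is used.

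\textbf{The fallback.} It does not escape the issue. The bound $TR_S\ge I(y^S_{[T]}\to a^{[k]}_{[T]}\Vert y^{\bar S}_{[T]})$ is true. However, in a closed loop there is no off-the-shelf data-processing inequality for directed information from the codewords to $u$, and establishing one here requires the same telescoping construction.
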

\begin{proof}
See Section \ref{proof:dithm}. 
\end{proof}









For example, in the three-users case, we have the following inequalities:
\begin{align*}
    R_{1} &\geq I(y^{1}_{[T]} \rightarrow u_{[T]}\Vert y^{2,3}_{[T]})\\
    R_{2} &\geq I(y^{2}_{[T]} \rightarrow u_{[T]}\Vert y^{1,3}_{[T]})\\
    R_{3} &\geq I(y^{3}_{[T]} \rightarrow u_{[T]}\Vert y^{1,2}_{[T]})\\
    R_{1} + R_{2} &\geq I(y^{1,2}_{[T]} \rightarrow u_{[T]}\Vert y^{3}_{[T]}) \\
    R_{1} + R_{3} &\geq I(y^{1,3}_{[T]} \rightarrow u_{[T]}\Vert y^{2}_{[T]})\\
    R_{2} + R_{3} &\geq I(y^{2,3}_{[T]} \rightarrow u_{[T]}\Vert y^{1}_{[T]})\\
    R_{1} + R_{2} + R_{3} &\geq I(y^{1,2,3}_{[T]} \rightarrow u_{[T]})
\end{align*}

Theorem~\ref{thm:dilb} can be specialized to the point-to-point scenario with side information involving a single encoder, where $y^{1}_{t} = x_{t}$ represents the signal sent to the encoder, $y^{2}_{t} = y_{t}$ serves as the linear side information available at the decoder. Letting $R$ denote the rate for this scenario, we can readily show the following directed information lower bound:

\begin{corollary}[Point-To-Point With Side Information]\label{corollary:p2p}
 For any policy $\mathbb{P}(a_{[T]}|| x_{[T]})\cdot\mathbb{P}(u_{[T]}|| a_{[T]},y_{[T]}) \triangleq \{ \Pr(a_{t}|a_{[t-1]},x_{[t]})\Pr(u_{t}|a_{[t]},y_{[t]},u_{[t-1]})\}_{t\in [T]}$ that satisfies the control cost constraint $J(x_{[T+1]},u_{[T]})\leq d$, the rate $R$ must satisfy the following inequality:
    \begin{equation}
        R \geq \frac{1}{T}I(x_{[T]}\rightarrow u_{[T]}\Vert y_{[T]}).
    \end{equation}
\end{corollary}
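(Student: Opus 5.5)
The plan is to obtain Corollary~\ref{corollary:p2p} directly from Theorem~\ref{thm:dilb}. We take $k=2$ sensors and choose $C^{1}_{t}=I$ with $V^{1}_{t}=0$, so that $y^{1}_{t}=x_{t}$. Sensor~2 receives the linear side-information signal $y^{2}_{t}=y_{t}$. The decoder in the corollary sees $y_{[t]}$ directly rather than through a coded message, so we model sensor~2 as transmitting its observation uncoded, $a^{2}_{t}=y_{t}$. This gives an element of $\Gamma^{*}$: encoder~1 is $\Pr(a_{t}|a_{[t-1]},x_{[t]})$, encoder~2 is deterministic, and the decoder $\Pr(u_{t}|u_{[t-1]},a^{1}_{[t]},a^{2}_{[t]})$ coincides with the given $\Pr(u_{t}|a_{[t]},y_{[t]},u_{[t-1]})$. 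The joint law of $(x_{[T+1]},y_{[T]},a_{[T]},u_{[T]})$ is identical in the two settings, so the cost constraint $J\le d$ carries over.

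Next we apply Theorem~\ref{thm:dilb} with $S=\{1\}$ and $\bar S=\{2\}$. This yields $R_{1}\ge I(y^{1}_{[T]}\rightarrow u_{[T]}\Vert y^{2}_{[T]})$, which reads $R\ge I(x_{[T]}\rightarrow u_{[T]}\Vert y_{[T]})$ once the normalization is accounted for. In Theorem~\ref{thm:dilb} the rate is a time average, $R_{i}=\frac1T\sum_t\mathbb{E}[l(a^{i}_{t})]$, so the directed information on the right must carry the same $\frac1T$ factor. Tracking that factor gives exactly $R\ge\frac1T I(x_{[T]}\rightarrow u_{[T]}\Vert y_{[T]})$.

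The main obstacle is that sending $y_{t}$ uncoded gives $R_{2}=\infty$ (and $V^{1}_{t}=0$ makes the noise degenerate), so one has to check that the proof of Theorem~\ref{thm:dilb} for $S=\{1\}$ only uses the prefix-code property of encoder~1 and never a finite rate for sensor~2. If that check fails, the fallback is to repeat the single-user chain argument directly. First, prefix-code lossless coding gives $\mathbb{E}[l(a_{t})]\ge H(a_{t}|a_{[t-1]},y_{[t]})\ge I(x_{[T]};a_{t}|a_{[t-1]},y_{[t]})$. Summing over $t$ via the chain rule gives
\begin{equation*}
\textstyle\sum_t \mathbb{E}[l(a_t)] \ge I(x_{[T]}\rightarrow a_{[T]}\Vert y_{[T]}).
\end{equation*}
Second, a data-processing step through the decoder's causal Markov structure lower-bounds this by $I(x_{[T]}\rightarrow u_{[T]}\Vert y_{[T]})$. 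This last inequality is the delicate part: $y_{t}$ depends on the past controls through $x_{t}$, so it must be checked using the conditional independences $u_{t}-(a_{[t]},y_{[t]},u_{[t-1]})-x_{[T]}$ and $a_{t}-(x_{[t]},a_{[t-1]})-(y_{[t]},u_{[t-1]})$ at each time step.
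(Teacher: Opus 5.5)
Your proposal is correct and is essentially the paper's argument. The paper presents the corollary as the specialization $y^{1}_{t}=x_{t}$, $y^{2}_{t}=y_{t}$ of Theorem~\ref{thm:dilb} and proves it by rerunning that proof; the check you flag passes, since code length enters only through $\mathbb{E}[l(a^{1}_{t})]\ge H(a^{1}_{t})$ and every later step remains valid with $a^{2}_{t}=y_{t}$, and you are right to restore the missing $\frac{1}{T}$.

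The fallback is unnecessary and slightly off as written. The chain $u_{t}-(a_{[t]},y_{[t]},u_{[t-1]})-x_{[T]}$ fails in closed loop because $x_{t+1}$ depends on $u_{t}$, so it should read $x_{[t]}$. Also, passing from $I(x_{[T]}\rightarrow a_{[T]}\Vert y_{[T]})$ to $I(x_{[T]}\rightarrow u_{[T]}\Vert y_{[T]})$ is not a one-line data-processing step: it needs $u_{[t-1]}$ added to the conditioning, followed by the telescoping argument of Theorem~\ref{thm:dilb}.
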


\begin{proof}
    The proof follows similar steps to the proof for Theorem 1 in Section \ref{proof:dithm}.
\end{proof}

Theorem \ref{thm:dilb} provides the first lower bound for the networked control setting with multiple sensors/encoders and one decoder, and Corollary \ref{corollary:p2p} introduces a new lower bound for the point-to-point case with side information \cite{LQGsideinfo, lqgsideinfo2}. 
A key strength of our bounds is that we establish the optimality of linear policies for evaluating them. Specifically, we discuss the optimality of linear policies for Theorem \ref{thm:dilb} and Corollary \ref{corollary:p2p} in Remark \ref{remark:linear-policy-optimal}.
 
\begin{remark}\label{remark:linear-policy-optimal}
    We show that linear and independent encoders are sufficient for optimizing the weighted sum lower bound \eqref{eqn:weighted sum lower bound} as in Theorem \ref{thm:gaussianmarkov}. For the point-to-point case with side information, we adopt the same proof to show that linear policies are sufficient for optimizing the {\em causal rate-distortion function} with side information.
    Although the final result aligns with that in \cite{LQGsideinfo}, our work provides an alternative and novel proof. The key difference is that the lower bound of the directed information in Theorem \ref{thm:dilb} includes the control signal $u_{t}$, while in \cite{LQGsideinfo}, the lower bound only includes the source $x_{t}$ and the codeword $a_{t}$. This simplifies the proof for linear policies being optimal. Although the two methods are different, the two lower bounds considered are essentially the same for optimization as the Proposition \ref{prop:r-d-equivalent}.
\end{remark}



If linear encoders are optimal for the lower bounds in Theorem \ref{thm:dilb}, the certainty equivalence controller, which is a linear controller with feedback control gains from the backward Riccati recursion, is optimal as in \cite{LQGSDP}. This allows one to transform the optimization problem in \eqref{eqn:weighted sum lower bound} into a finite-dimensional optimization problem. In the point-to-point settings \cite{LQGSDP,LQGsideinfo,lqgsideinfo2}, the optimization of the lower bound and the control cost with the certainty equivalence controller is convex. Thus, showing the optimality of linear policies is the first step. This is why we derived the lower bound in Theorem \ref{thm:dilb} even if it is suboptimal -- for example, there exists a tighter lower bound, as discussed in the following Remark.

\begin{remark}\label{remark:suboptimality}
    The lower bound in Theorem \ref{thm:dilb} is suboptimal. Following the proofs for Theorem \ref{thm:dilb} in Section \ref{proof:dithm}, one can show that $I(y^{1}_{[T]}\rightarrow u_{[T]} \Vert a^{2}_{[T]})$ is a lower bound for $R_{1}$ only for the two-encoder case, which is tighter than our bound $I(y^{1}_{[T]}\rightarrow u_{[T]}\Vert y^{2}_{[T]})$ by Lemma \ref{lemma:tightnesslemma}. However, it is difficult to show that linear and independent encoders are sufficient for optimizing $I(y^{1}_{[T]}\rightarrow u_{[T]} \Vert a^{2}_{[T]})$. 
\end{remark}





\subsection{Linear Policies Are Sufficient}\label{sec:linearpolicyoptimal}

To characterize the lower bound we established in Theorem~\ref{thm:dilb}, we consider the following weighted sum lower bound for $\sum_{i = 1}^{k}\lambda_{i}R_{i}$ in the rest of the paper:
\begin{align}\label{eqn:weighted sum lower bound}
    \inf_{\gamma^{*} \in \Gamma^{*}, J\leq d}  
     &\sum_{i=1}^{k}(\lambda_{i} - \lambda_{i-1}) I(y^{[i,k]}_{[T]}\rightarrow u_{[T]} \Vert y^{[i-1]}_{[T]})
\end{align}
%
Here the policy set $\Gamma^{*}\triangleq \{ \Pr(u_{t}|y^{[k]}_{[t]},u_{[t-1]})\}_{t\in[T]}$ since it includes all the legit 
$\{ \Pr(u_{t}|y^{[k]}_{[t]},u_{[t-1]})\}_{t\in[T]}$ induced from the policy set $\mathbb{P}(a^{[k]}_{[T]}\Vert y^{[k]}_{[T]}) \cdot\mathbb{P}(u_{[T]}\Vert a^{[k]}_{[T]})$.

We demonstrate that a class of linear encoders and decoders is sufficient for optimization \eqref{eqn:weighted sum lower bound} under the full observation condition \ref{definition:fullobservation}, thus reducing the original problem to a finite-dimensional optimization problem. 
The full observation condition is as follows:
\begin{defn}[Full Observation Condition]\label{definition:fullobservation}
    Let $y_{t} = \begin{bmatrix}
    y^{1}_{t}\\
    \vdots\\
    y^{k}_{t}
\end{bmatrix}$ denote the collection of all the sensor signals, there exists a matrix $C_{t}$ such that $C_{t}y_{t} = x_{t}$.
\end{defn}
For example, in the two encoders case, let $x_{t} = [(x^{1}_{t})^{\mathrm{T}},(x^{2}_{t})^{\mathrm{T}}]$, $y^{1}_{t} = x^{1}_{t}$ and $y^{2}_{t} = x^{2}_{t}$, or $y^{1}_{t} = x_{t}$ and $y^{2}_{t} = x_{t} + v_{t}$. The full observation condition implies that if all the sensors sit together, they can recover the full $x_{t}$. This condition is needed when proving an important lemma for the following Theorem:

\begin{thm}[Optimal Linear Policy]\label{thm:gaussianmarkov}
If the full observation condition \ref{definition:fullobservation} is satisfied, then the optimal solution to \eqref{eqn:weighted sum lower bound} can be found in the following policy set $\Gamma_{1}$:
   \begin{itemize}
       \item $c^{i}_{t} = F^{i}_{t}y^{i}_{[t]}$ for $i \in [k-1]$;
       \item $c^{k}_{t} = F^{k}_{t}y^{k}_{t} + h_{t}$ where $h_{t}\sim \mathcal{N}(0,H_{t})$;
   \item $u_{t} = \sum_{i=1}^{k}c^{i}_{t}+N_{t}u_{[t-1]}$.
   \end{itemize}
  The dimensions of $(c^{i}_{t}:i\in[k])$ are the same as the dimension of $u_{t}$, which is decided by the input system parameter $B_{t}$.
   The optimization for \eqref{eqn:weighted sum lower bound} is reduced to finding the finite-dimensional matrices $\{H_{t},N_{t},F^{i}_{t},i\in[k]\}_{t=1,...,T}$ if $T$ is finite.
\end{thm}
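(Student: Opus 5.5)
Given any feasible policy $\gamma\in\Gamma^{*}$ with $J\le d$, I will build a policy in $\Gamma_{1}$ with the same control cost and no larger value of the objective in \eqref{eqn:weighted sum lower bound}. I will do this in two stages: first reduce to jointly Gaussian policies, then realize the Gaussian policy by the linear structure of $\Gamma_{1}$.

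\textbf{Gaussian replacement.} Fix $\gamma$ and let $\mathbb{P}$ be the induced joint law of $(x_{[T+1]},y^{[k]}_{[T]},u_{[T]})$. Let $\mathbb{G}$ be the jointly Gaussian law with the same first and second moments, realized by a causal kernel $\Pr_{G}(u_t\mid y^{[k]}_{[t]},u_{[t-1]})$ that is linear-Gaussian: the conditional mean is the linear MMSE estimate of $u_t$ given $(y^{[k]}_{[t]},u_{[t-1]})$, and the noise covariance is the corresponding error covariance. Because the plant and sensor maps are linear with Gaussian noise, the closed loop under this kernel reproduces the covariance of $(x_{[T+1]},y_{[T]},u_{[T]})$ by induction on $t$. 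Hence $J$, which is quadratic, is unchanged.

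For the rate terms, expand each $I(y^{[i,k]}_{[T]}\rightarrow u_{[T]}\Vert y^{[i-1]}_{[T]})$ as $\sum_t \big[h(u_t\mid u_{[t-1]},y^{[i-1]}_{[t]}) - h(u_t\mid u_{[t-1]},y^{[k]}_{[t]})\big]$. The second entropy is the same under $\mathbb{P}$ and $\mathbb{G}$ up to the Gaussian maximum-entropy argument, and I expect the standard inequality $h_{\mathbb{P}}(\cdot\mid\cdot)\le h_{\mathbb{G}}(\cdot\mid\cdot)$ to handle the comparison. The cleanest route is to write the difference of the directed informations under $\mathbb{P}$ and $\mathbb{G}$ as a relative entropy, as in \cite{LQGSDP}. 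One must check that conditioning on $y^{[i-1]}$ and the telescoping weights $\lambda_i-\lambda_{i-1}\ge 0$ preserve the sign. Because the weights are nonnegative, it suffices to handle each term separately.

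\textbf{Linear structure.} With the kernel now linear-Gaussian, $u_t=\sum_{i}G^i_t y^{i}_{[t]}+N_t u_{[t-1]}+e_t$ with Gaussian $e_t$. This already has the form $\sum_i c^i_t+N_tu_{[t-1]}$. The remaining step is to show that without loss of optimality the innovation noise can be placed entirely on the last sensor and that $c^k_t$ depends only on the current $y^k_t$. This is where I would use the full observation condition. Since $x_t=C_t y_t$ and $x$ is Markov, the past $y^{[k]}_{[t-1]}$ is, given $x_{t-1}$, already encoded through $u_{[t-1]}$ and the other sensors' histories. I would show that any dependence of $c^k_t$ on $y^k_{[t-1]}$ can be rewritten, modulo terms measurable with respect to $(y^{[k-1]}_{[t]},u_{[t-1]})$ and absorbed into the $F^i_t$, without changing the conditional entropies above. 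Each term of the objective conditions on $y^{[i-1]}$ with $i\le k$, so moving noise or memory onto sensor $k$ leaves every term either unchanged or smaller. This relies on the ordering $\lambda_{i+1}\ge\lambda_i$, because sensor $k$ appears in all terms.

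\textbf{Main obstacle.} I expect this last reshuffling to be the hard step: proving that the noise $h_t$ and the memory can be concentrated on encoder $k$ with memoryless $F^k_t$ while every conditional directed information term stays the same or decreases. I would first try a Gauss--Markov sufficient-statistic argument based on $x_t=C_ty_t$. The counting of matrices $\{H_t,N_t,F^i_t\}$ then gives the finite-dimensional reduction.
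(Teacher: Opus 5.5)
Your Gaussian-replacement stage is essentially workable, but the step you flag as the main obstacle is a genuine gap, and the mechanism you suggest for it would fail. A linear policy in which $c^k_t$ depends on $y^k_{[t-1]}$ cannot in general be rewritten with a memoryless $c^k_t$ and the same joint law. For example, with $y^1_t=x^1_t$ and $y^2_t=x^2_t$, a term $My^2_{t-1}$ in $u_t$ need not be a function of $(y^1_{[t]},y^2_t,u_{[t-1]})$, so it cannot be absorbed into the $F^i_t$. The claim that $y^{[k]}_{[t-1]}$ is already encoded in $u_{[t-1]}$ and the other sensors' histories is false. With your full-history LMMSE kernel, Stage 1 can only ever deliver a $c^k_t$ that depends on $y^k_{[t]}$.

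What is needed is an inequality rather than an identity, and the paper gets it from a relaxation you are missing (its inequality \eqref{ineq:1lower}). By the chain rule and nonnegativity,
\[
I(y^{[i,k]}_{[t]};u_t\mid u_{[t-1]},y^{[i-1]}_{[t]})\ \ge\ I(y^{[i,k-1]}_{[t]},y^k_t;u_t\mid u_{[t-1]},y^{[i-1]}_{[t]}),\qquad i\in[k].
\]
This bound is legitimate for every term precisely because $y^k$ never appears in the conditioning set. That, not a sufficient-statistic argument, is why encoder $k$ is singled out. The paper then does the Gaussian comparison for the relaxed objective, and defines the linear policy as the LMMSE of $u_t$ given the \emph{restricted} set $(y^{[k-1]}_{[t]},y^k_t,u_{[t-1]})$.

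Full observation is used exactly at this point. Since $x_t=C_ty_t$, one has $\mathbb{P}(y^{[k]}_{t+1}\mid y^{[k-1]}_{[t]},y^k_t,u_{[t]})=\mathbb{P}(y^{[k]}_{t+1}\mid y^{[k]}_t,u_t)$, a fixed Gaussian kernel shared by $\mathbb{P}$ and $\mathbb{G}$. This gives both the KL comparison of Lemma~\ref{lemma:lowerboundGaussian1} and the marginal identity of Lemma~\ref{lemma:distributionidentity} on $(y^{[k-1]}_{[t+1]},y^k_{t+1},u_{[t]})$. Equal cost then follows from $x_{t+1}=C_{t+1}y_{t+1}$. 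Finally, for policies whose $u_t$ uses only $y^k_t$, the Markov chain $u_t-(u_{[t-1]},y^{[k-1]}_{[t]},y^k_t)-y^k_{[t-1]}$ makes the relaxation tight, which closes the loop back to $\Gamma^{*}$.

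By contrast, putting the noise on encoder $k$ is trivial. The objective and the cost depend only on the law of $(y,u)$, so splitting $u_t-N_tu_{[t-1]}$ into $c^i_t=M^i_ty^i_{[t]}$ ($i<k$) and $c^k_t=M^k_ty^k_t+g_t$ is just a relabeling.

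A smaller point on Stage 1: the maximum-entropy inequality $h_{\mathbb{P}}\le h_{\mathbb{G}}$ does not settle the comparison. Both entropies in $h(u_t\mid u_{[t-1]},y^{[i-1]}_{[t]})-h(u_t\mid u_{[t-1]},y^{[k]}_{[t]})$ are bounded in the same direction. Moreover, the second entropy is not the same under $\mathbb{P}$ and $\mathbb{G}$; it equals $-\infty$ for a deterministic policy. Your fallback is the right one. Express each conditional mutual information through the densities of the $y$'s given the $u$'s, multiply and divide by the plant kernel, and telescope. The policy-dependent remainder is then a sum of conditional KL divergences.
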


\begin{proof}
    See Section \ref{proof:gaussianmarkov}.
\end{proof}

The proof follows some steps from the paper \cite{LQGSDP}. The key idea is to show that for any policy constructed by a probability measure $\mathbb{P}\in\Gamma^{*}$, we construct a Gaussian measure $\mathbb{G}$ that has the same covariance matrix as $\mathbb{P}$, and the weighted sum directed information lower bound computed by $\mathbb{G}$ is no bigger than that computed by $\mathbb{P}$. Next, we construct a linear policy $\mathbb{Q}$ such that $\mathbb{G}(y^{[k-1]}_{[t+1]},y^{k}_{t+1},u_{[t]}) = \mathbb{Q}(y^{[k-1]}_{[t+1]},y^{k}_{t+1},u_{[t]})$. Hence, the weighted sum rate lower bound is the same for policy $\mathbb{G}$ and policy $\mathbb{Q}$. The control cost is the same since the second moments are the same. Then, we divide the linear function of $u_{t}$ in $\mathbb{Q}$ into independent parts. Each part corresponds to the information available to each Encoder. This enables us to show that the optimal policy for the optimization \eqref{eqn:weighted sum lower bound} is in $\Gamma_{1}$. Finally, we show that for any policy in $\Gamma_{1}$, there is a policy $\mathbb{P}\in\Gamma^{*}$ that achieves the same weighted sum rate lower bound and the same control cost. This completes the whole proof.


\begin{remark}
    The certainty equivalence controller is not optimal for optimizing the weighted sum directed information lower bound \eqref{eqn:weighted sum lower bound}. 
    Recall that $u_{t}$ in $\Gamma_{1}$ is a function of the sum of codewords of the current time step. If we change the controller to the certainty equivalence controller, which is a linear function of all the codewords, instead of the sum of them, the sufficiency of the linear policy set will not hold. This can be readily shown from the proof of Theorem \ref{thm:gaussianmarkov}.
\end{remark}

Throughout the proof of Theorem \ref{thm:gaussianmarkov}, we can show that for the point-to-point case with linear side information, optimizing the lower bound in Theorem \ref{thm:dilb} is the same as optimizing the lower bound considered in \cite{LQGsideinfo}, as the following Proposition.

\begin{prop}\label{prop:r-d-equivalent}
   Let $\Gamma_{u}$ be the policy set $\mathbb{P}(u_{[T]}|| x_{[T]},y_{[T]}) \triangleq \{ \Pr(u_{t}|x_{[t]},y_{[t]},u_{[t-1]})\}_{t\in [T]}$ and $\Gamma_{a}$ be the policy set $\mathbb{P}(a_{[T]}\Vert x_{[T]})\cdot \mathbb{P}(u_{[T]}\Vert a_{[T]},y_{[T]}) \triangleq \{ \prod_{i=1}^{k}\Pr(a_{t}|a_{[t-1]},x_{[t]})\}_{t\in[T]}\cdot \{\Pr(u_{t}|u_{[t-1]},a_{[t]},y_{[t]})\}_{t\in[T]}$, where $y_{t} = C_{t}x_{t}+v_{t}$ is the given linear side information with independent Gaussian noises $v_{t}$. Then we have 
   \begin{align*}
        & \inf_{\gamma \in \Gamma_{a}, J\leq d}  
     I(x_{[T]}\rightarrow a_{[T]} \Vert y_{[T]})\\
      = &\inf_{\gamma \in \Gamma_{u}, J\leq d}  
     I(x_{[T]}\rightarrow u_{[T]} \Vert y_{[T]})
   \end{align*}
\end{prop}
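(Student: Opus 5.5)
We prove the equality by establishing the two inequalities separately, writing $\mathrm{LHS}$ for the infimum over $\Gamma_a$ and $\mathrm{RHS}$ for the infimum over $\Gamma_u$.

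\textbf{Direction $\mathrm{LHS}\geq \mathrm{RHS}$.} Fix any $\gamma\in\Gamma_a$ with $J\leq d$. It induces a kernel $\Pr(u_t|x_{[t]},y_{[t]},u_{[t-1]})$ obtained by marginalizing out $a_{[t]}$, so it belongs to $\Gamma_u$ with the same joint law of $(x_{[T+1]},u_{[T]})$. In particular, the control cost is unchanged. It remains to show
\begin{align*}
I(x_{[T]}\rightarrow a_{[T]}\Vert y_{[T]})\geq I(x_{[T]}\rightarrow u_{[T]}\Vert y_{[T]}).
\end{align*}
We argue termwise, as in the proof of Theorem \ref{thm:dilb}, using the chain rule and the Markov structure. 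The encoder depends only on $(a_{[t-1]},x_{[t]})$, and $y_t$ depends on $x_t$ through fresh noise, so $\sum_t I(x_{[t]};a_t|a_{[t-1]},y_{[t]})=I(x_{[T]};a_{[T]}\Vert y_{[T]})$-type identities hold. The decoder's $u_t$ is conditionally independent of $x_{[T]}$ given $(a_{[t]},y_{[t]},u_{[t-1]})$. Combining these with the data-processing inequality in directed form (as in the derivation of Corollary \ref{corollary:p2p}) gives the displayed inequality. Taking the infimum yields $\mathrm{LHS}\geq\mathrm{RHS}$.

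\textbf{Direction $\mathrm{LHS}\leq \mathrm{RHS}$.} Here we invoke the structural result obtained in the proof of Theorem \ref{thm:gaussianmarkov}, specialized to $k=2$ with $y^1_t=x_t$ and $y^2_t=y_t$; the full observation condition holds trivially. That proof shows that the infimum of $I(x_{[T]}\rightarrow u_{[T]}\Vert y_{[T]})$ over $\Gamma_u$ subject to $J\leq d$ is attained (or approached) within $\Gamma_1$, where $u_t=F^1_t x_{[t]}+F^2_t y_t+h_t+N_t u_{[t-1]}$. The ordering of encoder indices may need to be swapped so that the noise $h_t$ sits on the encoder side.

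Given such a policy, we construct a policy in $\Gamma_a$ by setting $a_t=F^1_t x_{[t]}+h_t$. This uses only the encoder's information, with $a_t$ a continuous-valued channel input for the purpose of the bound. The decoder then outputs $u_t=a_t+F^2_t y_t+N_t u_{[t-1]}$. The resulting joint law of $(x,y,u)$ is the same as before, so the control cost is the same. Given $(y_{[t]},u_{[t-1]})$, the map $a_t\mapsto u_t$ is an invertible shift. Moreover, $a_{[t-1]}$ is recoverable from $(u_{[t-1]},y_{[t-1]})$. Hence each term satisfies $I(x_{[t]};a_t|a_{[t-1]},y_{[t]})=I(x_{[t]};u_t|u_{[t-1]},y_{[t]})$, and the two directed informations coincide. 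Therefore $\mathrm{LHS}\leq\mathrm{RHS}$.

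\textbf{Main obstacle.} The key difficulty is the reduction to $\Gamma_1$ in the second direction. We need to confirm that the Gaussianization and linearization steps in the proof of Theorem \ref{thm:gaussianmarkov} apply to the single-term objective $I(x_{[T]}\rightarrow u_{[T]}\Vert y_{[T]})$ over the larger set $\Gamma_u$, and not only to the weighted sum. We also need to place the noise $h_t$ in the component that depends on $x$, since it must be generated by the encoder. A secondary obstacle is the first direction's conditional data-processing step, which requires careful checking of which Markov chains hold when conditioning on the side information $y_{[t]}$.
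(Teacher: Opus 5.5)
Your two-direction plan matches the paper's: the $\geq$ direction uses the argument of Theorem~\ref{thm:dilb}, and the $\leq$ direction uses the reduction to $\Gamma_{1}$ in Theorem~\ref{thm:gaussianmarkov}. The second direction is sound. The index swap you hesitate over is in fact forced. Take $y^{1}_{t}=y_{t}$, $y^{2}_{t}=x_{t}$, $\lambda_{1}=0$, $\lambda_{2}=1$. Then the objective in \eqref{eqn:weighted sum lower bound} is exactly $I(x_{[T]}\rightarrow u_{[T]}\Vert y_{[T]})$ over $\Gamma^{*}=\Gamma_{u}$; in your original labeling, $I(y^{1}_{[T]}\rightarrow u_{[T]}\Vert y^{2}_{[T]})$ is not one of its terms. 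Full observation holds trivially, and the noise lands in $a_{t}=F^{2}_{t}x_{t}+h_{t}$, so your ``main obstacle'' is settled. You note that $(a_{[t-1]},y_{[t]})$ and $(u_{[t-1]},y_{[t]})$ determine each other, and that $a_{t}\mapsto u_{t}$ is a shift. This gives termwise equality of the two directed informations directly. The paper reaches the same point differently: it passes through the objective of \eqref{ineq:3lower}, $\sum_{t}I(x_{t};a_{t}|y_{[t]},a_{[t-1]})$, and adds the vanishing term $I(x_{[t-1]};a_{t}|y_{[t]},a_{[t-1]},x_{t})$.

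The gap is in the first direction, at exactly the step the paper adds on top of Theorem~\ref{thm:dilb}. That argument gives $\sum_{t}I(x_{[t]};a_{t}|a_{[t-1]},u_{[t-1]},y_{[t]})\geq I(x_{[T]}\rightarrow u_{[T]}\Vert y_{[T]})$, with $u_{[t-1]}$ in the conditioning. The terms of $I(x_{[T]}\rightarrow a_{[T]}\Vert y_{[T]})$, however, are $I(x_{[t]};a_{t}|a_{[t-1]},y_{[t]})$. No data-processing step bridges this, and the encoder property you cite is spent only on a definitional identity. What you need is $I(x_{[t]};a_{t}|a_{[t-1]},y_{[t]})\geq I(x_{[t]};a_{t}|a_{[t-1]},u_{[t-1]},y_{[t]})$. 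Expand $I(x_{[t]};a_{t},u_{[t-1]}|a_{[t-1]},y_{[t]})$ and then $I(x_{[t]},a_{t};u_{[t-1]}|a_{[t-1]},y_{[t]})$, each in two ways. This shows the difference equals $I(a_{t};u_{[t-1]}|a_{[t-1]},y_{[t]})-I(a_{t};u_{[t-1]}|a_{[t-1]},x_{[t]},y_{[t]})$. The last term is zero because $a_{t}$ is drawn from $\Pr(a_{t}|a_{[t-1]},x_{[t]})$ with fresh randomness, so the difference is nonnegative.

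The paper instead asserts equality from the chain $u_{[t-1]}-(a_{[t-1]},y_{[t]})-(x_{[t]},a_{t})$. Since $x_{t}$ depends on $u_{t-1}$ through the plant, that chain is only guaranteed for deterministic decoders, so the inequality form above is the safer way to close the step. Relatedly, your claim that $u_{t}$ is conditionally independent of $x_{[T]}$ given $(a_{[t]},y_{[t]},u_{[t-1]})$ is false in closed loop, because $x_{t+1},\dots,x_{T}$ depend on $u_{t}$. Only $x_{[t]}$ is available, and that is all the argument of Theorem~\ref{thm:dilb} uses.
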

\begin{proof}
    See Section \ref{proof:proposition-rd-equivalence}.
\end{proof}

\section{Causal Lossy Compression with Side Information}

In this section, we consider the causal lossy compression problem of Gaussian-Markov sources with linear side information at the decoder. The causal lossy compression with the Gaussian-Markov source and the weighted mean-square error distortion is equivalent to the case with the LQG plant and the quadratic costs for the conditional directed information optimization. \cite{lqgwithsensor} shows this for the case without side information, and the arguments for the case with side information are similar, thus omitted here. We discuss the problem setting in Section \ref{sec:csc-problem-setting}. We show a lower bound and the optimality of linear policies in Section \ref{sec:csc-sdp-forms}. We present the SDP forms for the lower bound with the time-variant system in Section \ref{section:time-variant}, and with the time-invariant system in Section \ref{section:time-invariant}. The numerical simulations are presented  in the Section \ref{sec:csc-numerical}.

\subsection{Problem Setting}\label{sec:csc-problem-setting}

Consider the problem setting as depicted in Figure \ref{fig:CSC-SI}
\begin{figure}
    \centering
    \includegraphics[width=1\linewidth]{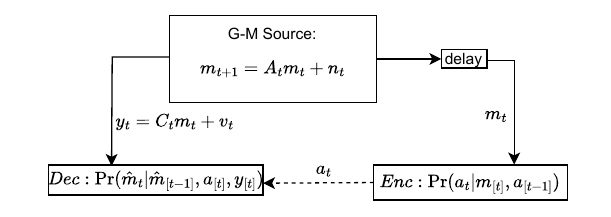}
    \caption{Causal lossy compression with Gaussian-Markov source and linear side information at the decoder.}
    \label{fig:CSC-SI}
\end{figure}
with the Gaussian-Markov source given as:
\begin{equation}\label{eqn:csc-sid}
    m_{t+1} = A_{t}m_{t}+n_{t}
\end{equation}
where $m_{t}\in\mathbb{R}^{n}$ is the state, $n_{t}\in\mathbb{R}^{n}$ are independent Gaussian noises with covariance matrix $N_{t}$. In this problem, there is one encoder and one decoder. The encoder observes the full $m_{t}$ at each time step and sends the message $a_{t}$ to the decoder with a measurable encoder $\Pr(a_{t}|m_{[t]},a_{[t-1]})$ with the prefix code. The decoder receives $a_{t}$ and a linear side information $y_{t} = C_{t}m_{t} + v_{t}$, where $C_{t}$ is a $l\times n$ matrix and $v_{t}\sim\mathcal{N}(0,V_{t})$ with dimension $l$, at each time step. The optimal decoder is known to be the conditional expectation when the distortion function is the weighted mean-square error(WMSE). Let $\hat{m}_{t}$ be the estimation of $m_{t}$ at time step $t$. The WMSE is defined as:
\begin{defn}[Weighted Mean-Square Error]\label{def:WMSE}
\begin{align*}\label{causal source}
    \mathbb{E}(m_{[T]},\hat{m}_{[T]}) = \sum_{t=1}^{T}\mathbb{E}\left[(m_{t}-\hat{m}_{t})^{\mathrm{T}}\mathrm{W}_{t}(m_{t}-\hat{m}_{t})\right]
\end{align*}
where $\mathrm{W}_{t}\succeq 0$.
\end{defn}

Let $a_{t}$ be the signal sent to the receiver at time step $t$. The measurable encoder is $\mathbb{P}(a_{[T]}\Vert m_{[T]}) = \{\Pr(a_{t}|m_{[t]}, a_{[t-1]})\}_{t\in[T]}$ and the optimal decoder is known to be the conditional expectation $\{\mathbb{E}[m_{t}|a_{[t]},y_{[t]}]\}_{t\in[T]}$. Let $\Delta = \mathbb{P}(a_{[T]}\Vert m_{[T]})\cdot \mathbb{P}(\hat{m}_{[T]}\Vert a_{[T]}, y_{[T]})$ be the set of measurable encoders and decoders, $\delta$ be any policy in $\Delta$, and $R^{CSC-SD} = \frac{1}{T}\mathbb{E}[l(a_{t})]$ be the average rate of codewords. We consider the following causal rate-distortion function:
\begin{equation}
    \inf_{\delta \in \Delta, \text{WMSE}\leq d} \lim_{T\rightarrow \infty} R^{CSC-SD} 
    \label{eqn:causal-source-coding-optimization-form}
\end{equation}

\subsection{Semidefinite Programming Forms}\label{sec:csc-sdp-forms}

In this subsection, we present SDP forms to compute a lower bound for the \eqref{eqn:causal-source-coding-optimization-form}. We start with a direct application of Theorem \ref{thm:dilb} and Theorem \ref{thm:gaussianmarkov}, showing that \eqref{eqn:causal-source-coding-optimization-form} can be lower bounded by optimizing the conditional directed information, and the optimal message is linear, as stated in the following Theorem.

\begin{thm}[CLC-SI]\label{thm:CSC-SI-Linear} Consider the Gaussian-Markov source \eqref{eqn:csc-sid}. For any measurable policy $\delta \in \Delta$, the optimization problem \eqref{eqn:causal-source-coding-optimization-form} is lower bounded by the following optimization problem for any $T\in \mathbb{N}$:
    \begin{align}
  \mathrm{R}_{CS}(d) = \inf_{\delta \in \Delta, \text{WMSE}\leq d}  
     &  I(m_{[T]}\rightarrow \hat{m}_{[T]}\Vert y_{[T]})
\end{align}
Moreover, the optimal policy is in the following policy set $\Delta_{1}$:
\begin{itemize}
       \item $a_{t} = F_{t}m_{t}+ h_{t}$ where $h_{t}\sim \mathcal{N}(0,H_{t})$ ;
   \item $\hat{m}_{t} = \mathbb{E}[m_{t}|a_{[t]},y_{[t]}]$.
   \end{itemize}
\end{thm}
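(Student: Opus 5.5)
The plan has two parts, matching the two claims of Theorem~\ref{thm:CSC-SI-Linear}: first the lower bound, then the reduction to $\Delta_{1}$. Throughout we view the source problem as a special case of the networked setting with $k=2$. Encoder~1 sees $y^{1}_{t}=m_{t}$. The second ``sensor'' is the side information $y^{2}_{t}=y_{t}=C_{t}m_{t}+v_{t}$, and it is forwarded to the decoder for free. The reconstruction $\hat m_{t}$ plays the role of the control $u_{t}$: take $B_{t}=0$, so the plant is the uncontrolled source \eqref{eqn:csc-sid}, and let the WMSE play the role of the quadratic cost. This last step uses the equivalence between LQG with quadratic cost and Gaussian--Markov sources with WMSE noted at the start of this section.

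\textbf{Lower bound.} We would apply Corollary~\ref{corollary:p2p}, which is the $S=\{1\}$, $\bar S=\{2\}$ instance of Theorem~\ref{thm:dilb}. Its proof carries over verbatim with $u_{t}$ replaced by $\hat m_{t}$. The chain is
\[
T R^{CSC-SD}\ \ge\ H(a_{[T]}\,|\,y_{[T]})\ \ge\ I(m_{[T]}\rightarrow a_{[T]}\Vert y_{[T]})\ \ge\ I(m_{[T]}\rightarrow \hat m_{[T]}\Vert y_{[T]}).
\]
The first inequality is the prefix-code bound together with the fact that conditioning reduces entropy. The second is the causal chain rule: since $a_t$ depends only on $(m_{[t]},a_{[t-1]})$, we get $I(m_{[t]};a_t\mid a_{[t-1]},y_{[t]})\le H(a_t\mid a_{[t-1]},y_{[t]})$, and summing over $t$ gives the directed-information form. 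The third is a data-processing step that uses the Markov structure $\hat m_{t}\sim\Pr(\cdot\,|\,a_{[t]},y_{[t]},\hat m_{[t-1]})$. Taking $\inf$ over $\delta\in\Delta$ with $\text{WMSE}\le d$ on both sides gives $\mathrm{R}_{CS}(d)$ as a lower bound for every $T$; normalization by $T$ is implicit.

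\textbf{Linear optimality.} Under the embedding, the full observation condition (Definition~\ref{definition:fullobservation}) holds trivially because $y^{1}_{t}=m_{t}$. We would apply the argument of Theorem~\ref{thm:gaussianmarkov} with weights $\lambda_{1}=1$ on the first term and $0$ on the other, so the objective is exactly $I(m_{[T]}\rightarrow\hat m_{[T]}\Vert y_{[T]})$. Here it is important to order the sensors so that the encoder carrying the additive noise $h_t$ is the one observing $m_t$; the side-information sensor is not encoded at all.

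The Gaussian replacement step needs the most care. Given any $\mathbb{P}$, we construct $\mathbb{G}$ with the same second moments of $(m_{[T]},y_{[T]},\hat m_{[T]})$. The conditional directed information under $\mathbb{G}$ is no larger, by the maximum-entropy property of Gaussians applied to each conditional term. The WMSE is unchanged because it depends only on second moments. We then realize $\mathbb{G}$ by a linear policy $\hat m_{t}=F_{t}m_{[t]}+G_{t}y_{[t]}+N_{t}\hat m_{[t-1]}+h_{t}$.

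To reach the form in $\Delta_1$, we set $a_{t}:=F_{t}m_{t}+h_{t}$. This requires collapsing the dependence on $m_{[t-1]}$. The Markov property of the source makes past states enter only through $\hat m_{[t-1]}$ and $y_{[t]}$ without increasing the directed information. Alternatively, one can fold the past-state dependence into the innovation, as in \cite{LQGSDP}.

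Finally, we replace the decoder by $\mathbb{E}[m_t\mid a_{[t]},y_{[t]}]$. This can only lower the WMSE. By data processing, $I(m_{[T]}\rightarrow\hat m_{[T]}\Vert y_{[T]})\ge I(m_{[T]}\rightarrow a_{[T]}\Vert y_{[T]})$, and the latter equals the objective of the new policy, matching Proposition~\ref{prop:r-d-equivalent}.

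\textbf{Main obstacle.} The step most likely to cause trouble is reducing the linear encoder from $F_t m_{[t]}$ to $F_t m_t+h_t$ with only the current state. If Theorem~\ref{thm:gaussianmarkov}'s structure ($c^{k}_{t}=F^{k}_{t}y^{k}_{t}+h_{t}$) is mapped with the source sensor as index $k$, this comes out directly. If not, we would first try the innovation argument: replace $m_{[t]}$ by $m_t$ while conditioning on $(a_{[t-1]},y_{[t]})$, and show that the conditional mutual information is preserved.
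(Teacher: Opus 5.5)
Your route is the paper's. The paper proves this theorem in two lines, citing Theorem~\ref{thm:dilb} (via Corollary~\ref{corollary:p2p}) for the bound and Theorem~\ref{thm:gaussianmarkov} for the linear structure. Your embedding (side information as an unencoded sensor, the source as sensor $k$, WMSE in place of the LQG cost), followed by the switch to the MMSE decoder, spells out exactly that reduction. Several details you supply are wrong as written, though.

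\textbf{The lower-bound chain.} The chain $T R^{CSC-SD}\ge H(a_{[T]}|y_{[T]})\ge I(m_{[T]}\rightarrow a_{[T]}\Vert y_{[T]})$ fails at the middle step. The quantity $H(a_{[T]}|y_{[T]})=\sum_t H(a_t|a_{[t-1]},y_{[T]})$ conditions on future side information, which can reveal past codewords. For example, take $T=2$, $C_1=0$, $A_1=1$, $C_2=1$ with $N_1,V_2$ small, $a_1=\mathrm{sign}(m_1)$ and $a_2$ constant. The directed information is $1$ bit, while $H(a_{[2]}|y_{[2]})$ is near $0$. Summing your per-step bounds gives $\sum_t H(a_t|a_{[t-1]},y_{[t]})$, which is at least $H(a_{[T]}|y_{[T]})$, not at most. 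You must bound per time step, $\mathbb{E}[l(a_t)]\ge H(a_t)\ge H(a_t|a_{[t-1]},y_{[t]})\ge I(m_{[t]};a_t|a_{[t-1]},y_{[t]})$. The passage from $a$ to $\hat m$ is then the telescoping argument of Section~\ref{proof:dithm}, not a plain data-processing step.

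\textbf{The weights.} Put the side information as sensor $1$ and the source as sensor $2=k$, which is the ordering you rightly insist on. Then \eqref{eqn:weighted sum lower bound} reads $\lambda_1 I(y^{\{1,2\}}_{[T]}\rightarrow u_{[T]})+(\lambda_2-\lambda_1)I(y^{2}_{[T]}\rightarrow u_{[T]}\Vert y^{1}_{[T]})$. You therefore need $\lambda_1=0$, $\lambda_2=1$. Putting weight $1$ on the first term gives the unconditioned $I((y,m)_{[T]}\rightarrow\hat m_{[T]})$.

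\textbf{The Gaussian step.} This is the more substantive issue. ``Maximum entropy applied to each conditional term'' does not justify it. Each term is a difference of two conditional entropies, and Gaussianizing with matched second moments can only raise both. The conditioning also contains the non-Gaussian past reconstructions $\hat m_{[t-1]}$, so even $h(m_t|\hat m_{[t-1]},y_{[t]})$ is not preserved. No term-by-term comparison is therefore available. Lemma~\ref{lemma:lowerboundGaussian1} instead uses a telescoping rewrite of the whole sum in which the kernel $\mathbb{P}(y_{t+1}|y_t,u_t)$ appears. By full observation (Lemma~\ref{lemma:sameconditionaldistribution}), this kernel is the same Gaussian under $\mathbb{P}$ and $\mathbb{G}$. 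Its log-density integrates to a second-moment quantity, and what remains is a KL divergence.

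\textbf{The encoder memory.} You do not need to realize $F_t m_{[t]}$ and then ``collapse'' the past. The Markov property of the source does not by itself remove the direct dependence of $F_t m_{[t]}$ on past states. The paper first drops $m_{[t-1]}$ inside the mutual information (inequality \eqref{ineq:1lower}). It then builds the lmmse of $\hat m_t$ on $(y_{[t]},m_t,\hat m_{[t-1]})$ only, and Lemma~\ref{lemma:distributionidentity} matches marginals on that reduced set.

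\textbf{The decoder replacement.} If you invoke Theorem~\ref{thm:gaussianmarkov} with this ordering, the Gaussian step and the encoder-memory reduction come for free, and your decoder replacement finishes the proof. State it as $I(m_{[T]}\rightarrow\hat m^{\mathrm{new}}_{[T]}\Vert y_{[T]})\le I(m_{[T]}\rightarrow a_{[T]}\Vert y_{[T]})$. The right-hand side equals the value of the $\Gamma_1$ policy, because $a_t$ and $\hat m_t$ determine each other given $(y_{[t]},\hat m_{[t-1]})$.
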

\begin{proof}
    The conditional directed information lower bound is from Theorem \ref{thm:dilb}, and the proof for the optimal encoder is followed by Theorem \ref{thm:gaussianmarkov}.
\end{proof}

The remaining part of the section applies the Semidefinite Programming(SDP) approach to explicitly compute the optimal encoder with respect to the required WMSE. The SDP formulation for the LQG plant with side information has been discussed in \cite{LQGsideinfo}. However, the case with the singular noise covariance matrix $N_{t}$ is not included. In this paper, we present SDP forms with the singular noise matrix and full-rank matrices $A_{t}$ for both time-variant and time-invariant systems.
For the time-variant system, we consider a finite time step $T$, and the objective is the directed information term with the WMSE constraint. For the time-invariant system, we consider taking the limit with respect to $T$ and show the existence of the Ricatti recursion when $A$ has no eigenvalues on the unit circle with the assumption that $(A,\mathrm{W})$ are detectable. We include results for the time-variant system in Section \ref{section:time-variant} and the time-invariant system in Section \ref{section:time-invariant}.

\subsubsection{Time-Variant System}\label{section:time-variant}

Consider a time-variant system where the system parameters are changing with respect to time, and finite $T$.
Define the lower bound optimization problem for the time-variant system as:
\begin{align}\label{eqn:time-variant-cs}
  \mathrm{R}_{CS}^{TV} (d) = \inf_{\delta \in \Delta, \text{WMSE}\leq d}  
     &  I(m_{[T]}\rightarrow \hat{m}_{[T]}\Vert y_{[T]})
\end{align}
Denote $\text{Cov}\left[m|a,y\right] = \mathbb{E}[(m-\mathbb{E}[m|a,y])(m-\mathbb{E}[m|a,y])^{\mathrm{T}}]$, and define the two covariance matrices as follows:
\begin{align*}
    P_{t+1|t} &= \text{Cov}\left[m_{t+1}|a_{[t]},y_{[t]}\right]\\
    P_{t+1|t}^{+} &= \text{Cov}\left[m_{t+1}|a_{[t]},y_{[t+1]}\right]
\end{align*}
Note that $y_{t} = C_{t}m_{t} + v_{t}$.
By the standard Kalman filter equations, we have the following equalities:
\begin{align}
    P_{t+1|t} &= A_{t}P_{t|t}A_{t}^{\mathrm{T}} + N_{t} \label{eqn:kalman-update}\\
    (P_{t+1|t}^{+})^{-1} &= P_{t+1|t}^{-1} + C_{t+1}^{\mathrm{T}}V_{t+1}^{-1}C_{t+1} \label{eqn:sideinfo-equation}
\end{align}
We have the following two SDP forms to solve the $\mathrm{R}_{CS}^{TV}$

\begin{thm}[SDP Time-Varying]\label{thm:TV-SDP}

If $N_{t} =L_{t}L_{t}^{\mathrm{T}} $ is a singular matrix and $A_{t}$s are full rank matrices, the $\mathrm{R}_{CS}^{TV}$ can be transformed into the following SDP:

\begin{align}
    \inf_{\{\Pi_{t},P_{t|t}\}_{t\in[T]}} &c_{0}- \sum_{t=1}^{T-1}( \frac{1}{2}\log|\Pi_{t}|\nonumber\\
    & +\frac{1}{2}\log|V_{t+1}+C_{t+1}(A_{t}P_{t|t}A_{t}^{\mathrm{T}}+N_{t})C_{t+1}^{\mathrm{T}}|)\\
    \text{s.t.}&\qquad \Pi_{t}\succ 0 \label{directSDP2-c1}\\
    &\qquad \begin{bmatrix}
       I - \Pi_{t} & L_{t}^{\mathrm{T}} \\
        L_{t} & A_{t}P_{t|t}A_{t}^{\mathrm{T}}+L_{t}L_{t}^{\mathrm{T}}
    \end{bmatrix} \succeq 0 \label{directSDP2-c2}\\
        &\qquad P_{t+1|t} = A_{t}P_{t|t} A_{t}^{\mathrm{T}} + L_{t}L_{t}^{T} \label{directSDP2-c3}  \\
    &\qquad \begin{bmatrix}
        P_{t|t-1}-P_{t|t} & P_{t|t-1}C_{t}^{\mathrm{T}}\\
        C_{t}P_{t|t-1} & C_{t}P_{t|t-1}C_{t}^{\mathrm{T}} + V_{t}
    \end{bmatrix}\succeq 0
    \label{directSDP2-c4}\\
    &\qquad (P^{+}_{1|0})^{-1} = P_{1|0}^{-1} + C_{1}^{\mathrm{T}}V_{1}^{-1}C_{1}\\
    &\qquad  P_{T|T} = \Pi_{T}, \quad P_{1|1} \preceq P_{1|0}^{+} \\
    &\qquad \sum_{t=1}^{T}\text{Tr}(P_{t|t}\mathrm{W}_{t})\leq d
\end{align}
where $c_{0} = -\frac{1}{2}\log|P_{T|T}|+\frac{1}{2}\log|P_{1|0}^{+}| + \sum_{t=1}^{T-1}(\frac{1}{2}\log|V_{t+1}| + \log|A_{t}|)$, \eqref{directSDP2-c2}\eqref{directSDP2-c3} hold for $t\in[T-1]$ and \eqref{directSDP2-c1}\eqref{directSDP2-c4} holds for $t\in[T]$.

Moreover, the optimal message $a_{t} = F_{t} m_{t} + h_{t}$ with $h_{t}\sim \mathcal{N}(0,H_{t})$ can be found through singular value decomposition of $ P_{t|t}^{-1} - (P_{t|t-1}^{+})^{-1}$:
\begin{align}
    P_{t|t}^{-1} - (P_{t|t-1}^{+})^{-1} = F_{t}^{\mathrm{T}}H_{t}^{-1}F_{t}
\end{align}
where rank($H_{t}$) = rank($P_{t|t}^{-1} - (P_{t|t-1}^{+})^{-1}$).
\end{thm}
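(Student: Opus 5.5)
Theorem \ref{thm:CSC-SI-Linear} lets us restrict to the linear Gaussian class $\Delta_{1}$, so I would start there. Under $a_{t}=F_{t}m_{t}+h_{t}$ and $\hat m_{t}=\mathbb{E}[m_{t}|a_{[t]},y_{[t]}]$, all variables are jointly Gaussian. The decoder is a Kalman filter that processes, at each step, first $y_{t}$ and then $a_{t}$. This gives the recursion \eqref{eqn:kalman-update}, the side-information update \eqref{eqn:sideinfo-equation}, and the measurement update $P_{t|t}^{-1}=(P^{+}_{t|t-1})^{-1}+F_{t}^{\mathrm{T}}H_{t}^{-1}F_{t}$.

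I would rewrite the objective in terms of these covariances. Since $\hat m_{[t]}$ and $(a_{[t]},y_{[t]})$ generate the same innovations, the conditional directed information equals $I(m_{[T]}\rightarrow a_{[T]}\Vert y_{[T]})$; Proposition \ref{prop:r-d-equivalent} covers this in the same spirit. Each term then reduces by the Markov structure to
\[
I(m_{t};a_{t}|a_{[t-1]},y_{[t]})=\tfrac12\log|P^{+}_{t|t-1}|-\tfrac12\log|P_{t|t}|.
\]
Substituting $|P^{+}_{t+1|t}|=|P_{t+1|t}|\,|V_{t+1}|/|V_{t+1}+C_{t+1}P_{t+1|t}C_{t+1}^{\mathrm{T}}|$ (determinant lemma applied to \eqref{eqn:sideinfo-equation}) and regrouping the telescoping sum produces the boundary terms $\frac12\log|P^{+}_{1|0}|-\frac12\log|P_{T|T}|$. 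It also produces, for $t\in[T-1]$, the terms $\frac12\log|V_{t+1}|$, $-\frac12\log|V_{t+1}+C_{t+1}(A_{t}P_{t|t}A_{t}^{\mathrm{T}}+N_{t})C_{t+1}^{\mathrm{T}}|$, and $\frac12\log|A_{t}P_{t|t}A_{t}^{\mathrm{T}}+N_{t}|-\frac12\log|P_{t|t}|$.

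The main obstacle is the last difference, because $-\log|P_{t|t}|$ is convex rather than concave and $N_{t}$ may be singular. I would write $A_{t}P_{t|t}A_{t}^{\mathrm{T}}+L_{t}L_{t}^{\mathrm{T}}$ and use full rank of $A_{t}$ to factor
\[
\log|A_{t}P_{t|t}A_{t}^{\mathrm{T}}+L_{t}L_{t}^{\mathrm{T}}|-\log|P_{t|t}|=2\log|A_{t}|+\log|A_{t}P_{t|t}A_{t}^{\mathrm{T}}|^{-1}|A_{t}P_{t|t}A_{t}^{\mathrm{T}}+L_{t}L_{t}^{\mathrm{T}}|.
\]
By Sylvester's identity the last factor is $-\log|I-L_{t}^{\mathrm{T}}(A_{t}P_{t|t}A_{t}^{\mathrm{T}}+L_{t}L_{t}^{\mathrm{T}})^{-1}L_{t}|$; this form is well defined even when $L_{t}L_{t}^{\mathrm{T}}$ is singular.

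That quantity equals $\inf\{-\log|\Pi_{t}|:\Pi_{t}\succ0,\ \Pi_{t}\preceq I-L_{t}^{\mathrm{T}}(\cdot)^{-1}L_{t}\}$, and a Schur complement turns the constraint into the LMI \eqref{directSDP2-c2}. This introduces the slack $\Pi_{t}$, and at the optimum $\Pi_{t}$ attains the bound. I would check that the constant $\log|A_{t}|$ matches $c_{0}$, taking care with the factor $\frac12$. The remaining log-det term $-\frac12\log|V+C(\cdot)C^{\mathrm{T}}|$ is concave in $P_{t|t}$, and the last-step term $-\frac12\log|P_{T|T}|$ is handled by setting $P_{T|T}=\Pi_{T}$.

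Next I would relax the feasibility constraint. The requirement $P_{t|t}^{-1}\succeq (P^{+}_{t|t-1})^{-1}$, which is realizable by some $(F_{t},H_{t})$, is equivalent to $P_{t|t}\preceq P^{+}_{t|t-1}=P_{t|t-1}-P_{t|t-1}C_{t}^{\mathrm{T}}(C_{t}P_{t|t-1}C_{t}^{\mathrm{T}}+V_{t})^{-1}C_{t}P_{t|t-1}$. A Schur complement turns this into \eqref{directSDP2-c4}, and the initial constraint $P_{1|1}\preceq P^{+}_{1|0}$ follows in the same way. The distortion is $\sum\mathrm{Tr}(P_{t|t}\mathrm{W}_{t})$ because the decoder is the MMSE estimator.

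Conversely, given any feasible $P_{t|t}$, I would take the decomposition $P_{t|t}^{-1}-(P^{+}_{t|t-1})^{-1}=F_{t}^{\mathrm{T}}H_{t}^{-1}F_{t}$ (SVD, with rank matching) to recover a policy in $\Delta_{1}$ that achieves the same value. This shows the SDP value equals $\mathrm{R}^{TV}_{CS}$.
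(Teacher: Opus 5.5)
Your proposal is correct and follows essentially the same route as the paper. Both restrict to $\Delta_{1}$ via Theorem~\ref{thm:CSC-SI-Linear} and Proposition~\ref{prop:r-d-equivalent}, telescope $\frac12\log|P^{+}_{t|t-1}|-\frac12\log|P_{t|t}|$ using the determinant lemma, factor out $|A_{t}|$ and introduce $\Pi_{t}$ through the matrix inversion lemma and a Schur complement, and write $P_{t|t}\preceq P^{+}_{t|t-1}$ as the LMI \eqref{directSDP2-c4}.

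One small correction on convexity. The term $-\frac12\log|V_{t+1}+C_{t+1}(A_{t}P_{t|t}A_{t}^{\mathrm{T}}+N_{t})C_{t+1}^{\mathrm{T}}|$ is convex, not concave, in $P_{t|t}$, and that is exactly why it can stay in the minimized objective. Likewise, the term that genuinely needs the $\Pi_{t}$ reformulation is the concave $\frac12\log|A_{t}P_{t|t}A_{t}^{\mathrm{T}}+N_{t}|$, not the convex $-\frac12\log|P_{t|t}|$.
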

\begin{proof}
    See Section \ref{section:proof-TV-SDP}.
\end{proof}

\subsubsection{Time-Invariant System}\label{section:time-invariant}

Consider a time-invariant system where $A_{t} = A$, $N_{t} = N$, $C_{t} = C$, $V_{t} = V$ and $\mathrm{W}_{t} = \mathrm{W}$ for all $t$.
Note that we need to characterize the long-term average behavior of the system. We assume that $(A,\mathrm{W})$ is detectable, so that the weight matrix $\mathrm{W}$ can capture the unstable modes of $A$. 
Define the lower bound optimization problem for the time-invariant system as:
\begin{align}
  \mathrm{R}_{CS}^{TI}(d) = \lim_{T\rightarrow \infty}\inf_{
      \delta \in \Delta,
      \frac{1}{T}\text{WMSE}\leq d
 }
     &  \frac{1}{T}I(m_{[T]}\rightarrow \hat{m}_{[T]}\Vert y_{[T]})
\end{align}


We have the following two SDP forms to solve $ \mathrm{R}_{CS}^{TI}$ as the time-invariant system.
\begin{thm}[SDP Time-Invariant]\label{thm:TI-SDP}
 
 Assume that $(A,\mathrm{W})$ is detectable.
If $N = LL^{\mathrm{T}}$ is a singular matrix and $A$ is a full rank matrix without any eigenvalues on the unit circle, then we have the following SDP formulation for $\mathrm{R}_{CS}^{TI}(d)$:
    \begin{align}
    \inf_{\{\Pi,P\}} &- \frac{1}{2}\log|V+C(APA^{\mathrm{T}}+N)C^{\mathrm{T}}|\nonumber\\
    &- \frac{1}{2}\log|\Pi|+\frac{1}{2}\log|V| + \log|A|\label{obj:directSDP}\\
    \text{s.t.}&\qquad \Pi\succ 0, P\succ 0 \\
    &\qquad \begin{bmatrix}
        I - \Pi & L^{\mathrm{T}} \\
        L & APA^{\mathrm{T}}+LL^{\mathrm{T}}
    \end{bmatrix} \succeq 0 \\
    &\qquad \begin{bmatrix}
        \Tilde{P}-P & \Tilde{P}C^{\mathrm{T}}\\
        C\Tilde{P} & C\Tilde{P}C^{\mathrm{T}} + V
    \end{bmatrix}\succeq 0\\
    &\qquad \Tilde{P} = APA^{\mathrm{T}} + LL^{\mathrm{T}} \\
    &\qquad \text{Tr}(P\mathrm{W})\leq d
    \end{align}
    Moreover, the optimal message $a_{t} = F m_{t} + h$ with $h\sim \mathcal{N}(0,H)$ can be found through singular value decomposition of $P^{-1} - \Tilde{P}^{-1}-C^{\mathrm{T}}V^{-1}C$:
\begin{align}
    P^{-1} - \Tilde{P}^{-1}-C^{\mathrm{T}}V^{-1}C = F^{\mathrm{T}}H^{-1}F
\end{align}
where $\Tilde{P} = APA^{\mathrm{T}} +N$ and rank($H$) = rank($P^{-1} - \Tilde{P}^{-1}-C^{\mathrm{T}}V^{-1}C$).
\end{thm}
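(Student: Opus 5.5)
The plan is to derive Theorem~\ref{thm:TI-SDP} from the time-varying formulation of Theorem~\ref{thm:TV-SDP}, specialized to constant parameters, followed by a time-averaging and stationarity argument. By Theorem~\ref{thm:CSC-SI-Linear}, for every finite $T$ we may restrict to $a_t=F_tm_t+h_t$ with the Kalman-filter decoder. The normalized cost $\frac1T I(m_{[T]}\rightarrow\hat m_{[T]}\Vert y_{[T]})$ is then given by the objective of Theorem~\ref{thm:TV-SDP} divided by $T$.

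The per-step term comes from the chain rule for the Gaussian conditional directed information,
\[
I(m_{[t]};a_t\mid a_{[t-1]},y_{[t]})=\tfrac12\log|P^{+}_{t|t-1}|-\tfrac12\log|P_{t|t}|.
\]
With $(P_{t|t-1}^{+})^{-1}=P_{t|t-1}^{-1}+C^{\mathrm T}V^{-1}C$, the matrix determinant lemma gives
\[
\log|P^{+}_{t|t-1}|=\log|P_{t|t-1}|+\log|V|-\log|V+CP_{t|t-1}C^{\mathrm T}|.
\]
The singular $N=LL^{\mathrm T}$ prevents writing $\log|APA^{\mathrm T}+N|$ directly in concave form. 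I would reuse the trick of Theorem~\ref{thm:TV-SDP}: write
\[
\log|APA^{\mathrm T}+LL^{\mathrm T}|-\log|P|=2\log|A|-\log\bigl|I-L^{\mathrm T}(APA^{\mathrm T}+LL^{\mathrm T})^{-1}L\bigr|
\]
(by Sylvester/Woodbury, using $A$ full rank). Then introduce the slack $\Pi\preceq I-L^{\mathrm T}\tilde P^{-1}L$, which is exactly the Schur complement LMI in the statement. Maximizing $\log|\Pi|$ makes this tight, so the stage cost becomes the objective \eqref{obj:directSDP}. The constraint $P_{t|t}\preceq P^{+}_{t|t-1}$ is encoded by the second LMI, again via a Schur complement with $\tilde P$.

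Next I pass to $T\to\infty$. For the lower bound direction, given any feasible finite-$T$ sequence $\{P_{t|t},\Pi_t\}$, I form the averages $\bar P=\frac1T\sum P_{t|t}$ and $\bar\Pi=\frac1T\sum\Pi_t$. All constraints are jointly convex (LMIs, with $\tilde P$ affine in $P$), so the averages are feasible for the stationary SDP. The objective terms $-\log|\cdot|$ are convex, so Jensen gives a stationary value no larger than the time average. The boundary terms in $c_0$, namely $\log|P_{T|T}|$ and $\log|P_{1|0}^{+}|$, contribute $O(1)/T$ provided $P_{T|T}$ stays bounded below and above. The upper bound of $P_{T|T}$ follows from the distortion constraint together with detectability of $(A,\mathrm W)$. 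A lower bound away from singularity is the delicate part, since $N$ is singular: I would argue it from $A$ having no unit-circle eigenvalues and $P_{t|t}\succeq$ the side-information-only Kalman covariance restricted appropriately, or else discard the $-\log|P_{T|T}|$ term by bounding it via $P_{T|T}\preceq$ a bounded matrix, which gives a sign in the right direction.

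For achievability, a stationary optimal $(P,\Pi)$ is realized by the time-invariant encoder with $F^{\mathrm T}H^{-1}F=P^{-1}-\tilde P^{-1}-C^{\mathrm T}V^{-1}C$, which is PSD by the second LMI. The filter covariance then converges to $P$; here no unit-circle eigenvalues and detectability guarantee a stabilizing Riccati solution, so the average rate tends to the SDP value.

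I expect the main obstacle to be the limit interchange, namely uniformly controlling the boundary term $\frac1T\log|P_{T|T}|$ and the convergence of the Riccati recursion under a singular $N$. I would attack it with the averaging/Jensen argument above together with standard Riccati convergence results for detectable systems.
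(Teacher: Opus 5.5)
Your skeleton matches the paper's. For the converse you average the finite-horizon covariances and apply Jensen. For achievability you run the stationary encoder $F^{\mathrm T}H^{-1}F=P^{-1}-\tilde P^{-1}-C^{\mathrm T}V^{-1}C$ and use Riccati convergence plus Ces\`aro. Averaging the slack $\Pi_t$ along with $P_{t|t}$ is an equivalent variant of the paper's step, which eliminates the slack and applies Jensen to $\tfrac12\log|I+L^{\mathrm T}A^{-\mathrm T}P^{-1}A^{-1}L|-\tfrac12\log|V+C(APA^{\mathrm T}+N)C^{\mathrm T}|$. However, two steps fail as written.

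\textbf{First gap: the averages are not feasible.} The side-information LMI at time $t$ couples $P_{t|t}$ with $P_{t|t-1}=AP_{t-1|t-1}A^{\mathrm T}+N$, i.e.\ with the \emph{previous} filtered covariance. Averaging it therefore gives the stationary LMI with $\tilde P$ replaced by
\[
\frac1T\sum_{t=1}^{T}P_{t|t-1}=A\bar P_TA^{\mathrm T}+N+\frac1T\bigl(P_{1|0}-AP_{T|T}A^{\mathrm T}-N\bigr),\qquad \bar P_T=\frac1T\sum_{t=1}^{T}P_{t|t}.
\]
The LMI is monotone in $\tilde P$, so you may drop $-AP_{T|T}A^{\mathrm T}/T\preceq 0$. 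But the remaining perturbation $(P_{1|0}-N)/T$ vanishes only as $T\to\infty$, so joint convexity alone yields no exactly feasible point. You need the limiting argument the paper uses:
\begin{itemize}
\item boundedness of the averages, which is where detectability of $(A,\mathrm W)$ enters via boundedness of $\mathcal D_0$, since $\mathrm{Tr}(\bar P_T\mathrm W)\le d$ controls only the directions seen by $\mathrm W$;
\item a convergent subsequence $\bar P_{T_i}\to\bar P_\infty\in\mathcal D_0$;
\item closedness of the constraint set and lower semicontinuity of the objective.
\end{itemize}
For the boundary term you only need $\limsup_T\frac1T\log|P_{T|T}|\le 0$; the paper handles the corresponding terms in Lemma~\ref{lemma:limsup}. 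A uniform bound on $P_{T|T}$ does not follow from an averaged distortion constraint. A lower bound on $P_{T|T}$ is irrelevant, since a small $P_{T|T}$ only increases $-\tfrac12\log|P_{T|T}|$.

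\textbf{Second gap: detectability in the achievability step.} The assumed detectability of $(A,\mathrm W)$ says nothing about the observation model of the encoder you construct. Convergence of the filter Riccati recursion to $P$ requires $A$ to be detectable through the combined observations $(F,C)$, and this must be \emph{proved}. This is the paper's Lemma~\ref{lemma:Ricatti-Convergence}, stated there as detectability of $(A,F)$. Suppose $Av=\lambda v$ with $|\lambda|>1$, $Fv=0$ and $Cv=0$. Using $P^{-1}=\tilde P^{-1}+F^{\mathrm T}H^{-1}F+C^{\mathrm T}V^{-1}C$ and $\tilde P\succeq APA^{\mathrm T}$,
\[
v^{*}P^{-1}v=v^{*}\tilde P^{-1}v\le v^{*}(APA^{\mathrm T})^{-1}v=|\lambda|^{-2}\,v^{*}P^{-1}v<v^{*}P^{-1}v,
\]
a contradiction. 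This is exactly where the no-unit-circle hypothesis is used: for $|\lambda|=1$ the last inequality is not strict. The same hypothesis also rules out unreachable unit-circle modes of $(A,L)$, which Riccati convergence results require when $N$ is singular. Without this lemma your achievability direction is incomplete; with it, your Ces\`aro step finishes the proof as in the paper.
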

\begin{proof}
    See Section \ref{section:proof-TI-SDP}
\end{proof}

\label{section:CSC-side-info}

\subsection{Numerical Simulation}\label{sec:csc-numerical}

In this subsection, we present numerical simulations of the causal rate distortion function with side information for the Gaussian-Markov sources as Figure \ref{fig:numerical_simulation}. We include the case when the noise covariance matrix is singular. The parameters are chosen to be the same as the side information paper \cite{LQGsideinfo} to compare with our new result when the noise matrix is singular.
\begin{align*}
    A &= \begin{bmatrix}
    0.12 & 0.63 & -0.52 & 0.33\\
    0.26 & -1.28 & 1.57 & 1.13\\
    -1.77 & -0.3 & 0.77 & 0.25\\
    -0.16 & 0.2 & -0.58 & 0.56
        \end{bmatrix}\\
    N &= \begin{bmatrix}
    4.94 & -0.1 & 1.29 & 0.35\\
    -0.1 & 5.55 & 2.07 & 0.31\\
    1.29 & 2.07 & 2.02 & 1.43\\
    0.35 & 0.31 & 1.43 & 3.1
        \end{bmatrix}
\end{align*}
The weight matrix for the WMSE is chosen as the control gain computed from the Riccati equation with $Q = I$, $R = I$, and
\begin{align*}
    B &= \begin{bmatrix}
    0.66 & -0.58 & 0.03 & -0.2\\
    2.61 & -0.91 & 0.87 & -0.07\\
    -0.64 & -1.12 & -0.19 & 0.61\\
    0.93 & 0.58 & -1.18 & -1.21
        \end{bmatrix}.
\end{align*}
The side information is chosen as $C = I$ and $V = 10I$.
We apply Cholesky decomposition to the noise matrix $N$ and use the first $i$ column to generate the rank $i$ noise covariance matrix.
We simulate the case where the noise covariance matrix has different ranks. The simulations imply that the higher the rank, the higher the rate.

\begin{figure}[!htb]
    \centering
    \includegraphics[scale = 0.43]{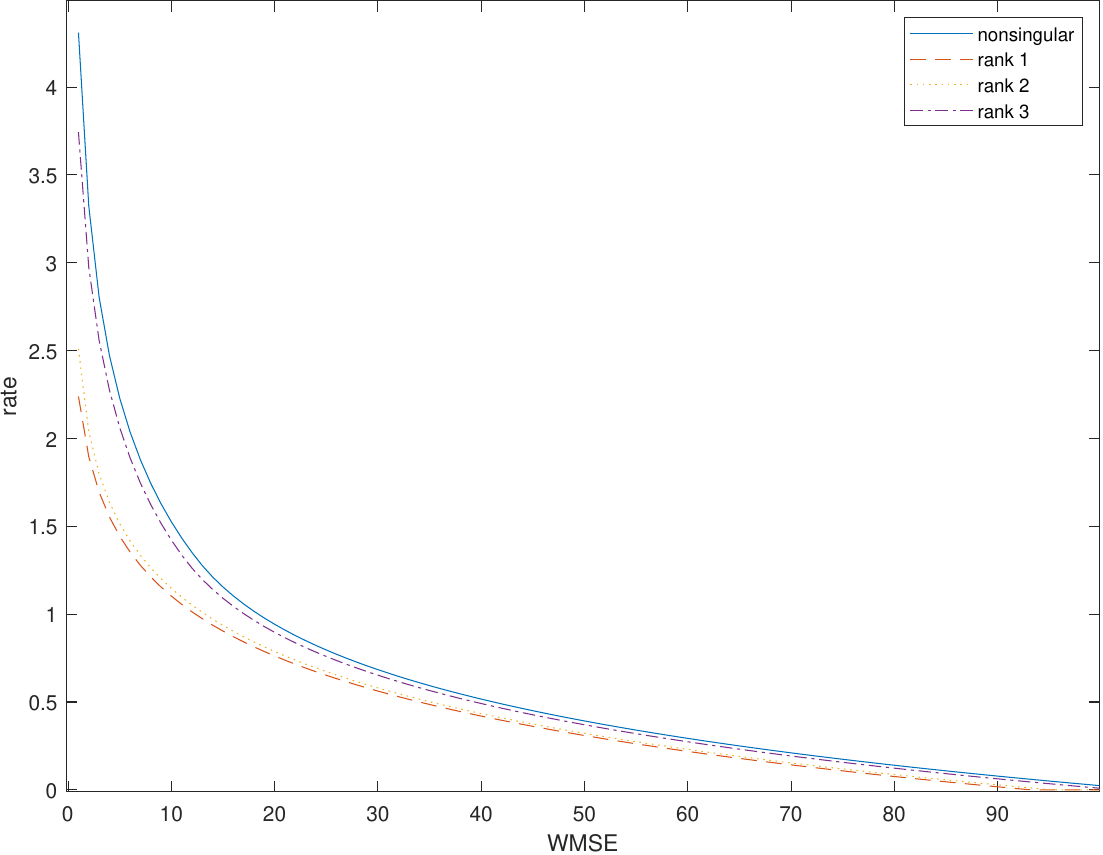}
    \caption{Numerical simulations of the causal rate distortion function for Gaussian-Markov sources with side information. The rank $i$ means the rank of the noise covariance matrix is $i$.}
    \label{fig:numerical_simulation}
\end{figure}
\section{Proofs}
This section provides the proofs for Theorems \ref{thm:dilb}, \ref{thm:gaussianmarkov}, \ref{thm:TV-SDP}, and \ref{thm:TI-SDP} in Sections \ref{proof:dithm}, \ref{proof:gaussianmarkov}, \ref{section:proof-TV-SDP} and \ref{section:proof-TI-SDP} respectively. Proof of Lemmas can be found in Section \ref{section:proof-lemmas}. Proof of the Proposition \ref{prop:r-d-equivalent} is in Section \ref{proof:proposition-rd-equivalence}.

\subsection{Proof of Theorem \ref{thm:dilb}}\label{proof:dithm}
\begin{proof}
    Let $S\subseteq [k]$, we have

    \begin{align*}
       \sum_{i\in S} \mathbb{E}[l(a^{i}_{t})] 
        &\geq H(a^{S}_{t}) \\
            &\geq H(a^{S}_{t}|a^{S}_{[t-1]},u_{[t-1]},y^{\bar{S}}_{[t]}) \\
            &\geq I(y^{S}_{[t]};a^{S}_{t}|a^{S}_{[t-1]},u_{[t-1]},y^{\bar{S}}_{[t]})
    \end{align*}
where the first inequality follows from the use of a prefix code.
    Then, we aim to prove
    \begin{align}\label{align:lowerboundinequality}
        \sum_{t=1}^{T}I(y^{S}_{[t]};a^{S}_{t}|a^{S}_{[t-1]},u_{[t-1]},y^{\bar{S}}_{[t]})\geq \sum_{t=1}^{T}I(y^{S}_{[t]};u_{t}|u_{[t-1]},y^{\bar{S}}_{[t]})
    \end{align}

    Note that $u_{t}$ $\sim$ $\Pr(u_{t}|u_{[t-1]},a^{[k]}_{[t]})$ and $a^{i}_{t}$ $\sim$ $\Pr(a^{i}_{t}|y^{i}_{[t]},a^{i}_{[t-1]})$ for $i$$\in $$[k]$, we have that $y^{S}_{[t]}-(a^{S}_{[t]},u_{[t-1]},y^{\bar{S}}_{[t]})-u_{t}$ is Markov and hence $I(y^{S}_{[t]};u_{t}|a^{S}_{[t]},u_{[t-1]},y^{\bar{S}}_{[t]}) = 0$.
    
    Therefore,
    \begin{align*}
        &I(y^{S}_{[t]};a^{S}_{t}|a^{S}_{[t-1]},u_{[t-1]},y^{\bar{S}}_{[t]})\\
        = &I(y^{S}_{[t]};a^{S}_{t}|a^{S}_{[t-1]},u_{[t-1]},y^{\bar{S}}_{[t]}) + 
        I(y^{S}_{[t]};u_{t}|a^{S}_{[t]},u_{[t-1]},y^{\bar{S}}_{[t]}) \\
        = &I(y^{S}_{[t]};a^{S}_{t},u_{t}|a^{S}_{[t-1]},u_{[t-1]},y^{\bar{S}}_{[t]})
        \end{align*}

    Define 
    \begin{align*}
        \phi_{t} = I(y^{S}_{[t]};a^{S}_{t}|a^{S}_{[t-1]},u_{[t-1]},y^{\bar{S}}_{[t]}) - I(y^{S}_{[t]};u_{t}|u_{[t-1]},y^{\bar{S}}_{[t]}).
    \end{align*}
    Our goal is to show that $\sum_{t=1}^{T}\phi_{t} \geq 0$. Applying the chain rule twice on $I(y^{S}_{[t]};a^{S}_{[t]},u_{t}|u_{[t-1]},y^{\bar{S}}_{[t]})$ gives:

    \begin{align*}
        &I(y^{S}_{[t]};a^{S}_{[t]},u_{t}|u_{[t-1]},y^{\bar{S}}_{[t]})\\
        =&I(y^{S}_{[t]};a^{S}_{[t-1]}|u_{[t-1]},y^{\bar{S}}_{[t]}) + I(y^{S}_{[t]};a^{S}_{t},u_{t}|u_{[t-1]},y^{\bar{S}}_{[t]},a^{S}_{[t-1]})\\
        =&I(y^{S}_{[t]};u_{t}|u_{[t-1]},y^{\bar{S}}_{[t]}) + I(y^{S}_{[t]};a^{S}_{[t]}|u_{[t]},y^{\bar{S}}_{[t]})
    \end{align*}
    
    Note that we have that $I(y^{S}_{[t]};a^{S}_{t}|a^{S}_{[t-1]},u_{[t-1]},y^{\bar{S}}_{[t]}) = I(y^{S}_{[t]};a^{S}_{t},u_{t}|a^{S}_{[t-1]},u_{[t-1]},y^{\bar{S}}_{[t]})$.
    Through manipulation, one can achieve the following equalities.

    \begin{align*}
        \phi_{t} = & I(y^{S}_{[t]};a^{S}_{t}|a^{S}_{[t-1]},u_{[t-1]},y^{\bar{S}}_{[t]}) - I(y^{S}_{[t]};u_{t}|u_{[t-1]},y^{\bar{S}}_{[t]})\\
        \stackrel{(a)}{=} &I(y^{S}_{[t]};a^{S}_{[t]}|u_{[t]},y^{\bar{S}}_{[t]}) - I(y^{S}_{[t]};a^{S}_{[t-1]}|u_{[t-1]},y^{\bar{S}}_{[t]})\\
        \stackrel{(b)}{=} &I(y^{S}_{[t]};a^{S}_{[t]}|u_{[t]},y^{\bar{S}}_{[t]}) - I(y^{S}_{[t-1]};a^{S}_{[t-1]}|u_{[t-1]},y^{\bar{S}}_{[t]}) \\
        &- I(y^{S}_{t};a^{S}_{[t-1]}|u_{[t-1]},y^{\bar{S}}_{[t]},y^{S}_{[t-1]})\\
        \stackrel{(c)}{=} &I(y^{S}_{[t]};a^{S}_{[t]}|u_{[t]},y^{\bar{S}}_{[t]}) - I(y^{S}_{[t-1]};a^{S}_{[t-1]}|u_{[t-1]},y^{\bar{S}}_{[t]})
    \end{align*}
    where $(a)$ is due to the arguments above; 
    $(b)$ is by chain rule $I(y^{S}_{[t]};a^{S}_{[t-1]}|u_{[t-1]},y^{\bar{S}}_{[t]}) = I(y^{S}_{[t-1]};a^{S}_{[t-1]}|u_{[t-1]},y^{\bar{S}}_{[t]}) + I(y^{S}_{t};a^{S}_{[t-1]}|u_{[t-1]},y^{\bar{S}}_{[t]},y^{S}_{[t-1]})$; $(c)$ is is due to that $y^{S}_{t}-(u_{[t-1]},y^{\bar{S}}_{[t]},y^{S}_{[t-1]})-a^{S}_{[t-1]}$ is Markov.

    The last gap to conclude that $\sum_{t=1}^{T}\phi_{t} \geq 0$ is to show the following inequality, which is intuitively true:
    \begin{align*}
        I(y^{S}_{[t-1]};a^{S}_{[t-1]}|u_{[t-1]},y^{\bar{S}}_{[t]}) \leq I(y^{S}_{[t-1]};a^{S}_{[t-1]}|u_{[t-1]},y^{\bar{S}}_{[t-1]})
    \end{align*}

    To establish the above inequality, applying the chain rule twice on $I(y^{S}_{[t-1]},y^{\bar{S}}_{t};a^{S}_{[t-1]}|u_{[t-1]},y^{\bar{S}}_{[t-1]})$ first:
    \begin{align*}
        &I(y^{S}_{[t-1]},y^{\bar{S}}_{t};a^{S}_{[t-1]}|u_{[t-1]},y^{\bar{S}}_{[t-1]})\\
        \stackrel{(i)}{=} &I(y^{S}_{[t-1]};a^{S}_{[t-1]}|u_{[t-1]},y^{\bar{S}}_{[t-1]}) \\
        +&I(y^{\bar{S}}_{t};a^{S}_{[t-1]}|u_{[t-1]},y^{\bar{S}}_{[t-1]},y^{S}_{[t-1]})\\
        \stackrel{(ii)}{=} &I(y^{\bar{S}}_{t};a^{S}_{[t-1]}|u_{[t-1]},y^{\bar{S}}_{[t-1]}) + I(y^{S}_{[t-1]};a^{S}_{[t-1]}|u_{[t-1]},y^{\bar{S}}_{[t]})
    \end{align*}
    where $(i)$ is applying chain rule on $y^{S}_{[t-1]}$ first and $(ii)$ is applying chain rule on $y^{\bar{S}}_{t}$ first.
    
    The manipulations of the chain rule give us the following:
    \begin{align*}
        &I(y^{S}_{[t-1]};a^{S}_{[t-1]}|u_{[t-1]},y^{\bar{S}}_{[t]})  - I(y^{S}_{[t-1]};a^{S}_{[t-1]}|u_{[t-1]},y^{\bar{S}}_{[t-1]})\\
        = &I(y^{\bar{S}}_{t};a^{S}_{[t-1]}|u_{[t-1]},y^{\bar{S}}_{[t-1]},y^{S}_{[t-1]})
        - I(y^{\bar{S}}_{t};a^{S}_{[t-1]}|u_{[t-1]},y^{\bar{S}}_{[t-1]})\\
        \leq& 0 
    \end{align*}
    where the first equality is due to the chain rule above and the inequality is due to the fact that $a^{S}_{[t-1]} - (u_{[t-1]},y^{\bar{S}}_{[t-1]},y^{S}_{[t-1]})-y^{\bar{S}}_{t}$ is Markov.

    The last part is as follows:
    \begin{align*}
        &\sum_{t=1}^{T}\phi_{t}\\ =&\sum_{t=1}^{T}I(y^{S}_{[t]};a^{S}_{[t]}|u_{[t]},y^{\bar{S}}_{[t]}) - I(y^{S}_{[t-1]};a^{S}_{[t-1]}|u_{[t-1]},y^{\bar{S}}_{[t]})\\
        \geq &\sum_{t=1}^{T}I(y^{S}_{[t]};a^{S}_{[t]}|u_{[t]},y^{\bar{S}}_{[t]}) - I(y^{S}_{[t-1]};a^{S}_{[t-1]}|u_{[t-1]},y^{\bar{S}}_{[t-1]})\\
        =&I(y^{S}_{[T]};a^{S}_{[T]}|u_{[T]},y^{\bar{S}}_{[T]})\\
        \geq & 0
    \end{align*}
\end{proof}

\subsection{Proof of Theorem \ref{thm:gaussianmarkov}}
\label{proof:gaussianmarkov}
Recall the policy set $\Gamma_{1}$:
    \begin{itemize}
       \item $c^{i}_{t} = F^{i}_{t}y^{i}_{[t]}$ for $i \in [k-1]$;
       \item $c^{k}_{t} = F^{k}_{t}y^{k}_{t} + h_{t}$ where $h_{t}\sim \mathcal{N}(0,H_{t})$;
   \item $u_{t} = \sum_{i=1}^{k}c^{i}_{t}+N_{t}u_{[t-1]}$.
   \end{itemize}
Consider the policy set $\Gamma_{2}$:
    \begin{equation}
    \Gamma_{2}: u_{t} = \sum_{i=1}^{k-1}M_{t}^{i}y^{i}_{[t]} + M_{t}^{k} y^{k}_{t} + N_{t} u_{[t-1]} + g_{t} \text{,   } g_{t}\sim \mathcal{N}(0,G_{t})
    \end{equation}
To prove the Theorem \ref{thm:gaussianmarkov}, we aim to show the following inequalities:
\begin{align}
    &\inf_{\gamma\in\Gamma^{*}:J_{\gamma}\leq D}\sum_{i=1}^{k}(\lambda_{i} - \lambda_{i-1}) I(y^{[i,k]}_{[T]}\rightarrow u_{[T]} \Vert y^{[i-1]}_{[T]}) \nonumber \\
    \geq & \inf_{\gamma\in\Gamma^{*}:J_{\gamma}\leq D}\sum_{i=1}^{k}(\lambda_{i} - \lambda_{i-1})\nonumber \\&\sum_{t=1}^{T}I(y^{[i,k-1]}_{[t]},y^{k}_{t};u_{t}|u_{[t-1]},y^{[i-1]}_{[t]})\label{ineq:1lower}\\
    \geq& \inf_{\gamma\in \Gamma_{2}:J_{\gamma}\leq D}\sum_{i=1}^{k}(\lambda_{i} - \lambda_{i-1})\nonumber\\
    &\sum_{t=1}^{T}I(y^{[i,k-1]}_{[t]},y^{k}_{t};u_{t}|u_{[t-1]},y^{[i-1]}_{[t]}) \label{ineq:2lower}\\
    \geq& \inf_{\gamma\in \Gamma_{1}:J_{\gamma}\leq D}\sum_{i=1}^{k}(\lambda_{i} - \lambda_{i-1})\nonumber \\ 
    &\sum_{t=1}^{T}I(y^{[i,k-1]}_{[t]},y^{k}_{t};\sum_{j\in[i,k]}c^{j}_{t}|y^{[i-1]}_{[t]},(\sum_{j\in[i,k]}c^{j})_{[t-1]})\label{ineq:3lower}\\
    \geq& \inf_{\gamma\in\Gamma^{*}:J_{\gamma}\leq D}\sum_{i=1}^{k}(\lambda_{i} - \lambda_{i-1}) I(y^{[i,k]}_{[T]}\rightarrow u_{[T]} \Vert y^{[i-1]}_{[T]})\label{ineq:4lower}
\end{align}

Our goal is to first show that the policies in $\Gamma_{2}$ give a lower bound for the optimization \eqref{eqn:weighted sum lower bound}. This shows that optimal $u_{t}$ is a linear function of $(y^{[k-1]}_{[t]},y^{k}_{t},u_{[t-1]})$. Then, by separating the optimal $u_{t}$ into independent parts, corresponding to each encoder, one can show that linear and independent encoders in $\Gamma_{1}$ give a lower bound to policies in $\Gamma_{2}$. 
Finally, we show that for any policy in $\Gamma_{1}$, it is lower bounded by the original weighted sum directed information lower bound with 
$\gamma\in\Gamma^{*}$ induced from the policy in $\Gamma_{1}$. This completes our proof.

The first inequality \eqref{ineq:1lower} is because mutual information is non-negative. The rest of the proofs are here.

\subsubsection{Proof of \eqref{ineq:2lower}}
Let $\gamma_{P}\in \Gamma^{*}$ be a feasible policy in $\Gamma^{*}$ and $\gamma_{Q}\in \Gamma_{2}$ be a feasible policy in $\Gamma_{2}$. Denote $\mathbb{P}(y^{[k]}_{[T+1]},u_{[T]})$ be the measure generated by $\gamma_{P}$ and similarly for $\mathbb{Q}(y^{[k]}_{[T+1]},u_{[T]})$ be the measure generated by $\gamma_{Q}$. We first show that $\mathbb{P}$ is a zero-mean measure. Assume that $\mathbb{P}$ is not zero-mean. Then, there exists another policy  $\mathrm{d}\widetilde{\mathbb{P}}(u_{t}|y^{[k]}_{[t]},u_{[t-1]})
       = \mathrm{d}\mathbb{P}(u_{t}+\mathbb{E}_{\mathbb{P}}[u_{t}]|y^{[k]}_{[t]}, u_{[t-1]}  )$
such that $I_{\mathbb{P}} = I_{\widetilde{\mathbb{P}}}$ and $J_{\widetilde{\mathbb{P}}}\leq J_{\mathbb{P}}$.

Let $\mathbb{G}(y^{[k]}_{[T+1]},u_{[T]})$ be a Gaussian measure with the same covariance matrix as $\mathbb{P}(y^{[k]}_{[T+1]},u_{[T]})$. The following lemma holds.

\begin{lemma}\label{lemma:lowerboundGaussian1}
For $i\in [k]$, the following inequality holds
        \begin{align*}
     &\sum_{t=1}^{T}I_{\mathbb{P}}(y^{[i,k-1]}_{[t]},y^{k}_{t};u_{t}|u_{[t-1]},y^{[i-1]}_{[t]})\\
     \geq &\sum_{t=1}^{T}I_{\mathbb{G}}(y^{[i,k-1]}_{[t]},y^{k}_{t};u_{t}|u_{[t-1]},y^{[i-1]}_{[t]})
    \end{align*}
\end{lemma}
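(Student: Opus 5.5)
We follow the standard "Gaussian measures minimize conditional mutual information under a covariance constraint" argument of \cite{LQGSDP}, applied term by term. Fix $i\in[k]$ and $t\in[T]$. To lighten notation, set $X_t = (y^{[i,k-1]}_{[t]},y^{k}_{t})$, $U_t=u_t$, and $Z_t=(u_{[t-1]},y^{[i-1]}_{[t]})$. The goal is to show
\begin{align*}
I_{\mathbb{P}}(X_t;U_t|Z_t)\geq I_{\mathbb{G}}(X_t;U_t|Z_t)
\end{align*}
for every $t$; summing over $t$ then gives the lemma. Since $\mathbb{G}$ has the same covariance as $\mathbb{P}$ (and we may take $\mathbb{P}$ zero-mean, as argued just before the lemma), the marginals of $(X_t,U_t,Z_t)$ under $\mathbb{P}$ and $\mathbb{G}$ also share second moments.

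\textbf{Key identity.} Assume densities exist; otherwise reduce to this case, e.g.\ by adding vanishing independent Gaussian noise. We write
\begin{align*}
I_{\mathbb{P}}(X_t;U_t|Z_t) - I_{\mathbb{G}}(X_t;U_t|Z_t)
= D\bigl(\mathbb{P}_{U_t|X_t,Z_t}\Vert \mathbb{G}_{U_t|X_t,Z_t}\,\big|\,\mathbb{P}_{X_t,Z_t}\bigr) - D\bigl(\mathbb{P}_{U_t|Z_t}\Vert \mathbb{G}_{U_t|Z_t}\,\big|\,\mathbb{P}_{Z_t}\bigr).
\end{align*}
This follows by expanding each mutual information as an expectation of $\log$ density ratios. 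Then add and subtract $\mathbb{E}_{\mathbb{P}}[\log g(U_t|X_t,Z_t)]$ and $\mathbb{E}_{\mathbb{P}}[\log g(U_t|Z_t)]$. Under $\mathbb{G}$ these Gaussian conditional log-densities are quadratic functions of $(U_t,X_t,Z_t)$. Hence their expectations under $\mathbb{P}$ equal those under $\mathbb{G}$, because the second moments agree. This second-moment matching is exactly what makes the cross terms cancel.

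\textbf{Conclusion via data processing.} The right-hand side is nonnegative by the chain rule for relative entropy (equivalently, data processing). Marginalizing $X_t$ out of the joint laws $\mathbb{P}_{X_t,Z_t}\mathbb{P}_{U_t|X_t,Z_t}$ versus $\mathbb{P}_{X_t,Z_t}\mathbb{G}_{U_t|X_t,Z_t}$ cannot increase divergence. The one remaining check is that the $Z$-conditional law obtained this way matches $\mathbb{G}_{U_t|Z_t}$ in the relevant sense. We handle this by bounding
\begin{align*}
D\bigl(\mathbb{P}_{U_t|Z_t}\Vert\mathbb{G}_{U_t|Z_t}|\mathbb{P}_{Z_t}\bigr)\le D\bigl(\mathbb{P}_{U_t|X_t,Z_t}\Vert\mathbb{G}_{U_t|X_t,Z_t}|\mathbb{P}_{X_t,Z_t}\bigr)
\end{align*}
directly. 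First write both sides as expectations of log-likelihood ratios. Then apply the log-sum (Jensen) inequality, which is the convexity of $D$, using that $\int g(u|x,z)\,d\mathbb{P}(x|z)$ and $g(u|z)$ produce the same quadratic expectation.

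\textbf{Main obstacle.} The delicate point is this last comparison. The mixture $\int g(u|x,z)\,d\mathbb{P}(x|z)$ is generally not $g(u|z)$, so naive data processing gives a divergence to the wrong reference measure. I would first try the cleaner route used in \cite{LQGSDP}. Write $I_{\mathbb{P}}(X;U|Z)=h_{\mathbb{P}}(U|Z)-h_{\mathbb{P}}(U|X,Z)$ and, correspondingly, $I_{\mathbb{P}}(X;U|Z) = -h(U|X,Z)+h(U|Z)$. Use the maximum-entropy bound $h_{\mathbb{P}}(U|Z)\le h_{\mathbb{G}}(U|Z)$, which holds since the linear MMSE error covariance dominates in the determinant sense. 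Handle the conditional entropy $h(U|X,Z)$ via the divergence term above, which is nonnegative with no mixture issue: $-h_{\mathbb{P}}(U|X,Z)=-h_{\mathbb{G}}(U|X,Z)+D(\cdot\Vert\cdot)$ up to the cancelling quadratic terms. Combining the two pieces yields the per-step inequality. Degenerate, non-density cases (singular $V^i_t$, deterministic $u_t$) are handled by a perturbation-and-limit argument using lower semicontinuity of mutual information.
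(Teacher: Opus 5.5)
\textbf{There is a genuine gap.} Your proof rests on the per-step inequality $I_{\mathbb{P}}(X_t;U_t|Z_t)\ge I_{\mathbb{G}}(X_t;U_t|Z_t)$. That inequality is false in general, so the lemma cannot be obtained by summing it.

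Your identity $I_{\mathbb{P}}-I_{\mathbb{G}}=D_1-D_2$ is correct, with $D_1=D(\mathbb{P}_{U_t|X_t,Z_t}\Vert\mathbb{G}_{U_t|X_t,Z_t}\,|\,\mathbb{P}_{X_t,Z_t})$ and $D_2=D(\mathbb{P}_{U_t|Z_t}\Vert\mathbb{G}_{U_t|Z_t}\,|\,\mathbb{P}_{Z_t})$. The problem is that $D_2$ can exceed $D_1$. Take $k=i=1$, $y_t=x_t$, a scalar plant $x_2=ax_1+bu_1+w_1$ with $a\neq 0$, $u_1=\mathrm{sign}(x_1)$, and $u_2=x_2+n$ with $n$ independent Gaussian. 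Then $\mathbb{P}_{u_2|x_2,u_1}=\mathbb{G}_{u_2|x_2,u_1}$, so $D_1=0$. Given $u_1$, however, $u_2$ is a shifted, scaled half-normal plus Gaussian noise, hence non-Gaussian. So $D_2>0$ and $I_{\mathbb{P}}(x_2;u_2|u_1)<I_{\mathbb{G}}(x_2;u_2|u_1)$, because a Gaussian input maximizes mutual information over an additive Gaussian channel.

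Your data-processing step fails for the reason you flag yourself. It bounds a divergence to the mixture $\int g(u|x,z)\,\mathrm{d}\mathbb{P}(x|z)$, and matching second moments do not make divergences to two different reference measures comparable. The fallback route has a sign error. $h_{\mathbb{P}}(U|Z)$ enters $I_{\mathbb{P}}$ with a plus sign, so $h_{\mathbb{P}}(U|Z)\le h_{\mathbb{G}}(U|Z)$ gives an \emph{upper} bound on $I_{\mathbb{P}}$. Combined with $-h_{\mathbb{P}}(U|X,Z)\ge -h_{\mathbb{G}}(U|X,Z)$, it only yields $I_{\mathbb{P}}\le I_{\mathbb{G}}+D_1$.

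\textbf{The missing idea} is that the inequality holds only for the sum over $t$, and only because of the plant structure. Your argument uses nothing beyond covariance matching, and that cannot suffice: already for $X=\pm1$ equiprobable and $U=X$ plus Gaussian noise, $I_{\mathbb{P}}(X;U)<I_{\mathbb{G}}(X;U)$. The paper, following \cite{LQGSDP}, works as follows.
\begin{itemize}
\item It multiplies the $t$-th log-likelihood ratio by $\mathrm{d}\mathbb{P}(y^{[k]}_{t+1}|u_{[t]},y^{[k-1]}_{[t]},y^k_t)$ over itself and factors with Bayes' rule.
\item It then telescopes the ratios $\mathrm{d}\mathbb{P}(X_{t+1}|u_{[t]},y^{[i-1]}_{[t+1]})/\mathrm{d}\mathbb{P}(X_t|u_{[t-1]},y^{[i-1]}_{[t]})$. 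This eliminates the positive-sign entropies $h(X_t|Z_t)$ for $t\ge 2$. These are exactly the terms that break your per-step argument, since $X_t$ given $Z_t$ is non-Gaussian under $\mathbb{P}$ once past controls are.
\item What remains are negative conditional entropies: of $y^{[i-1]}_{t+1}$ given $(u_{[t]},y^{[i-1]}_{[t]})$, of $y^k_t$ given $(u_{[t]},y^{[k-1]}_{[t+1]},y^k_{t+1})$, and a boundary term at $T+1$. Each is bounded below by its Gaussian counterpart via your quadratic-cancellation-plus-KL argument.
\item The transition-kernel and initial terms coincide under $\mathbb{P}$ and $\mathbb{G}$ by Lemma~\ref{lemma:sameconditionaldistribution}. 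That is where the full observation condition and the linear Gaussian dynamics enter.
\end{itemize}
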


\begin{proof}
    See Section \ref{proof:entropylowerbound1}.
\end{proof}

We construct $\gamma_{Q}\in \Gamma_{2}$ by $\mathbb{G}$. Note that
\begin{align*}
    \Gamma_{2}: u_{t} = \sum_{i=1}^{k-1}M_{t}^{i}y^{i}_{[t]} + M_{t}^{k} y^{k}_{t} + N_{t} u_{[t-1]} + g_{t} \text{,   } g_{t}\sim \mathcal{N}(0,G_{t})
\end{align*}

Let $\sum_{i=1}^{k-1}M_{t}^{i}y^{i}_{[t]} + M_{t}^{k} y^{k}_{t} + N_{t} u_{[t-1]}$ be the least-mean-square-error estimation(lmsee) of $u_{t}$ given $(y^{[k-1]}_{[t]},y^{k}_{t},u_{[t-1]})$ in $\mathbb{G}(y^{[k]}_{[T+1]},u_{[T]})$ and let $G_{t}$ be its error covariance matrix. Then we have
\begin{align}
        &\mathrm{d}\mathbb{Q}(u_{t}|y^{[k-1]}_{[t]},y^{k}_{t},u_{[t-1]}) \nonumber\\
        = &\mathcal{N}(\sum_{i=1}^{k-1}M_{t}^{i}y^{i}_{[t]} + M_{t}^{k} y^{k}_{t} + N_{t} u_{[t-1]}, G_{t})
\end{align}

By construction, 
\begin{align}
    \mathrm{d}\mathbb{G}(y^{[k-1]}_{[t]},y^{k}_{t},u_{[t]}) =& \mathrm{d}\mathbb{Q}(u_{t}|y^{[k-1]}_{[t]},y^{k}_{t},u_{[t-1]})\nonumber \\
    &\cdot\mathrm{d}\mathbb{G}(y^{[k-1]}_{[t]},y^{k}_{t},u_{[t-1]}) \label{eqn:Qproperty}
\end{align}
Define $\mathbb{Q}(y^{[k]}_{[T+1]},u_{[T]})$ with the given start distribution $\mathbb{Q}(y^{[k]}_{1})$ recursively by
\begin{align}
        \mathrm{d}\mathbb{Q}(y^{[k]}_{[t+1]},u_{[t]})&=\mathrm{d}\mathbb{P}(y^{[k]}_{t+1}|y^{[k]}_{t},u_{t})\mathrm{d}\mathbb{Q}(y^{[k]}_{[t]},u_{[t]})\label{eqn:Qplus1}\\
    \mathrm{d}\mathbb{Q}(y^{[k]}_{[t]},u_{[t]})&=\mathrm{d}\mathbb{Q}(u_{t}|y^{[k-1]}_{[t]},y^{k}_{t},u_{[t-1]})\mathrm{d}\mathbb{Q}(y^{[k]}_{[t]},u_{[t-1]})\label{eqn:Q}
\end{align}
where $\mathrm{d}\mathbb{P}(y^{[k]}_{t+1}|y^{[k]}_{t},u_{t})$ is given by the plant and sensors. Then we have the following identity between $\mathbb{Q}$ and $\mathbb{G}$.
\begin{lemma}\label{lemma:distributionidentity}
  $\mathbb{G}(y^{[k-1]}_{[t+1]},y^{k}_{t+1},u_{[t]}) = \mathbb{Q}(y^{[k-1]}_{[t+1]},y^{k}_{t+1},u_{[t]})$  $\forall t = 1,...,T$. 
\end{lemma}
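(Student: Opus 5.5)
We will argue by induction on $t$, showing jointly that
\begin{align*}
&\mathbb{G}(y^{[k-1]}_{[t]},y^{k}_{t},u_{[t-1]}) = \mathbb{Q}(y^{[k-1]}_{[t]},y^{k}_{t},u_{[t-1]}),\\
&\mathbb{G}(y^{[k-1]}_{[t+1]},y^{k}_{t+1},u_{[t]}) = \mathbb{Q}(y^{[k-1]}_{[t+1]},y^{k}_{t+1},u_{[t]}).
\end{align*}
The base case comes from the start distribution. $\mathbb{Q}(y^{[k]}_{1})$ is the given Gaussian initial distribution of the sensor outputs. Because $\mathbb{P}$ is generated by the plant, its marginal on $y^{[k]}_1$ is this same Gaussian, and $\mathbb{G}$ matches $\mathbb{P}$ in second moments (with zero mean, as established just before the lemma). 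Hence $\mathbb{G}(y^{[k]}_1)=\mathbb{Q}(y^{[k]}_1)$.

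For the step from $\mathbb{G}(y^{[k-1]}_{[t]},y^{k}_{t},u_{[t-1]})=\mathbb{Q}(\cdot)$ to the claim at time $t$, we proceed in two substeps.

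\emph{Adding $u_t$.} By \eqref{eqn:Qproperty}, the conditional of $u_t$ given $(y^{[k-1]}_{[t]},y^k_t,u_{[t-1]})$ under $\mathbb{G}$ equals the kernel $\mathbb{Q}(u_t|\cdot)$. This holds because $\mathbb{G}$ is jointly Gaussian, so the conditional is Gaussian with mean equal to the lmsee and covariance $G_t$. Multiplying the equal marginals by this common kernel, as in \eqref{eqn:Q}, gives equality of the joint laws of $(y^{[k-1]}_{[t]},y^k_t,u_{[t]})$.

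\emph{Adding $y^{[k]}_{t+1}$.} Under $\mathbb{Q}$, \eqref{eqn:Qplus1} appends $y^{[k]}_{t+1}$ through the plant kernel $\mathbb{P}(y^{[k]}_{t+1}|y^{[k]}_t,u_t)$. We must show that $\mathbb{G}$ has the same conditional law of $y^{[k]}_{t+1}$ given $(y^{[k-1]}_{[t]},y^k_t,u_{[t]})$. This is the main obstacle, because the plant kernel conditions on the current output $y^{[k]}_t$, which is not a sufficient statistic since only $x_t$ is Markov.

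Here the full observation condition (Definition \ref{definition:fullobservation}) enters. Since $x_t=C_t y_t$ and $y_t$ collects all the sensor outputs, we get
\[
y^{[k]}_{t+1}=C_{t+1}\bigl(A_tC_t y_t+B_tu_t+w_t\bigr)+v_{t+1}.
\]
The noise $(w_t,v_{t+1})$ is independent of the past $(y_{[t]},u_{[t]})$ under $\mathbb{P}$, because $u_t$ is generated causally from the past and the noises are fresh. This gives a linear-Gaussian relation, which is a statement about second moments. Since $\mathbb{G}$ shares the zero mean and covariance of $\mathbb{P}$, in $\mathbb{G}$ the residual $y_{t+1}-C_{t+1}(A_tC_ty_t+B_tu_t)$ is uncorrelated with the past, hence independent of it because $\mathbb{G}$ is Gaussian. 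Its covariance is $C_{t+1}W_tC_{t+1}^{\mathrm{T}}+V_{t+1}$, with the appropriate cross terms if the $v^i$ are correlated.

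So $\mathbb{G}(y_{t+1}|y^{[k-1]}_{[t]},y^k_t,u_{[t]})$ coincides with the plant kernel, and in particular it depends only on $(y_t,u_t)$. Multiplying by the equal marginals from the first substep gives the claim at time $t$.

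One subtlety remains: the conditioning set in $\mathbb{G}$ omits $y^k_{[t-1]}$. This is harmless for the induction because the target identity likewise drops the past of $y^k$. When marginalizing $\mathbb{Q}(y^{[k]}_{[t+1]},u_{[t]})$ onto $(y^{[k-1]}_{[t+1]},y^k_{t+1},u_{[t]})$, the kernel for $y_{t+1}$ uses only $(y_t,u_t)$, and the kernel for $u_t$ uses only $(y^{[k-1]}_{[t]},y^k_t,u_{[t-1]})$. Integrating out $y^k_{[t-1]}$ is therefore consistent, and this bookkeeping closes the induction.
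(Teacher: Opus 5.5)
Your proposal is correct and follows the paper's own induction. You append $u_t$ using \eqref{eqn:Qproperty} together with the induction hypothesis, then append $y^{[k]}_{t+1}$ through the common plant kernel and marginalize out the past of $y^{k}$. The only difference is that you re-derive inline, via the uncorrelated-residual argument under full observation, the kernel identity $\mathbb{G}(y^{[k]}_{t+1}|y^{[k-1]}_{[t]},y^{k}_{t},u_{[t]})=\mathbb{P}(y^{[k]}_{t+1}|y^{[k]}_{t},u_{t})$, which the paper packages as Lemma \ref{lemma:sameconditionaldistribution}, and you spell out the marginalization bookkeeping that the paper leaves implicit.
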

\begin{proof}
    See Section \ref{proof:distributionidentity}.
\end{proof}
By the full observation condition \ref{definition:fullobservation}, we have that
\begin{align}
    J_{\gamma_{\mathbb{P}}}= J_{\gamma_{\mathbb{G}}}=J_{\gamma_{\mathbb{Q}}}
\end{align}

This completes our proof for inequality \eqref{ineq:2lower}.

\subsubsection{Proof of \eqref{ineq:3lower}}

Given a policy $\gamma_{2}\in \Gamma_{2}$, we construct a policy $\gamma_{1}\in \Gamma_{1}$ such that $J_{\gamma_{1}} = J_{\gamma_{2}}$ and the objective functions are equal.

Let $\gamma_{2}\in\Gamma_{2}$ given by
\begin{equation}
    u_{t} = \sum_{i=1}^{k-1}M_{t}^{i}y^{i}_{[t]} + M_{t}^{k} y^{k}_{t} + N_{t} u_{[t-1]} + g_{t}
\end{equation}

Let $c^{i}_{t} = M_{t}^{i}y^{i}_{[t]}$ for $i\in[k-1]$ and $c^{k}_{t} =  M_{t}^{k} y^{k}_{t} + g_{t}$. So we have $\sum_{i=1}^{t}c^{i}_{t} = u_{t} - N_{t} u_{[t-1]}$.
Let $N_{t} u_{[t-1]} = N_{t,t-1}u_{t-1} + ... + N_{t,1} u_{1}$,  we have
\begin{equation}
   \sum_{i=1}^{k} \begin{bmatrix}
c^{i}_{1}\\
\vdots\\
c^{i}_{t}
\end{bmatrix}  = 
\begin{bmatrix}
I  &\dots &0 \\
\vdots  & \ddots &\vdots\\
-N_{t,1} & \dots  & I
\end{bmatrix}
\begin{bmatrix}
u_{1}\\
\vdots \\
u_{t}
\end{bmatrix}
\end{equation}
where $\begin{bmatrix}
I  &\dots &0 \\
\vdots  & \ddots &\vdots\\
-N_{t,1} & \dots  & I
\end{bmatrix}$ is a left triangular matrix with identity matrices on the diagonal.

Hence, we have that 
\begin{align}
    &I(y^{[i,k-1]}_{[t]},y^{k}_{t};u_{t}|u^{t-1},y^{[i-1]}_{[t]})\\
    =& I(y^{[i,k-1]}_{[t]},y^{k}_{t};\sum_{j=1}^{t}c^{j}_{t}+N_{t}u_{[t-1]}|(\sum_{j=1}^{k}c^{j})_{[t-1]},y^{[i-1]}_{[t]})\\
    =& I(y^{[i,k-1]}_{[t]},y^{k}_{t};\sum_{j=i}^{t}c^{j}_{t}|(\sum_{j=i}^{k}c^{j})_{[t-1]},y^{[i-1]}_{[t]})
\end{align}

Since the distribution of $(y^{[k]}_{[T+1]},u_{[T]}) $ is the same for $\gamma_{1}$ and $\gamma_{2}$, we have the same control cost. This completes our proof for inequality \eqref{ineq:3lower}.

\subsubsection{Proof of \eqref{ineq:4lower}}

Note the $u_{t}$ in policy set $\Gamma_{1}$ is $u_{t} = \sum_{i=1}^{k}c^{i}_{t}+N_{t}u_{[t-1]}$. 
Let $\gamma_{1}\in\Gamma_{1}$ and construct $\gamma\in\Gamma^{*}$ by borrowing the $u_{t}$ of $\gamma_{1}$, then the distribution of $(y^{[k]}_{[T+1]},u_{[T]})$ is the same for the two policies. We have the same control cost. For the objective function, we have:
\begin{align*}
    &I(y^{[i,k-1]}_{[t]},y^{k}_{t};\sum_{j=i}^{t}c^{j}_{t}|(\sum_{j=i}^{k}c^{j})_{[t-1]},y^{[i-1]}_{[t]})\\
    = &I(y^{[i,k-1]}_{[t]},y^{k}_{t};u_{t}|u_{[t-1]},y^{[i-1]}_{[t]})\\
    =&I(y^{[i,k]}_{[t]};u_{t}|u_{[t-1]},y^{[i-1]}_{[t]})
\end{align*}
The second equality is due to that $u_{t}-(u_{[t-1]},y^{[k-1]}_{[t]},y^{k}_{t})-y^{k}_{[t-1]}$ is Markov.
This completes our proof for inequality \eqref{ineq:4lower}.

\subsection{Proof of Theorem \ref{thm:TV-SDP}}\label{section:proof-TV-SDP}
Note that 
\begin{align*}
  \mathrm{R}_{CS}^{TV} (d) = \inf_{\delta \in \Delta, \text{WMSE}\leq d}  
     &  I(m_{[T]}\rightarrow \hat{m}_{[T]}\Vert y_{[T]})
\end{align*}
By Theorem \ref{thm:CSC-SI-Linear}, the optimal policy is in the policy set $\Delta_{1}$. And by Proposition \ref{prop:r-d-equivalent}, we can rewrite $\mathrm{R}_{CS}^{TV} (d)$ as follows:
\begin{align*}
    \inf_{\delta \in \Delta_{1}, \text{WMSE}\leq d}&  \sum_{t=1}^{T}I(m_{t}; a_{t}|a_{[t-1]},y_{[t]})
\end{align*}

Note that 
\begin{align*}
    &\sum_{t=1}^{T}I(m_{t}; a_{t}|a_{[t-1]},y_{[t]})\\
    =&\frac{1}{2}\log|P_{1|0}^{+}|-\frac{1}{2}\log|P_{T|T}|\\
    &+\sum_{t=1}^{T-1}\frac{1}{2}\log|P_{t+1|t}^{+}| - \frac{1}{2}\log|P_{t|t}|
\end{align*}

Next, we have
\begin{align*}
    &\frac{1}{2}\log|P_{t+1|t}^{+}| - \frac{1}{2}\log|P_{t|t}|\\
    \stackrel{(a)}{=}&-\frac{1}{2}\log|P_{t+1|t}^{-1}+C_{t+1}^{\mathrm{T}}V_{t+1}^{-1}C_{t+1}|-\frac{1}{2}\log|P_{t|t}|\\
    \stackrel{(b)}{=}&\frac{1}{2}\log|V_{t+1}| - \frac{1}{2}\log|V_{t+1}+C_{t+1}P_{t+1|t}C_{t+1}^{\mathrm{T}}| \\
    &+\frac{1}{2}\log|P_{t+1|t}|-\frac{1}{2}\log|P_{t|t}|
\end{align*}
where $(a)$ is due to Equation \eqref{eqn:sideinfo-equation}; $(b)$ is due to \cite[Theorem 18.1.1]{matrix-determinant}.

Since $\frac{1}{2}\log|V_{t+1}| - \frac{1}{2}\log|V_{t+1}+C_{t+1}P_{t+1|t}C_{t+1}^{\mathrm{T}}|$ is a convex function in $P_{t|t}$, we proceed $\frac{1}{2}\log|P_{t+1|t}|-\frac{1}{2}\log|P_{t|t}|$ next. 
When $N_{t} = L_{t}L_{t}^{\mathrm{T}}$ and $A_{t}$ is a full rank matrix, we have 
\begin{align*}
    &\frac{1}{2}\log|P_{t+1|t}|-\frac{1}{2}\log|P_{t|t}| \\
     \stackrel{(a)}{=}& \frac{1}{2}\log|A_{t}P_{t|t}A_{t}^{\mathrm{T}}+L_{t}L_{t}^{\mathrm{T}}| - \frac{1}{2}\log|P_{t|t}|\\
     \stackrel{(b)}{=}&\log|A_{t}|+\frac{1}{2}\log|P_{t|t}+A_{t}^{-1}L_{t}L_{t}^{\mathrm{T}}A^{-\mathrm{T}_{t}}|- \frac{1}{2}\log|P_{t|t}|\\
     \stackrel{(c)}{=}&\log|A_{t}|+\min_{0\prec\Pi_{t}\preceq (I+L_{t}^{\mathrm{T}}A^{-\mathrm{T}}_{t}P_{t|t}^{-1}A^{-1}_{t}L_{t})^{-1}}-\frac{1}{2}\log|\Pi_{t}|\\
     \stackrel{(d)}{=}&\min \log|A_{t}|-\frac{1}{2}\log|\Pi_{t}|\\
     &\text{s.t.   }\Pi_{t}\succ 0, \begin{bmatrix}
         I-\Pi_{t} & L_{t}^{\mathrm{T}}\\
         L_{t} & A_{t}P_{t|t}A_{t}^{\mathrm{T}}+L_{t}L_{t}^{\mathrm{T}}
     \end{bmatrix}\succeq 0
\end{align*}
where $(a)$ is due to Kalman filter update Equation \eqref{eqn:kalman-update}; $(b)$ is due to $A_{t}$ is a full rank matrix; $(c)$ is due to that $|I+MM^{\mathrm{T}}| = |I+M^{\mathrm{T}}M|$; $(d)$ is by matrix inversion lemma.

For constraint \eqref{directSDP2-c4}, note we have $P_{t+1|t+1}\preceq P_{t+1|t}^{+}$ and $(P_{t+1|t}^{+})^{-1} = P_{t+1|t}^{-1}+C_{t+1}^{\mathrm{T}}V_{t+1}^{-1}C_{t+1}$, by the matrix inversion lemma, we can write $P_{t+1|t+1}\preceq P_{t+1|t}^{+}$ as
\begin{align*}
\begin{matrix}
    \begin{bmatrix}
        P_{t+1|t}-P_{t+1|t+1} & P_{t+1|t}C_{t+1}^{\mathrm{T}}\\
        C_{t+1}P_{t+1|t} & C_{t+1}P_{t+1|t}C_{t+1}^{\mathrm{T}} + V_{t+1}
    \end{bmatrix}\succeq 0
\end{matrix}
\end{align*}

The remaining constraints for Theorem \ref{thm:TV-SDP} are derived from the WMSE and the Kalman filter update equation \eqref{eqn:kalman-update}.

\subsection{Proof of Theorem \ref{thm:TI-SDP}}\label{section:proof-TI-SDP}
The $\mathrm{R}_{CS}^{TI}(d)$ can be written as 
\begin{align}
   \inf_{\{P_{t|t}\}} &\lim_{T\rightarrow \infty}\frac{1}{T}\sum_{t=1}^{T}\frac{1}{2}(\log|AP_{t-1|t-1}A^{\mathrm{T}}+N|+\log|V|\nonumber \\
    &\qquad -\log|V+C(AP_{t-1|t-1}A^{\mathrm{T}}+N)C^{\mathrm{T}}|-\log|P_{t|t}|)\label{eqn:TI-CS-limit}\\
    \text{s.t.}&\qquad \lim_{T\rightarrow \infty} \frac{1}{T}\sum_{t=1}^{T} \text{Tr}(P_{t|t}W)\leq d \label{TI-CS-limit-C1}\\
    &\qquad  \begin{bmatrix}
        P_{t|t-1} -P_{t|t} & P_{t|t-1}C^{\mathrm{T}}\\
        CP_{t|t-1} & CP_{t|t-1}C^{\mathrm{T}} + V
    \end{bmatrix}\succeq 0 \label{TI-CS-limit-C2}\\
    &\qquad P_{t|t-1} = AP_{t-1|t-1}A^{\mathrm{T}} + N
\end{align}

Define the following set $\mathcal{D}_{0}$:
\begin{align*}
    \mathcal{D}_{0} = \{&P\succeq 0|\text{Tr}(P\mathrm{W})\leq d \\
   &\begin{bmatrix}
        APA^{\mathrm{T}} + N -P & (APA^{\mathrm{T}} + N )C^{\mathrm{T}}\\
        C(APA^{\mathrm{T}} + N ) & C(APA^{\mathrm{T}} + N )C^{\mathrm{T}} + V
    \end{bmatrix}\succeq 0 \}
\end{align*}
By \cite[Lemma 1]{limit-existence}, we have that $\mathcal{D}_{0}$ is bounded. Let $\{P_{t|t}\}_{t\in\mathbb{N}}$ be a sequence of covariance matrices that attain the minimum in \eqref{eqn:TI-CS-limit}. Define $\bar{P}_{T} = \frac{1}{T}\sum_{t=1}^{T}P_{t|t}$. Following the arguments in \cite[Lemma 2]{limit-existence}, we can show that there exists a subsequence $\{T_{i}\}_{i\in[\mathbb{N}]}$ such that
\begin{align*}
    \lim_{i\rightarrow \infty}\bar{P}_{T_{i}} = \bar{P}_{\infty}\in\mathcal{D}_{0}.
\end{align*}
Define 
\begin{align*}
    f_{r,T} \triangleq& \log|A| + \frac{1}{2}\log|V| \\
    &+ \frac{1}{T}\sum_{t=1}^{T} (\frac{1}{2}\log|I+L^{\mathrm{T}}A^{-\mathrm{T}}P_{t|t}^{-1}A^{-1}L|\\
    &-\frac{1}{2}\log|V+C(AP_{t|t}A^{\mathrm{T}}+LL^{\mathrm{T}})C^{\mathrm{T}}|)\\
    f_{s,T} \triangleq& \log|A| + \frac{1}{2}\log|V|+\frac{1}{2}\log|I+L^{\mathrm{T}}A^{-\mathrm{T}}\bar{P}_{T}^{-1}A^{-1}L| \\
    &-\frac{1}{2}\log|V+C(A\bar{P}_{T}A^{\mathrm{T}}+LL^{\mathrm{T}})C^{\mathrm{T}}|
\end{align*}
Since $f_{s,T}$ is a convex function, by Jensen's inequality we have that 
\begin{align*}
    f_{s,T_{i}}\leq f_{r,T_{i}}
\end{align*}
Let $f_{r}$ be the objective value of \eqref{eqn:TI-CS-limit}.
We have the following lemma.
\begin{lemma}\label{lemma:limsup}
$\limsup_{i\rightarrow \infty}f_{r,T_{i}} \leq f_{r}$
\end{lemma}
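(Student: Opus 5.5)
The plan is to compare $f_{r,T}$ with the $T$-step average of the objective in \eqref{eqn:TI-CS-limit} along the same optimal sequence $\{P_{t|t}\}$, and to show that they differ only by boundary terms that vanish after division by $T$.

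\textbf{Step 1 (rewrite each summand).} Denote by $g_T$ the $T$-step average of the objective in \eqref{eqn:TI-CS-limit}, so that $g_T\to f_r$ along the optimal sequence. Using $N=LL^{\mathrm{T}}$, $A$ invertible, and $|I+MM^{\mathrm{T}}|=|I+M^{\mathrm{T}}M|$ (exactly as in step (b)--(c) of the proof of Theorem \ref{thm:TV-SDP}), I will write
\begin{align*}
\tfrac12\log|AP_{t|t}A^{\mathrm{T}}+N| - \tfrac12\log|P_{t|t}| = \log|A| + \tfrac12\log|I+L^{\mathrm{T}}A^{-\mathrm{T}}P_{t|t}^{-1}A^{-1}L| .
\end{align*}
In \eqref{eqn:TI-CS-limit}, however, the term $\log|AP_{t-1|t-1}A^{\mathrm{T}}+N|$ and the term $\log|V+C(AP_{t-1|t-1}A^{\mathrm{T}}+N)C^{\mathrm{T}}|$ carry index $t-1$, whereas in $f_{r,T}$ they are paired with $P_{t|t}$. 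Re-indexing the sum therefore gives
\begin{align*}
f_{r,T} - g_T = \frac{1}{2T}\Big(\log|AP_{T|T}A^{\mathrm{T}}+N| - \log|AP_{0|0}A^{\mathrm{T}}+N| - \log|V+C(AP_{T|T}A^{\mathrm{T}}+N)C^{\mathrm{T}}| + \log|V+C(AP_{0|0}A^{\mathrm{T}}+N)C^{\mathrm{T}}|\Big).
\end{align*}

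\textbf{Step 2 (bound the boundary terms).} The two terms involving $P_{0|0}$ are fixed constants by the initial condition, so they contribute $O(1/T)$. For the third term, $V+C(\cdot)C^{\mathrm{T}}\succeq V$ gives $-\tfrac12\log|V+C(AP_{T|T}A^{\mathrm{T}}+N)C^{\mathrm{T}}|\le -\tfrac12\log|V|$, again an $O(1/T)$ contribution. The only term that is not trivially controlled is $\frac{1}{2T}\log|AP_{T|T}A^{\mathrm{T}}+N|$. By the AM--GM bound on eigenvalues, this is at most $\frac{n}{2T}\log\big(\text{Tr}(AP_{T|T}A^{\mathrm{T}}+N)/n\big)$. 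It therefore suffices to show that $\text{Tr}(P_{T|T})$ grows at most polynomially in $T$ along the subsequence.

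\textbf{Step 3 (main obstacle: polynomial growth of $P_{T|T}$).} This is where detectability of $(A,\mathrm{W})$ enters. The trace constraint \eqref{TI-CS-limit-C1} only gives $\sum_{t\le T}\text{Tr}(P_{t|t}\mathrm{W})\le T(d+o(1))$, so each $\text{Tr}(P_{t|t}\mathrm{W})=O(T)$ for $t\le T$. From \eqref{TI-CS-limit-C2} we also have the monotone bound $P_{t|t}\preceq AP_{t-1|t-1}A^{\mathrm{T}}+N$.

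I will decompose the state space into the $\mathrm{W}$-observable part and the unobservable part of $(A,\mathrm{W})$.
\begin{itemize}
\item On the observable part, iterating the bound $P_{t|t}\preceq AP_{t-1|t-1}A^{\mathrm{T}}+N$ over a window of $n$ steps and using the observability Gramian shows that $P_{T|T}$ restricted there is dominated by a constant times $\sum_{s=T-n}^{T}\text{Tr}(P_{s|s}\mathrm{W})$ plus a constant. This is $O(T)$; the argument mirrors \cite[Lemma 1]{limit-existence}, which I will apply along a window instead of at a fixed point.
\item On the unobservable part, detectability makes $A$ stable. The recursion $P\preceq APA^{\mathrm{T}}+N$, driven by the $O(T)$ cross terms coming from the observable part, then keeps this block $O(T)$ as well.
\end{itemize}
Hence $\text{Tr}(P_{T|T})=O(T)$, so $\frac{1}{T}\log\text{Tr}(P_{T|T})\to0$.

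\textbf{Step 4 (conclude).} Combining the previous steps gives $f_{r,T_i}\le g_{T_i}+o(1)$. Since $g_{T_i}\to f_r$, taking $\limsup$ yields $\limsup_{i\to\infty} f_{r,T_i}\le f_r$.

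If the window argument in Step 3 turns out to be delicate, my fallback is to note that the optimal sequence may be modified so that it stays inside a bounded set such as $\mathcal{D}_0$ without increasing the cost. That would make the boundary terms bounded outright.
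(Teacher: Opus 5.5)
Your proposal is correct and follows the paper's proof. It uses the same re-indexing of $f_{r,T}$ into the averaged objective of \eqref{eqn:TI-CS-limit} plus boundary terms of order $1/T$, the same bound $-\log|V+C(\cdot)C^{\mathrm{T}}|\le-\log|V|$, and the same reduction to $\limsup_{i}\frac{1}{2T_i}\log|AP_{T_i|T_i}A^{\mathrm{T}}+N|\le 0$.

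The only difference is in that last step. The paper settles it by citing \cite[Lemma 3]{limit-existence}. You instead prove the needed $O(T)$ growth of $P_{T|T}$ directly from detectability of $(A,\mathrm{W})$ and the averaged trace constraint. This works with two clarifications. First, the observable-block window argument must run backward in time, via $A^{-j}\big(P_{T|T}-\sum_{l=0}^{j-1}A^{l}N(A^{l})^{\mathrm{T}}\big)(A^{-j})^{\mathrm{T}}\preceq P_{T-j|T-j}$; forward iteration would not bound $P_{T|T}$ by the $\mathrm{W}$-traces, and the backward form relies on $A$ being invertible. Second, the cross terms in the unobservable block should be absorbed by a Young-type splitting, so that a slightly inflated copy of the stable block still contracts.
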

\begin{proof}
    See Section \ref{proof:limsup}.
\end{proof}

Consider the following optimization problem
\begin{align}
   \inf_{P} &\frac{1}{2}(\log|\Tilde{P}|+\log|V| -\log|P| \nonumber \\
    &\qquad -\log|V+C\Tilde{P}C^{\mathrm{T}}|)\label{eqn:TI-CS-nolimit}\\
    \text{s.t.}&\qquad  \text{Tr}(PW)\leq d \\
    &\qquad  
       \begin{bmatrix}
        \Tilde{P} -P & \Tilde{P}C^{\mathrm{T}}\\
        C\Tilde{P} & C\Tilde{P}C^{\mathrm{T}} + V
    \end{bmatrix}\succeq 0 \\
    &\qquad \Tilde{P} = APA^{\mathrm{T}} + LL^{\mathrm{T}}
\end{align}
We have shown that there exists a feasible point in $\mathcal{D}_{0}$ such that the objective \eqref{eqn:TI-CS-nolimit} is no greater than the objective \eqref{eqn:TI-CS-limit}. Next we show that for any feasible $P\in \mathcal{D}_{0}$ such that the objective \eqref{eqn:TI-CS-nolimit} is obtained, there exists $\{P_{t|t}\}_{t\in[\mathbb{N}]}$ such that $P_{t|t}$ converges to $P$ and for this set of $\{P_{t|t}\}_{t\in[\mathbb{N}]}$, the objective \eqref{eqn:TI-CS-limit} is no greater than \eqref{eqn:TI-CS-nolimit}.

Set $\Tilde{P} = APA^{\mathrm{T}}+N$ and $P^{-1}-\Tilde{P}^{-1} = F^{\mathrm{T}}H^{-1}F$.
\begin{lemma} \label{lemma:Ricatti-Convergence}
If for all $\lambda $ that is an eigenvalue of $A$, we have $|\lambda| \neq 1$, then constructed $(A,F)$ is detectable.
\end{lemma}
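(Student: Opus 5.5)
We will use the PBH (Popov–Belevitch–Hautus) characterization of detectability. The pair $(A,F)$ is detectable if and only if no eigenvector $v$ of $A$ with eigenvalue $|\lambda|\geq 1$ satisfies $Fv=0$. We argue by contradiction: suppose $v\neq 0$ (possibly complex), $Av=\lambda v$, $|\lambda|\geq 1$, and $Fv=0$. By hypothesis $A$ has no eigenvalue on the unit circle, so in fact $|\lambda|>1$. Since $A$ is full rank, $\lambda\neq 0$ and $A^{-1}v=\lambda^{-1}v$.

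The key step is a quadratic-form comparison using the construction $P^{-1}-\Tilde{P}^{-1}=F^{\mathrm{T}}H^{-1}F$. In the side-information version, the extra term $C^{\mathrm{T}}V^{-1}C$ from Theorem~\ref{thm:TI-SDP} would be treated the same way: it is positive semidefinite, and we would absorb it into $F$, or note that it only helps the inequality below. From $Fv=0$ we get
\begin{align*}
v^{*}P^{-1}v = v^{*}\Tilde{P}^{-1}v .
\end{align*}
Next, $\Tilde{P}=APA^{\mathrm{T}}+LL^{\mathrm{T}}\succeq APA^{\mathrm{T}}\succ 0$, because $P\succ 0$ is one of the constraints and $A$ is invertible. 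Operator monotonicity of the inverse then gives $\Tilde{P}^{-1}\preceq A^{-\mathrm{T}}P^{-1}A^{-1}$. Hence
\begin{align*}
v^{*}P^{-1}v \leq v^{*}A^{-\mathrm{T}}P^{-1}A^{-1}v = |\lambda|^{-2}\, v^{*}P^{-1}v ,
\end{align*}
where we use that $A$ is real, so $v^{*}A^{-\mathrm{T}}=(A^{-1}v)^{*}=\bar{\lambda}^{-1}v^{*}$.

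Since $P^{-1}\succ 0$, we have $v^{*}P^{-1}v>0$. Dividing through yields $|\lambda|\leq 1$, which contradicts $|\lambda|>1$. Therefore no unstable eigenvector of $A$ lies in the kernel of $F$, and $(A,F)$ is detectable.

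The main obstacle is bookkeeping rather than the idea. We must make sure $P\succ 0$ (not merely $P\succeq 0$), so that the inverses and the strict positivity of $v^{*}P^{-1}v$ are justified. We must also handle the possibly rank-deficient $H$: the factorization $F^{\mathrm{T}}H^{-1}F$ is taken on the range of $P^{-1}-\Tilde{P}^{-1}$, so $Fv=0$ is equivalent to $v^{*}(P^{-1}-\Tilde{P}^{-1})v=0$ by positive semidefiniteness. Finally, the excluded unit-circle case is exactly where the inequality above degenerates, and this is why the hypothesis $|\lambda|\neq 1$ is needed.
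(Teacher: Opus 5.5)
Your argument is correct and is essentially the paper's proof: assume an unstable eigenvector $v\in\ker F$, use $\Tilde{P}\succeq APA^{\mathrm{T}}$ to get $v^{*}P^{-1}v\leq v^{*}A^{-\mathrm{T}}P^{-1}A^{-1}v=|\lambda|^{-2}v^{*}P^{-1}v$, and contradict $|\lambda|>1$; writing $v^{*}$ rather than the paper's $v^{\mathrm{T}}$ is also the right way to handle complex eigenvectors. One correction to your aside: if $F$ were defined by $P^{-1}-\Tilde{P}^{-1}-C^{\mathrm{T}}V^{-1}C=F^{\mathrm{T}}H^{-1}F$, then $Fv=0$ gives only $v^{*}P^{-1}v\geq v^{*}\Tilde{P}^{-1}v$, which is the wrong direction for the contradiction, so the term does not ``only help'' and you must absorb $C$ into $F$ as the paper's construction $P^{-1}-\Tilde{P}^{-1}=F^{\mathrm{T}}H^{-1}F$ in the proof of Theorem~\ref{thm:TI-SDP} does, which yields detectability of the stacked pair rather than of the encoder's $F$ alone.
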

\begin{proof}
    See Section \ref{proof:Ricatti-convergence}.
\end{proof}

Since $(A,F)$ is a detectable pair. By \cite[Theorem 14.5.3]{linearestimation}, there is a sequence of $\{P_{t|t},P_{t|t-1}\}$ that satisfies the following recursion:
\begin{align*}
    P_{t|t-1} &= AP_{t-1|t-1}A^{\mathrm{T}} + N \\
    P_{t|t} &= P_{t|t-1}-P_{t|t-1}F^{\mathrm{T}}(FP_{t|t-1}F^{\mathrm{T}}+H)^{-1}FP_{t|t-1}
\end{align*}
with given initial $P_{1|0}$ and converges to $(P,\Tilde{P})$. For the objective \eqref{eqn:TI-CS-limit}, we have 
\begin{align*}
     &\lim_{t\rightarrow \infty}I(m_{t};a_{t}|a_{[t-1]},y_{t})\\
    =&\lim_{t\rightarrow \infty}\frac{1}{2}\log|P_{t|t-1}^{+}| - \frac{1}{2}\log|P_{t|t}|\\
    =&\lim_{t\rightarrow \infty}  \frac{1}{2}(\log| P_{t|t-1}|+\log|V|\nonumber \\
    &\qquad -\log|V+C P_{t|t-1}C^{\mathrm{T}}|-\log|P_{t|t}|)\\
    =& \frac{1}{2}(\log|\Tilde{P}|+\log|V|-\log|V+C\Tilde{P}C^{\mathrm{T}}|-\log|P|)\nonumber \\
\end{align*}
And $\lim_{t\rightarrow \infty}\text{Tr}(P_{t|t}\mathrm{W}) =\text{Tr}(P\mathrm{W})  \leq d$. Therefore, by the Ces\`{a}ro mean, we have that 
\begin{align*}
    \lim_{T\rightarrow \infty}\frac{1}{T}\sum_{t=1}^{T}I(m_{t};a_{t}|a_{[t-1]},y_{t}) &= \text{value of \eqref{eqn:TI-CS-nolimit}}\\
    \lim_{T\rightarrow \infty}\frac{1}{T}\sum_{t=1}^{T}\text{Tr}(P_{t|t}\mathrm{W}) \leq d
\end{align*}
This implies that the objective \eqref{eqn:TI-CS-limit} is upper bounded by the objective \eqref{eqn:TI-CS-nolimit}.

\subsection{Proof of Lemmas}\label{section:proof-lemmas}
In this subsection, we present several useful lemmas.


\begin{lemma}\label{lemma:tightnesslemma}
    Consider the $(y^{[k]}_{[T]},a^{[k]}_{[T]},u_{[T]})$ as in Figure \ref{fig:LQG model} and let $S\subseteq [k]$, we have the following inequality:
    \begin{align*}
        I(y^{S}_{[T]}\rightarrow u_{[T]} \Vert a^{\bar{S}}_{[T]}) \geq I(y^{S}_{[T]}\rightarrow u^{T} \Vert y^{\bar{S}}_{[T]})
    \end{align*}
\end{lemma}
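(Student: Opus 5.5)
The plan is to prove the inequality term by term in $t$ where possible, and to fall back on a telescoping argument in the style of the proof of Theorem~\ref{thm:dilb} where it is not. Throughout I will use the structural Markov chains of Figure~\ref{fig:LQG model}. Each encoder $i$ generates $a^{i}_{t}$ from $(y^{i}_{[t]},a^{i}_{[t-1]})$ with private randomness. The decoder generates $u_{t}$ from $(u_{[t-1]},a^{[k]}_{[t]})$. The observations $y^{[k]}_{t}$ depend on the past only through $(x_{t-1},u_{t-1})$.

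The first step is an expansion of the joint term $I(y^{S}_{[t]},y^{\bar S}_{[t]};u_{t}\mid u_{[t-1]},a^{\bar S}_{[t]})$ via the chain rule in two orders.
\begin{itemize}
\item Taking $y^{S}_{[t]}$ first gives $I(y^{S}_{[t]};u_{t}\mid u_{[t-1]},a^{\bar S}_{[t]})+I(y^{\bar S}_{[t]};u_{t}\mid u_{[t-1]},a^{\bar S}_{[t]},y^{S}_{[t]})$.
\item Taking $y^{\bar S}_{[t]}$ first gives $I(y^{\bar S}_{[t]};u_{t}\mid u_{[t-1]},a^{\bar S}_{[t]})+I(y^{S}_{[t]};u_{t}\mid u_{[t-1]},a^{\bar S}_{[t]},y^{\bar S}_{[t]})$.
\end{itemize}
I will argue that the second term in the first expansion vanishes. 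The point is that $u_{t}$ depends on $y^{\bar S}$ only through $a^{\bar S}_{[t]}$, and $a^{S}_{[t]}$ depends on the world only through $y^{S}_{[t]}$ and private noise, which gives the chain $u_{t}-(u_{[t-1]},a^{\bar S}_{[t]},y^{S}_{[t]})-y^{\bar S}_{[t]}$. Verifying this chain carefully is delicate, because $u_{[t-1]}$ couples $a^{S}$ with the plant state, and hence with $y^{\bar S}$. I would check it by writing the joint density as the product of the plant kernel, the encoder kernels and the decoder kernel, and integrating out $a^{S}_{[t]}$. Once the chain holds, we obtain $I(y^{S}_{[t]};u_{t}\mid u_{[t-1]},a^{\bar S}_{[t]})\ge I(y^{S}_{[t]};u_{t}\mid u_{[t-1]},a^{\bar S}_{[t]},y^{\bar S}_{[t]})$.

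The second step compares the conditioning set $(u_{[t-1]},a^{\bar S}_{[t]},y^{\bar S}_{[t]})$ with $(u_{[t-1]},y^{\bar S}_{[t]})$. Expanding $I(y^{S}_{[t]};u_{t},a^{\bar S}_{[t]}\mid u_{[t-1]},y^{\bar S}_{[t]})$ in two orders gives the per-step quantity
\begin{align*}
\psi_{t}\triangleq{}&I(y^{S}_{[t]};u_{t}\mid u_{[t-1]},a^{\bar S}_{[t]},y^{\bar S}_{[t]})-I(y^{S}_{[t]};u_{t}\mid u_{[t-1]},y^{\bar S}_{[t]})\\
={}&I(y^{S}_{[t]};a^{\bar S}_{[t]}\mid u_{[t]},y^{\bar S}_{[t]})-I(y^{S}_{[t]};a^{\bar S}_{[t]}\mid u_{[t-1]},y^{\bar S}_{[t]}).
\end{align*}
This has exactly the shape of $\phi_{t}$ in the proof of Theorem~\ref{thm:dilb}, with the roles of $S$ and $\bar S$ exchanged for the codewords. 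I will therefore mimic that argument. First I will use that $a^{\bar S}_{t}$ is conditionally independent of $y^{S}_{[t]}$ given $(y^{\bar S}_{[t]},a^{\bar S}_{[t-1]},u_{[t-1]})$, which lets me reduce the subtracted term to $I(y^{S}_{[t]};a^{\bar S}_{[t-1]}\mid u_{[t-1]},y^{\bar S}_{[t]})$. Next I will peel off $y^{S}_{t}$ and $y^{\bar S}_{t}$ using the plant Markov property $y_{t}-(u_{[t-1]},y_{[t-1]})-a_{[t-1]}$, together with the inequality already established in that proof, namely that conditioning additionally on $y^{\bar S}_{t}$ can only decrease the mutual information with $a_{[t-1]}$. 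The aim is to bound the subtracted term by $I(y^{S}_{[t-1]};a^{\bar S}_{[t-1]}\mid u_{[t-1]},y^{\bar S}_{[t-1]})$, so that $\sum_{t}\psi_{t}$ telescopes to $I(y^{S}_{[T]};a^{\bar S}_{[T]}\mid u_{[T]},y^{\bar S}_{[T]})\ge 0$.

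Summing the two steps over $t=1,\dots,T$ then gives the statement. I expect the main obstacle to be the Markov chains whose conditioning sets contain $u_{[t-1]}$. That variable is a common child of both encoder groups, so conditioning on it can create dependence between $y^{S}$-side and $y^{\bar S}$-side quantities. This is why I do not expect the inequality to hold term by term, only after summation. If the chain in the first step fails, I would try first to merge both steps into a single telescoping quantity: compare $I(y^{S}_{[t]};u_{t},a^{\bar S}_{[t]}\mid u_{[t-1]},a^{\bar S}_{[t-1]})$ directly against the $y^{\bar S}$-conditioned term, and absorb all the coupling terms into a difference of the form $D_{t}-D_{t-1}$ with $D_{T}\ge 0$.
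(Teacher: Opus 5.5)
Your proof is correct, and in its second step it is more careful than the paper's.

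\textbf{First step.} This coincides with the paper's step (b). Both rest on the chain $u_t-(u_{[t-1]},y^{S}_{[t]},a^{\bar S}_{[t]})-y^{\bar S}_{[t]}$, and this chain does hold, so your fallback is not needed. After integrating out $x_{[t]}$, the joint law of $(y^{[k]}_{[t]},a^{[k]}_{[t]},u_{[t-1]})$ factors into three pieces: the $S$-encoder kernels times the decoder kernels, a plant/sensor term in $(y^{[k]}_{[t]},u_{[t-1]})$ only, and the $\bar S$-encoder kernels. Hence the conditional law of $a^{S}_{[t]}$ given $(u_{[t-1]},y^{S}_{[t]},a^{\bar S}_{[t]},y^{\bar S}_{[t]})$ does not involve $y^{\bar S}_{[t]}$.

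\textbf{Second step.} The routes diverge at your $\psi_t$. The paper disposes of it term by term. Its step (a) asserts $I(y^{S}_{[t]};u_t|u_{[t-1]},y^{\bar S}_{[t]})=I(y^{S}_{[t]};u_t|u_{[t-1]},y^{\bar S}_{[t]},a^{\bar S}_{[t]})$, citing the chains $y^{S}_{[t]}-(u_{[t-1]},y^{\bar S}_{[t]})-a^{\bar S}_{[t]}$ and $y^{S}_{[t]}-(u_{[t]},y^{\bar S}_{[t]})-a^{\bar S}_{[t]}$. In effect it sets both terms of your identity for $\psi_t$ to zero. Your concern about conditioning on $u$ is exactly why this is wrong: $u$ is a common child of $a^{S}$ and $a^{\bar S}$.

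\textbf{Counterexample to the paper's step (a).} Take $T=1$, $k=2$, $S=\{1\}$, independent $y^1_1,y^2_1$ (with $\sigma_y^2$ the variance of $y^1_1$). Set $a^1_1=y^1_1$, $a^2_1=y^2_1+z$, and $u_1=a^1_1+a^2_1+e$, with $z,e$ independent Gaussian. Then $I(y^1_1;u_1|y^2_1)=\frac{1}{2}\log(1+\sigma_y^2/(\sigma_z^2+\sigma_e^2))$, while $I(y^1_1;u_1|y^2_1,a^2_1)=\frac{1}{2}\log(1+\sigma_y^2/\sigma_e^2)$.

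Now extend it with $A_1=0$, constant codewords at $t=2$, and $u_2=a^1_1+e_2$. The $t=2$ terms of the lemma become $I(y^1_1;y^1_1+e_2|y^1_1+e_1)$ on the left and $I(y^1_1;y^1_1+e_2|y^1_1+z+e_1)$ on the right, and the right is larger. So both the per-step inequality and $\psi_2\ge 0$ fail, and only a summed argument can work.

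\textbf{Your telescoping.} It supplies exactly that summed argument.
\begin{itemize}
\item $I(y^{S}_{[t]};a^{\bar S}_t|u_{[t-1]},y^{\bar S}_{[t]},a^{\bar S}_{[t-1]})=0$, since $a^{\bar S}_t$ is drawn from its encoder kernel with fresh randomness.
\item Removing $y^{S}_t$, and comparing $y^{\bar S}_{[t]}$ with $y^{\bar S}_{[t-1]}$, only uses $y^{[k]}_t-(u_{[t-1]},y^{[k]}_{[t-1]})-a^{[k]}_{[t-1]}$. This covers $a^{\bar S}_{[t-1]}$ just as it covered $a^{S}_{[t-1]}$ in Theorem~\ref{thm:dilb}.
\item Together these give $\psi_t\ge D_t-D_{t-1}$ with $D_t=I(y^{S}_{[t]};a^{\bar S}_{[t]}|u_{[t]},y^{\bar S}_{[t]})$ and $D_0=0$, hence $\sum_t\psi_t\ge D_T\ge 0$.
\end{itemize}

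The paper's route is shorter but rests on Markov chains that do not hold. Yours establishes only the summed inequality, which is exactly what the lemma claims.
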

\begin{proof}
\begin{align*}
    &I(y^{S}_{[T]}\rightarrow u_{[T]} \Vert a^{\bar{S}}_{[T]})-I(y^{S}_{[T]}\rightarrow u^{T} \Vert y^{\bar{S}}_{[T]})\\
    =&\sum_{t=1}^{T}I(y^{S}_{[t]};u_{t}|u_{[t-1]},a^{\bar{S}}_{[t]}) - I(y^{S}_{[t]};u_{t}|u_{[t-1]},y^{\bar{S}}_{[t]})\\
    \stackrel{(a)}{=}&\sum_{t=1}^{T}I(y^{S}_{[t]};u_{t}|u_{[t-1]},a^{\bar{S}}_{[t]}) - I(y^{S}_{[t]};u_{t}|u_{[t-1]},y^{\bar{S}}_{[t]},a^{\bar{S}}_{[t]})\\
    =&\sum_{t=1}^{T}h(u_{t}|u_{[t-1]},a^{\bar{S}}_{[t]})-h(u_{t}|u_{[t-1]},a^{\bar{S}}_{[t]},y^{S}_{[t]})\\
    &-h(u_{t}|u_{[t-1]},y^{\bar{S}}_{[t]},a^{\bar{S}}_{[t]}) + h(u_{t}|u_{[t-1]},y^{\bar{S}}_{[t]},a^{\bar{S}}_{[t]},y^{S}_{[t]})\\
    =&\sum_{t=1}^{T}I(u_{t};y^{\bar{S}}_{[t]}|u_{[t-1]},a^{\bar{S}}_{[t]}) - I(u_{t};y^{\bar{S}}_{[t]}|u_{[t-1]},y^{S}_{[t]},a^{\bar{S}}_{[t]})\\
    \stackrel{(b)}{=}&\sum_{t=1}^{T}I(u_{t};y^{\bar{S}}_{[t]}|u_{[t-1]},a^{\bar{S}}_{[t]})\\
    \geq & 0
\end{align*}
$(a)$ is due to that $y^{S}_{[t]}-(u_{[t-1]},y^{\bar{S}}_{[t]})-a^{\bar{S}}_{[t]}$ and $y^{S}_{[t]}-(u_{[t]},y^{\bar{S}}_{[t]})-a^{\bar{S}}_{[t]}$ are Markov; $(b)$ is due to $u_{t}-(u_{[t-1]},y^{S}_{[t]},a^{\bar{S}}_{[t]})-y^{\bar{S}}_{[t]}$ is Markov.
The equality does not hold unless $a^{\bar{S}}_{t} = y^{\bar{S}}_{t}$.
\end{proof}

\begin{lemma}\label{lemma:sameconditionaldistribution}
For any distribution $\mathbb{P}(y^{[k]}_{[T+1]},u_{[T]})$ generated by a policy in $\Gamma^{*}$ and its Gaussian version $\mathbb{G}(y^{[k]}_{[T+1]},u_{[T]})$, if the full observation condition \ref{definition:fullobservation} is satisfied, we have that 
  $S-(y^{[k]}_{t},u_{t})- (y^{[k]}_{t+1})$ is Markov in $\mathbb{G}$ for any $S\subseteq (u_{[t-1]},y^{[k]}_{[t-1]})$. Moreover, we have
        \begin{align*}
            \mathbb{G}(y^{[k]}_{t+1}|y^{[k-1]}_{[t]},y^{k}_{t},u_{[t]}) &= \mathbb{G}(y^{[k]}_{t+1}|y^{[k]}_{t},u_{t})\\
            &= \mathbb{P}(y^{[k]}_{t+1}|y^{[k]}_{t},u_{t})\\
            &= \mathbb{P}(y^{[k]}_{t+1}|y^{[k-1]}_{[t]},y^{k}_{t},u_{[t]})
        \end{align*}
\end{lemma}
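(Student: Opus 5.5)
The proof rests on two facts. The first is that, under the full observation condition, the plant-plus-sensor transition is a linear Gaussian kernel. The second is that $\mathbb{G}$ shares all second moments with $\mathbb{P}$.

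\textbf{Step 1: the transition kernel.} By Definition~\ref{definition:fullobservation}, $x_t = C_t y_t$ with $y_t = (y^1_t,\dots,y^k_t)$. Hence
\begin{align*}
y_{t+1} = \bar C_{t+1}\bigl(A_t C_t y_t + B_t u_t + w_t\bigr) + v_{t+1},
\end{align*}
where $\bar C_{t+1}$ stacks the $C^i_{t+1}$ and $v_{t+1}$ stacks the $v^i_{t+1}$. The noise pair $(w_t,v_{t+1})$ is Gaussian, zero mean, and independent of everything generated up to time $t$. This independence holds because encoders and decoder act only on past observations and internal randomness, and $w_t, v_{t+1}$ are fresh noises.

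Consequently, under $\mathbb{P}$, the conditional law of $y_{t+1}$ given the whole past $(y^{[k]}_{[t]},u_{[t]})$ equals the conditional law given $(y_t,u_t)$ alone. This law is $\mathcal{N}(K_t[y_t;u_t],\Sigma_t)$ for fixed matrices $K_t,\Sigma_t$. This gives the last two equalities, since $y^{[k-1]}_{[t]}, y^k_t, u_{[t]}$ contains $(y_t,u_t)$. Without full observation, $x_t$ would not be a function of $y_t$, and the kernel would depend on the filtered state through the whole past.

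\textbf{Step 2: transfer to $\mathbb{G}$.} Under $\mathbb{P}$ we can write $y_{t+1} = K_t[y_t;u_t] + e_t$, where $e_t$ has covariance $\Sigma_t$ and is uncorrelated with (in fact independent of) every variable in $(y^{[k]}_{[t]},u_{[t]})$. These are second-moment identities: $\mathrm{Cov}(e_t)=\Sigma_t$, and the cross-covariance of $e_t$ with each past variable is zero.

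Since $\mathbb{G}$ has the same joint covariance as $\mathbb{P}$ on $(y^{[k]}_{[T+1]},u_{[T]})$, and $\mathbb{P}$ is zero-mean as argued before Lemma~\ref{lemma:lowerboundGaussian1}, the same identities hold under $\mathbb{G}$. Under $\mathbb{G}$, however, $e_t := y_{t+1} - K_t[y_t;u_t]$ is jointly Gaussian with the past. Being uncorrelated with the past, it is therefore independent of it, and it is distributed as $\mathcal{N}(0,\Sigma_t)$.

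It follows that $\mathbb{G}(y_{t+1}\mid \text{past}) = \mathcal{N}(K_t[y_t;u_t],\Sigma_t)$, depending only on $(y_t,u_t)$. This yields the Markov chain $S - (y^{[k]}_t,u_t) - y^{[k]}_{t+1}$ for any $S \subseteq (u_{[t-1]},y^{[k]}_{[t-1]})$. It also yields
\begin{align*}
\mathbb{G}(y_{t+1}\mid y^{[k-1]}_{[t]},y^k_t,u_{[t]}) = \mathbb{G}(y_{t+1}\mid y_t,u_t) = \mathbb{P}(y_{t+1}\mid y_t,u_t).
\end{align*}
Conditioning on the smaller set $(y^{[k-1]}_{[t]},y^k_t,u_{[t]})$ is legitimate because the kernel only involves $(y_t,u_t)$, which lies inside it; the tower property then gives the result.

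\textbf{Main obstacle.} The delicate point is Step 1's independence claim, namely that $(w_t,v_{t+1})$ is independent of $(y^{[k]}_{[t]},u_{[t]})$ and of any internal randomness. This requires checking the causal structure of $\Gamma^*$, including correlated noises $v^i$ across sensors at the same time step. It also needs care with singular $V^i_t$: conditional Gaussians must then be understood as possibly degenerate, and the "best linear predictor" characterization should be used instead of density formulas. I would handle the singular case by using the orthogonal-projection definition of $K_t$, for instance via a pseudoinverse, throughout.
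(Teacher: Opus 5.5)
Your proof is correct and follows essentially the same route as the paper. Full observation turns the sensor transition into $y_{t+1}=\bar C_{t+1}(A_tC_ty_t+B_tu_t+w_t)+v_{t+1}$ with fresh noise, which gives the $\mathbb{P}$-equalities, and the paper then transfers this to $\mathbb{G}$ by simply citing Lemma 6 of \cite{LQGSDP}. You instead prove the transfer directly: the residual $y_{t+1}-K_t[y_t;u_t]$ is uncorrelated with the past under $\mathbb{P}$, hence under $\mathbb{G}$, hence independent by joint Gaussianity. This is slightly cleaner, since $x_t,w_t,v_{t+1}$ are not even variables of $\mathbb{G}$.
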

\begin{proof}
    Note that we have 
    \begin{equation}
        \begin{bmatrix}
            y^{1}_{t+1} \\
            \vdots \\
            y^{k}_{t+1}
        \end{bmatrix} = 
        \begin{bmatrix}
            C^{1}_{t+1}\\
            \vdots\\
            C^{k}_{t+1}
        \end{bmatrix} x_{t+1} +
        \begin{bmatrix}
            v^{1}_{t+1}\\
            \vdots\\
            v^{k}_{t+1}
        \end{bmatrix}
    \end{equation}
    and the full observation condition \ref{definition:fullobservation} is satisfied. There exists a matrix $C_{t}$ such that $C_{t}y_{t} = x_{t}$.
Also note by the plant we have $x_{t+1} = A_{t}x_{t}+B_{t}u_{t}+w_{t}$ for $\mathbb{P}$ and this linear equation also holds for $\mathbb{G}$ by \cite[Lemma 6]{LQGSDP}. Then we have
\begin{align*}
    \begin{bmatrix}
            y^{1}_{t+1} \\
            \vdots \\
            y^{2}_{t+1}
        \end{bmatrix} &=         \begin{bmatrix}
            v^{1}_{t+1}\\
            \vdots \\
            v^{2}_{t+1}
        \end{bmatrix} \\
        &+  \begin{bmatrix}
            C^{1}_{t+1}\\
            \vdots \\
            C^{2}_{t+1}
        \end{bmatrix} \left(A_{t}C_{t}y_{t}+B_{t}u_{t}+w_{t}\right) 
\end{align*}

Given $(y^{[k]}_{t},u_{t})$, since $(v^{[k]}_{t+1},w_{t})$ are independent, we have
\begin{align*}
    \mathbb{G}(y^{[k]}_{t+1}|y^{[k]}_{t},u_{t})
            = \mathbb{P}(y^{[k]}_{t+1}|y^{[k]}_{t},u_{t})
\end{align*}
Also, since $(v^{[k]}_{t+1},w_{t})$ white Gaussian noises, they are independent of the past, i.e.$(y^{[k-1]}_{[t-1]},u_{[t-1]})$, we have
\begin{align*}
    \mathbb{G}(y^{[k]}_{t+1}|y^{[k]}_{t},u_{t}) &= \mathbb{G}(y^{[k]}_{t+1}|y^{[k-1]}_{[t]},y^{k}_{t},u_{[t]})\\
    \mathbb{P}(y^{[k]}_{t+1}|y^{[k]}_{t},u_{t})
            &= \mathbb{P}(y^{[k]}_{t+1}|y^{[k-1]}_{[t]},y^{k}_{t},u_{[t]})
\end{align*}
This completes our proof.
\end{proof}
\subsection{Proof of Proposition \ref{prop:r-d-equivalent}}\label{proof:proposition-rd-equivalence}
\begin{proof}
    Following the proof for Theorem \ref{thm:dilb}, we have that 
    \begin{align*}\sum_{t=1}^{T}I(x_{[t]};a_{t}|a_{[t-1]},u_{[t-1]},y_{[t]})&\geq \sum_{t=1}^{T}I(x_{[t]};u_{t}|u_{[t-1]},y_{[t]})\\
    &= I(x_{[T]}\rightarrow u_{[T]}\Vert y_{[T]})
    \end{align*}
    Since $u_{[t-1]}-(a_{[t-1]},y_{[t]})-(x_{[t]},a_{t})$ is Markov, we have
    \begin{align*}
      &I(x_{[t]};a_{t}|a_{[t-1]},y_{[t]})  -I(x_{[t]};a_{t}|a_{[t-1]},u_{[t-1]},y_{[t]})\\
      =& I(x_{[t]};u_{[t-1]}|a_{[t-1]},y_{[t]}) - I(x_{[t]};u_{[t-1]}|a_{[t]},y_{[t]})\\
      =& 0.
    \end{align*}
    Hence we have $I(x_{[T]}\rightarrow a_{[T]}\Vert y_{[T]})\geq I(x_{[T]}\rightarrow u_{[T]}\Vert y_{[T]})$.

    Following the proof for Theorem \ref{thm:gaussianmarkov}, we have that
    \begin{align*}
        &\min_{\gamma \in \Gamma_{u}, J\leq d}  
     I(x_{[T]}\rightarrow u_{[T]} \Vert y_{[T]})\\
     \geq & \min_{\gamma \in \Gamma_{1}, J\leq d}  
     \sum_{t=1}^{T}I(x_{t};a_{t} | y_{[t]},a_{[t-1]})
    \end{align*}
    Since $a_{t} \in \Gamma_{1}$, we have that $I(x_{[t-1]};a_{t}|y_{[t]},a_{[t-1]},x_{t}) = 0$.
    So we have
    \begin{align*}
        &\min_{\gamma \in \Gamma_{1}, J\leq d}  
     \sum_{t=1}^{T}I(x_{t};a_{t} | y_{[t]},a_{[t-1]})\\
     = &\min_{\gamma \in \Gamma_{1}, J\leq d}  
     \sum_{t=1}^{T}I(x_{t};a_{t} | y_{[t]},a_{[t-1]}) + I(x_{[t-1]};a_{t}|y_{[t]},a_{[t-1]},x_{t})\\
     =&\min_{\gamma \in \Gamma_{1}, J\leq d} I(x_{[T]}\rightarrow a_{[T]}\Vert y_{[T]})
    \end{align*}
\end{proof}

\subsection{Proof of Lemma \ref{lemma:lowerboundGaussian1}}\label{proof:entropylowerbound1}
We aim to show 
\begin{align*}
     &\sum_{t=1}^{T}I_{\mathbb{P}}(y^{[i,k-1]}_{[t]},y^{k}_{t};u_{t}|u_{[t-1]},y^{[i-1]}_{[t]}) \\
     \geq &\sum_{t=1}^{T}I_{\mathbb{G}}(y^{[i,k-1]}_{[t]},y^{k}_{t};u_{t}|u_{[t-1]},y^{[i-1]}_{[t]})
    \end{align*}
Denote $\mathrm{d}\mathbb{P} = \mathrm{d}\mathbb{P}(y^{[k]}_{[T+1]},u_{[T]})$ and $\mathrm{d}\mathbb{G} =\mathrm{d}\mathbb{G}(y^{[k]}_{[T+1]},u_{[T]}) $. Firstly, we have
\begin{align*}
            &\sum_{t=1}^{T} I_{\mathbb{P}}(y^{[i,k-1]}_{[t]},y^{k}_{t};u_{t}|u_{[t-1]},y^{[i-1]}_{[t]})\\
            = &\sum_{t=1}^{T} \int\log(\frac{\mathrm{d}\mathbb{P}(y^{[i,k-1]}_{[t]},y^{k}_{t}|u_{[t]},y^{[i-1]}_{[t]})}{\mathrm{d}\mathbb{P}(y^{[i,k-1]}_{[t]},y^{k}_{t}|u_{[t-1]},y^{[i-1]}_{[t]})})\mathrm{d}\mathbb{P}\\
            = &\sum_{t=1}^{T} \int\log( \frac{\mathrm{d}\mathbb{P}(y^{[i,k-1]}_{[t]},y^{k}_{t}|u_{[t]},y^{[i-1]}_{[t]})}{\mathrm{d}\mathbb{P}(y^{[i,k-1]}_{[t]},y^{k}_{t}|u_{[t-1]},y^{[i-1]}_{[t]}}\\
            &\cdot \frac{\mathrm{d}\mathbb{P}(y^{[k]}_{t+1}|u_{[t]},y^{[k-1]}_{[t]},y^{k}_{t})}{\mathrm{d}\mathbb{P}(y^{[k]}_{t+1}|u_{[t]},y^{[k-1]}_{[t]},y^{k}_{t})})\mathrm{d}\mathbb{P}\\
            =&\sum_{t=1}^{T}\int\log(\frac{\mathrm{d}\mathbb{P}(y^{[i,k-1]}_{[t+1]},y^{k}_{t+1}|u_{[t]},y^{[i-1]}_{[t+1]})}{\mathrm{d}\mathbb{P}(y^{[i,k-1]}_{[t]},y^{k}_{t}|u_{[t-1]},y^{[i-1]}_{[t]})} \\
            &\cdot \frac{\mathrm{d}\mathbb{P}(y^{[i-1]}_{t+1}|u_{[t]},y^{[i-1]}_{[t]})\cdot \mathrm{d}\mathbb{P}(y^{k}_{t}|u_{[t]},y^{[k-1]}_{[t+1]},y^{k}_{t+1})}{\mathrm{d}\mathbb{P}(y^{[k]}_{t+1}|u_{[t]},y^{[k-1]}_{[t]},y^{k}_{t})})\mathrm{d}\mathbb{P}\\
            = &\sum_{t=1}^{T}\int\log(\frac{\mathrm{d}\mathbb{P}(y^{[i-1]}_{t+1}|u_{[t]},y^{[i-1]}_{[t]})\mathrm{d}\mathbb{P}(y^{k}_{t}|u_{[t]},y^{[k-1]}_{[t+1]},y^{k}_{t+1})}{\mathrm{d}\mathbb{P}(y^{[k]}_{t+1}|u_{[t]},y^{[k-1]}_{[t]},y^{k}_{t})})\mathrm{d}\mathbb{P}\\
            &+ \int\log(\frac{\mathrm{d}\mathbb{P}(y^{[i,k-1]]}_{T+1},y^{k}_{T+1}|u_{[T]},y^{[i-1]}_{[T+1]})}{\mathrm{d}\mathbb{P}(y^{[i,k-1]}_{1},y^{k}_{1}|y^{[i-1]}_{1})}\mathrm{d}\mathbb{P}
        \end{align*}
The second equality is by lemma \ref{lemma:sameconditionaldistribution} that $\mathrm{d}\mathbb{P}(y^{[k]}_{t+1}|u_{[t]},y^{[k-1]}_{[t]},y^{k}_{t})$ is a non-degenarate Gaussian distribution, the third is by Bayes rule and rewriting the terms, and the last equality is by cancelling repeated terms.

Here, we show the proof of the first term in the last equality. The proof for the second term is similar. Consider the following:

\begin{align*}
    &\sum_{t=1}^{T}\int\log(\frac{\mathrm{d}\mathbb{P}(y^{[i-1]}_{t+1}|u_{[t]},y^{[i-1]}_{[t]})\mathrm{d}\mathbb{P}(y^{k}_{t}|u_{[t]},y^{[k-1]}_{[t+1]},y^{k}_{t+1})}{\mathrm{d}\mathbb{P}(y^{[k]}_{t+1}|u_{[t]},y^{[k-1]}_{[t]},y^{k}_{t})})\mathrm{d}\mathbb{P}\\
    -&\sum_{t=1}^{T}\int\log(\frac{\mathrm{d}\mathbb{G}(y^{[i-1]}_{t+1}|u_{[t]},y^{[i-1]}_{[t]})\mathrm{d}\mathbb{G}(y^{k}_{t}|u_{[t]},y^{[k-1]}_{[t+1]},y^{k}_{t+1})}{\mathrm{d}\mathbb{G}(y^{[k]}_{t+1}|u_{[t]},y^{[k-1]}_{[t]},y^{k}_{t})})\mathrm{d}\mathbb{G}\\
    = &\sum_{t=1}^{T}\int\log(\frac{\mathrm{d}\mathbb{G}(y^{[k]}_{t+1}|u_{[t]},y^{[k-1]}_{[t]},y^{k}_{t})}{\mathrm{d}\mathbb{P}(y^{[k]}_{t+1}|u_{[t]},y^{[k-1]}_{[t]},y^{k}_{t})}  \\
    &\cdot\frac{\mathrm{d}\mathbb{P}(y^{[i-1]}_{t+1}|u_{[t]},y^{[i-1]}_{[t]})\mathrm{d}\mathbb{P}(y^{k}_{t}|u_{[t]},y^{[k-1]}_{[t+1]},y^{k}_{t+1})}{\mathrm{d}\mathbb{G}(y^{[i-1]}_{t+1}|u_{[t]},y^{[i-1]}_{[t]})\mathrm{d}\mathbb{G}(y^{k}_{t}|u_{[t]},y^{[k-1]}_{[t+1]},y^{k}_{t+1})})\mathrm{d}\mathbb{P}\\
    = &\sum_{t=1}^{T}\int\log(\frac{\mathrm{d}\mathbb{P}(y^{[i-1]}_{t+1}|u_{[t]},y^{[i-1]}_{[t]})\mathrm{d}\mathbb{P}(y^{k}_{t}|u_{[t]},y^{[k-1]}_{[t+1]},y^{k}_{t+1})}{\mathrm{d}\mathbb{G}(y^{[i-1]}_{t+1}|u_{[t]},y^{[i-1]}_{[t]})\mathrm{d}\mathbb{G}(y^{k}_{t}|u_{[t]},y^{[k-1]}_{[t+1]},y^{k}_{t+1})})\mathrm{d}\mathbb{P}\\
    \geq &0
\end{align*}
The first equality is by $\mathbb{P}$ and $\mathbb{G}$ having the same covariance matrix, and $\mathbb{G}$ is a Gaussian measure. The second equality is by Lemma \ref{lemma:sameconditionaldistribution}. The last inequality is due to the non-negativity of the KL divergence. This completes our proof for Lemma \ref{lemma:lowerboundGaussian1}.

\subsection{Proof of Lemma \ref{lemma:distributionidentity}}\label{proof:distributionidentity}

We prove by induction. For the base case, $\mathbb{G}
(y^{[k]}_{1}) = \mathbb{Q}
(y^{[k]}_{1})$ by input Gaussian distribution. We assume that for time step $t-1$, the identity holds. Then we show for time step $t$:

\begin{align}
    &\mathbb{Q}(y^{[k-1]}_{[t+1]},y^{k}_{t+1},u_{[t]}) \\
    =& \int_{y^{k}_{t}}\mathrm{d}\mathbb{Q}(y^{[k-1]}_{[t+1]},y^{k}_{t+1},u_{[t]},y^{k}_{t})\\
    \stackrel{(a)}{=}& \int_{y^{k}_{t}}\mathrm{d}\mathbb{P}(y^{[k]}_{t+1}|y^{[k]}_{t},u_{t})
    \mathrm{d}\mathbb{Q}(y^{k}_{t},y^{[k-1]}_{[t]},u_{[t]})\\
    =& \int_{y^{k}_{t}}\mathrm{d}\mathbb{P}(y^{[k]}_{t+1}|y^{[k]}_{t},u_{t})
    \mathrm{d}\mathbb{Q}(u_{t}|y^{[k-1]}_{[t]},y^{k}_{t},u_{[t-1]})\nonumber\\
    &\mathrm{d}\mathbb{Q}(y^{[k-1]}_{[t]},y^{k}_{t},u_{[t-1]})\\
    \stackrel{(b)}{=}&\int_{y^{k}_{t}}\mathrm{d}\mathbb{P}(y^{[k]}_{t+1}|y^{[k]}_{t},u_{t})
    \mathrm{d}\mathbb{Q}(u_{t}|y^{[k-1]}_{[t]},y^{k}_{t},u_{[t-1]})\nonumber\\
    &\mathrm{d}\mathbb{G}(y^{[k-1]}_{[t]},y^{k}_{t},u_{[t-1]})\\
    \stackrel{(c)}{=}&\int_{y^{k}_{t}}\mathrm{d}\mathbb{P}(y^{[k]}_{t+1}|y^{[k]}_{t},u_{t})
    \mathrm{d}\mathbb{G}(y^{[k-1]}_{[t]},y^{k}_{t},u_{[t]})\\
    =&\int_{y^{k}_{t}}
    \mathrm{d}\mathbb{G}(y^{[k-1]}_{[t+1]},y^{k}_{t+1},u_{[t]},y^{k}_{t})\\
    =&\mathbb{G}(y^{[k-1]}_{[t+1]},y^{k}_{t+1},u_{[t]})
\end{align}

where (a) is by Lemma \ref{lemma:sameconditionaldistribution}, (b) applies the hypothesis assumption, and (c) applies equation \eqref{eqn:Qproperty}. This completes our proof for Lemma \ref{lemma:distributionidentity}.

\subsection{Proof of Lemma \ref{lemma:Ricatti-Convergence}} \label{proof:Ricatti-convergence}

\begin{proof}
Note that $P^{-1}-\Tilde{P}^{-1} = F^{\mathrm{T}}H^{-1}F$. Assume that $(A,F)$ is not detectable, then there exists a nonzero eigenvector $v$ of $A$ with eigenvalue $|\lambda|> 1$ such that $Fv = 0$. Hence we have
\begin{align*}
    v^{\mathrm{T}}P^{-1}v = v^{\mathrm{T}}(APA^{\mathrm{T}}+N)^{-1}v
\end{align*}
Since $APA^{\mathrm{T}}+N \succeq APA^{\mathrm{T}}$, we have
\begin{align*}
    v^{\mathrm{T}}(APA^{\mathrm{T}}+N)^{-1}v &\leq v^{\mathrm{T}}(APA^{\mathrm{T}})^{-1}v\\
    &=v^{\mathrm{T}}A^{-\mathrm{T}}P^{-1}A^{-1}v\\
    &= \frac{1}{|\lambda|^{2}}v^{\mathrm{T}}P^{-1}v\\
    &< v^{\mathrm{T}}P^{-1}v
\end{align*}
This is a contradiction. So $(A,F)$ must be detectable.
\end{proof}

\subsection{Proof of Lemma \ref{lemma:limsup}}\label{proof:limsup}
\begin{proof}
    Note that
    \begin{align*}
         &f_{r,T_{i}}\\
         =& \log|A| + \frac{1}{2}\log|V| + \frac{1}{T_{i}}\sum_{t=1}^{T_{i}} (\frac{1}{2}\log|I+L^{\mathrm{T}}A^{-\mathrm{T}}P_{t|t}^{-1}A^{-1}L|\\
    &-\frac{1}{2}\log|V+C(AP_{t|t}A^{\mathrm{T}}+LL^{\mathrm{T}})C^{\mathrm{T}}|)\\
    =& \frac{1}{2}\log|V| + \frac{1}{T_{i}}\sum_{t=1}^{T_{i}} (\frac{1}{2}\log|AP_{t|t}A^{\mathrm{T}}+LL^{\mathrm{T}}|-\frac{1}{2}\log|P_{t|t}|\\
    &-\frac{1}{2}\log|V+C(AP_{t|t}A^{\mathrm{T}}+N)C^{\mathrm{T}}|)\\
    =&\frac{1}{2}\log|V| + \frac{1}{T_{i}}\sum_{t=1}^{T_{i}} (\frac{1}{2}\log|AP_{t-1|t-1}A^{\mathrm{T}}+N|-\frac{1}{2}\log|P_{t|t}|\\
    &-\frac{1}{2}\log|V+C(AP_{t-1|t-1}A^{\mathrm{T}}+N)C^{\mathrm{T}}|)\\
    &+\frac{1}{2T_{i}}\log|AP_{T_{i}|T_{i}}A^{\mathrm{T}}+N|\\
    &-\frac{1}{2T_{i}}\log|V+C(AP_{T_{i}|T_{i}}A^{\mathrm{T}})C^{\mathrm{T}}|\\
    &-\frac{1}{2T_{i}}\log|AP_{0|0}A^{\mathrm{T}}+N|\\
    &+\frac{1}{2T_{i}}\log|V+C(AP_{0|0}A^{\mathrm{T}})C^{\mathrm{T}}|
    \end{align*}
    Since $-\log|V+C(AP_{T_{i}|T_{i}}A^{\mathrm{T}})C^{\mathrm{T}}|\leq -\log|V|$, taking $\limsup_{i\rightarrow \infty}$ on $\frac{1}{2T_{i}}\log|AP_{T_{i}|T_{i}}A^{\mathrm{T}}+N|-\frac{1}{2T_{i}}\log|V+C(AP_{T_{i}|T_{i}}A^{\mathrm{T}})C^{\mathrm{T}}|$ yields a non-positive value by \cite[Lemma 3]{limit-existence}. Hence, $\limsup_{i\rightarrow \infty} f_{r,T_{i}}\leq f_{r}$.
\end{proof}
\section{Conclusion}

We studied the tradeoff between the communication rate and the control cost in the multiple sensors/encoders and one decoder setting. We established a new directed information lower bound and proved that linear policies are optimal for optimizing the weighted sum rate lower bound. We applied the results to the causal lossy compression of Gaussian-Markov sources with linear side information and present SDP formulations to solve the causal rate distortion function with a singular noise covariance matrix. For future research direction, one direction is to try to show the optimality of the linear policy for the tighter directed information lower bound as stated in Remark \ref{remark:suboptimality}. 


\bibliographystyle{ieeetr}
\bibliography{Refs}

\end{document}